%% file: main.tex
\documentclass[a4paper, onecolumn, colorlinks]{preprint}

\usepackage{mymacros,mytikzconfig}
\usepackage{graphicx}

\usepackage{enumitem}
\usepackage{comment}
\usepackage{microtype}

\usepackage[printonlyused]{acronym}

\makeatletter
\newcommand*{\org@overidelabel}{}
\let\org@overridelabel\AC@verridelabel
\renewcommand*{\AC@verridelabel}[1]{%
  \@bsphack
  \protected@write\@auxout{}{\string\AC@undonewlabel{#1@cref}}%
  \org@overridelabel{#1}%
  \@esphack
}%
\makeatother

\usepackage{booktabs}
\usepackage{arydshln}
\usepackage{multirow}
\usepackage{subcaption}

\usepackage{amsthm}
\usepackage{mathdots}
\usepackage{mathtools}

\allowdisplaybreaks

\AddToHook{begindocument/end}{
\theoremstyle{plain}
\newtheorem{theorem}{Theorem}[section]
\newtheorem{lemma}{Lemma}[section]

\newtheorem{proposition}{Proposition}[section]
\theoremstyle{definition}
\newtheorem{definition}{Definition}[section]
\newtheorem{assumption}{Assumption}[section]

\newtheorem{remark}{Remark}[section]

\crefformat{assumption}{#2Assumption~#1#3}
\Crefformat{assumption}{#2Assumption~#1#3}
\crefname{appendix}{Appendix}{Appendices}
\Crefname{appendix}{Appendix}{Appendices}
}

\makeatletter
\AddToHook{cmd/appendix/before}{
\def\cref@section@alias{appendix}\def\cref@subsection@alias{appendix}} 
\makeatother

\makeatletter
\let\oldappendix\appendix
\renewcommand{\appendix}{%
  \oldappendix
  
  \patchcmd{\@sect}{\protected@edef\@currentlabel}{\protected@edef\@currentlabel\gdef\cref@currentlabel{appendix}}{}{}
  
  \renewcommand{\@seccntformat}[1]{%
    \ifstrequal{##1}{section}{Appendix\ \csname the##1\endcsname.\;}{\csname the##1\endcsname.\;}%
  }
}
\makeatother

\usepackage{algorithm}
\usepackage{algpseudocode}

\renewcommand{\tilde}{\widetilde}
\renewcommand{\hat}{\widehat}

\title{Impulse Response Estimation via Laguerre-Fourier Expansion}

\author[$\ast,1$]{Tamás~Dózsa}
\author[$\dagger,2$]{Art~J.~R.~Pelling}
\author[$\ast,3$]{Matthias~Voigt}
\affil[$\ast$]{Faculty of Mathematics and Computer Science, UniDistance Suisse, Brig, Switzerland.}
\affil[$\dagger$]{Department of Engineering Acoustics, Technische Universit\"at Berlin, Berlin, Germany.}
\affil[$1$]{Corresponding author, \email{tamas.dozsa@unidistance.ch}, \orcid{0000-0003-0919-4385}}
\affil[$2$]{\email{a.pelling@tu-berlin.de}, \orcid{0000-0003-3228-6069}}
\affil[$3$]{\email{matthias.voigt@fernuni.ch}, \orcid{0000-0001-8491-1861}}

\shorttitle{Laguerre-ETFE}
\shortauthor{Dózsa, Pelling et al.}
\shortdate{\today}
\shortinstitute{UniDistance Suisse, TU Berlin}

\abstract{\input{00_abstract}}

\keywords{system identification, impulse response estimation, empirical transfer function estimate, Laguerre functions, linear time-invariant systems}
\msc{93B30, 33C45, 94A12, 65T50} 

\novelty{
\begin{itemize}
\input{00_highlights}
\end{itemize}
}

\begin{document}
\maketitle 
\input{01_introduction}

\input{02_preliminaries}
\input{03_method}
\input{04_results}
\input{05_conclusion}
\input{funding}
\section*{Acronyms}
\input{acronyms}

\appendix

\setcounter{section}{0}
\renewcommand{\thesection}{\Alph{section}}
\renewcommand{\thesubsection}{\thesection.\arabic{subsection}}
\renewcommand{\thesubsubsection}{\thesubsection.\arabic{subsubsection}}
\counterwithin{equation}{section} 
\input{appendix}
%
\bibliographystyle{abbrvdoi}
\bibliography{references}
\end{document}

%% file: 00_abstract.tex
The \ac{etfe} is a widely used method for system identification of \ac{lti} systems in engineering disciplines such as acoustics, audio engineering, seismography, and tomography.
However, \ac{etfe} suffers from numerical limitations when the excitation signal is band-limited or vanishes at certain frequencies, which is a common physical constraint of the excitation in practice.
In such cases, division in the frequency domain becomes heavily ill-conditioned, and small measurement disturbances or numerical inaccuracies can degrade the solution.

This paper presents \acs{letfe}, a generalization of \ac{etfe} based on Laguerre-Fourier expansions that addresses these limitations.
After a suitable transformation, the method can yield a well-conditioned circulant problem even when the original \ac{etfe} system is ill-conditioned. Solving this problem via classical \ac{etfe} yields the discrete Laguerre-Fourier coefficients of the system's transfer function.
The desired \ac{ir} of the system-to-be-identified can then be recovered by a subsequent transformation pipeline.

We derive novel and efficient algorithms for performing these transformations and analyse the conditioning of the transformed problem, explicitly characterizing its dependence on the input and a parameter used in the Laguerre-Fourier expansion.
We evaluate the method on two simulated discrete-time \ac{lti} systems of varying complexity. The experiments demonstrate accurate \ac{ir} recovery for spectral-zero and band-limited excitation, where standard \ac{etfe} fails.

%% file: 00_highlights.tex
\item A generalization of \ac{etfe} using Laguerre-Fourier expansions that can improve conditioning for excitation signals that vanish at certain frequencies.
\item A novel and numerically efficient method for computing discrete Laguerre-Fourier coefficients of the impulse response using only scaling, shifting, and Fourier transforms.
\item A condition number bound characterizing the dependence of the Laguerre-Fourier coefficients on the excitation signal and the Laguerre parameter used in the transformation.
\item A numerically stable scheme for \ac{ir} recovery from Laguerre-Fourier coefficients using \ac{rkhs} properties.
\item Experimental validation on systems where standard \ac{etfe} fails.

%% file: 01_introduction.tex
\section{Introduction}
\label{sec:intro}
Many engineering identification tasks in acoustics, seismography, tomography, and signal processing
involve linear time-invariant (\ac{lti}) systems whose dynamics must be recovered from measured
input-output data. In this setting, a central task is the estimation of the system's \ac{ir}, since
the convolution of an input signal with the \ac{ir} fully characterizes the time-domain behaviour of
an \ac{lti} system. In the system-identification literature, this inverse problem is commonly
referred to as \emph{deconvolution}~\cite{hansen2002,ljung1999}.

A formal derivation of the deconvolution problem is given in \cref{sec:deconv}. For the present
discussion, it suffices to consider the recovery of an \ac{ir} \(h\in\C^{N-M+1}\), \(M,N\in\N\),
\(M\leq N\), from an input signal \(u\in\C^{M}\) and an output signal \(y\in\C^{N}\) such that
\(h\ast u \approx y\), where ``$\ast$" denotes the discrete convolution.

A well-established method for deconvolution is the \ac{etfe} method~\cite{ljung1985}. \ac{etfe} is
widely used in engineering practice, particularly in acoustics and audio engineering~\cite{farina2000,muller-trapet2020}. Its basic idea is to solve the deconvolution problem by division in the frequency
domain. This connection is derived formally in \cref{sec:problem}. Given the frequency response
\(U(\omega^k)\in\C\) of an excitation signal and the frequency response \(Y(\omega^k)\in\C\) of the
system output for \(\omega=\eu^{2\pi\iu/N}\), \(k=0,\,\dots,\,N-1\), the transfer function can be
estimated by
\begin{equation}
  \label{eq:spectral-division}
  H\big(\omega^k\big)=\frac{Y\big(\omega^k\big)}{U\big(\omega^k\big)}.
\end{equation}
The quantity \(H(\omega^k)\in\C\) is referred to as the \ac{etfe} because it estimates the transfer
function from measured input-output data, which are generally affected by noise and other error
signals. Nevertheless, the resulting \ac{etfe} can be shown to be an unbiased estimate of the true
solution~\cite{ljung1999}. Once \(H(\omega^k)\) is obtained, the \ac{ir} can be recovered by applying
the inverse Fourier transform. A prototype \ac{etfe} algorithm with suitable zero-padding to account
for the different signal lengths is given in \cref{alg:etfe}.
\begin{algorithm}[tb]
  \caption{\textsc{ETFE}($u,y$)\label{alg:etfe}}
  \begin{algorithmic}[1]
    \Require{Excitation signal \(u\in\mathbb{C}^M\) and response signal \(y\in\mathbb{C}^N\).}
    \Ensure{Truncated impulse response \(h\in\mathbb{C}^{N-M+1}\).} \State
    \(\tilde{u}\gets \begin{bmatrix}u_0&\cdots&u_{M-1}&0&\cdots&0\end{bmatrix}^\top\in\mathbb{C}^N.\)
    \Comment{Zero-pad the excitation signal.} \State \(U\gets \textsc{FFT}(\tilde{u})\),
    \(Y\gets \textsc{FFT}(y)\). \Comment{\acs{fft} of excitation and response.} \State
    \(H\gets Y\oslash U\in\C^N\). \Comment{Elementwise division.} \State
    \(\tilde{h}\gets\textsc{IFFT}(H/N)\). \Comment{Unnormalized inverse \acs{fft} of \(H/N\).}
    \State \(\begin{bmatrix}h&\ast\end{bmatrix}^\top\gets\tilde{h}\). \Comment{Discard the last
      \(M-1\) entries.}
  \end{algorithmic}
\end{algorithm}

Several practical problems arise when applying the \ac{etfe} method to identify \ac{lti} systems.
In particular, \ac{etfe} becomes unreliable whenever $U(\omega^k) \approx 0$, since the spectral
division in~\cref{eq:spectral-division} then amplifies numerical errors and measurement noise. This
is important in practice because many applications permit only band-limited excitation signals in
order to ensure safety and respect the physical constraints of the system; see, for example, the
modelling of nuclear power plants~\cite{soumelidis1991}. Moreover, most identification applications
consider discrete-time systems, so \ac{etfe} recovers the transfer function $H$ only at the
predetermined sampling points $\{\omega^k\}_{k=0}^{N-1} \subset \T$, where $\T$ denotes the
complex unit circle. Since these sampling points are usually spaced equidistantly, small values of
$U(\omega^k)$ cannot in general be avoided when information about high or low frequencies is sought
from band-limited excitation.

A possible remedy for these numerical issues is regularization~\cite{muller-trapet2020,marconato2017,fujimoto2020,hansen2002},
which uses prior knowledge about the input and system to add a regularizing term
\(\varepsilon(\omega^k)\in[0,\infty)\) wherever the input signal vanishes. The regularized input
\begin{equation*}
  U_\mathrm{reg}(\omega^k)\coloneqq\frac{|U(\omega^k)|^2+\varepsilon(\omega^k)}{\overline{U(\omega^k)}}
\end{equation*}
is then used in place of \(U(\omega^k)\) in~\cref{eq:spectral-division}. Although regularization can
mitigate the numerical difficulties of \cref{eq:spectral-division}, it also introduces bias and may
therefore reduce accuracy. In addition, the regularizer may contain free parameters whose optimal
choice can itself become a computational challenge~\cite{rebillat2018,pillonetto2014}. For these
reasons, \ac{etfe} is often avoided in favour of methods that recover alternative representations of
the system, including subspace methods~\cite{ljung1999,verhaegen1992,vanoverschee1994}, vector
fitting~\cite{gustavsen1999}, and pole-finding approaches~\cite{vandenhof2005}.

In this work, we propose a generalization of the \ac{etfe} method based on Laguerre-Fourier
expansions. More specifically, we represent the transfer function by its Laguerre-Fourier
coefficients and develop a procedure for estimating these coefficients from input-output data. We call this proposed generalization the \ac{letfe}.
Laguerre functions form an orthonormal basis in the Hilbert space that contains the transfer
function, and they depend on a free parameter that influences the convergence rate of the
corresponding expansion. Several earlier works~\cite{qian2011,tuma2019,guarnizo2018} study how
this parameter can be chosen to improve approximation quality, and the competitiveness of
identification schemes based on Laguerre approximation has already been
established~\cite{vandenhof2005,illg2024}. These structural properties enable an \ac{ir} recovery
scheme that remains applicable for band-limited inputs. In contrast to subspace methods, vector
fitting, and pole-finding approaches, this Laguerre-based method provides a direct estimation
framework that maintains the interpretability of the frequency-domain \ac{etfe} while addressing the
numerical limitations that motivate this work.

The most important contributions of this work are:
\begin{enumerate}
  \item Two algorithmic variants are proposed for computing discrete Laguerre-Fourier coefficients
        from time-domain input-output data by solving a linear system with circulant Toeplitz
        structure. The first method (\cref{alg:proto-laguerre-fourier}) directly solves this system,
        whereas the second (\cref{alg:laguerre-fourier}) is an efficient adaptation that uses only a
        single \ac{fft}. In the latter method, the discrete orthogonality of appropriately sampled
        Laguerre functions is exploited without explicit basis construction.
  \item An algorithm is proposed for impulse response recovery from Laguerre-Fourier coefficients,
        thereby completing the Laguerre \ac{etfe} pipeline; see~\cref{alg:recover-ir}.
  \item A condition number bound is derived for the circulant system (\cref{thm:condGamma}),
        explicitly characterizing its dependence on the input $U$ and the Laguerre parameter.
  \item The complete method is verified on two discrete-time \ac{lti} systems of varying complexity.
        The experiments show that the \ac{ir} can be recovered in cases where \ac{etfe} fails,
        namely when the excitation has spectral zeros or is band-limited; see~\cref{sec:exp}.
\end{enumerate}

The rest of this paper is organized as follows. \Cref{sec:preliminaries} reviews \ac{etfe} and the
mathematical background. \Cref{sec:methods} presents the proposed algorithms for recovering
Laguerre-Fourier coefficients and the corresponding \ac{ir}. \Cref{sec:exp} reports the
identification results for two linear dynamical systems. Finally, \cref{sec:conclusion} summarizes
the main findings and discusses future research directions.

%% file: 02_preliminaries.tex
\section{Preliminaries}
\label{sec:preliminaries}
This section develops the mathematical background for the proposed method in \cref{sec:methods}. It derives a finite-dimensional deconvolution model, reviews circulant matrices, and introduces discrete-time \ac{lti} systems and Laguerre-Fourier expansions.
\subsection{Deconvolution}
\label{sec:deconv}
We introduce the \emph{deconvolution problem} following \cite{kilmer1999,hansen2002}. In continuous time,
the convolution of two signals takes the form of an inhomogeneous Fredholm integral equation of the first kind
\begin{equation}
  \label{eq:fredholm}
  (h\ast u)(t)=\int_{-\infty}^{\infty}h(t-s)u(s)\,\mathrm{d}s=y(t),
\end{equation}
where \(u,h,y\in L_2(\R;\C)\) are the input signal, kernel, and output signal. 
To obtain a finite-dimensional model, assume that the signals are causal and supported on \([0,\tau]\), i.e., \(u(t)=y(t)=h(t)=0\) for \(t\notin[0,\tau]\). We then rewrite
\cref{eq:fredholm} as
\begin{equation*}
  \int_{0}^{\tau}h(t-s)u(s)\,\mathrm{d}s=y(t).
\end{equation*}
In practice, signals are measured at discrete-time instances \(\mathcal{T}_N=\{t_0,\,\dots,\,t_{N-1}\}\) with
\(0=t_0<t_1<\dots<t_{N-1}\). If \(\tau\leq t_{N-1}\) and the signals satisfy
\begin{align*}
  u(t)&=0\quad\text{for}\quad t> t_{M-1},\\
  h(t)&=0\quad\text{for}\quad t> t_{N-M},\\
  y(t)&=0\quad\text{for}\quad t> t_{N-1},
\end{align*}
for \(M\leq N\), the integral equation can be approximated by a quadrature rule with
abscissae \(\{t_0,\,\dots,\,t_{M-1}\}\subset\mathcal{T}_N\) and weights
\(\{w_0,\dots,w_{M-1}\} \subset \C\) to obtain
the following system of \(N\) linear equations
\begin{equation*}
  \begin{bmatrix}
    w_0h(t_0-t_0)&0&\cdots&0\\
    w_0h(t_1-t_0)&w_1h(t_1-t_1)&\ddots&\vdots\\
    \vdots&&\ddots&0\\
    w_0h(t_{M-1}-t_0)&\cdots&\cdots&w_{M-1}h(t_{M-1}-t_{M-1})\\
    \vdots&\ddots&&\vdots\\
    \vdots&&\ddots&\vdots\\
    w_0h(t_{N-1}-t_0)&\cdots&\cdots&w_{M-1}h(t_{N-1}-t_{M-1})\\    
  \end{bmatrix}
  \begin{bmatrix}
    u(t_0)\\\vdots\\u(t_{M-1})
  \end{bmatrix}=\begin{bmatrix}
    y(t_0)\\\vdots\\y(t_{N-1})
  \end{bmatrix}.
\end{equation*}
For equidistant sampling, let \(t_n=n t_\Delta\), \(n=0,\,\dots,\,N-1\), with \(t_\Delta>0\). We then pass to discrete time, absorb the quadrature weights into the discrete input, and set \(w_k=1\). Let \(\ell\) denote the vector space of unilateral complex sequences \((x_0,x_1,x_2,\ldots)\), and regard a sequence \(x\in\ell\) with \(x_n=0\) for \(n\geq N\) as its vector of first \(N\) entries in \(\C^N\).

Because convolution is commutative, \(h\ast u=u\ast h\), we write the discrete model with the input \(u\) as the convolution operator, that is, 
\begin{equation}
  \label{eq:T-conv}
  u\ast h=
  \underbrace{
\begin{bmatrix}
u_0     & 0         & \cdots  & \cdots  & \cdots  & \cdots  & 0 \\
u_1     & u_0     &    \ddots    &         &         &         & \vdots \\
\vdots  & u_1      & \ddots  &  \ddots       &         &         & \vdots \\
u_{M-1} & \vdots   & \ddots  & \ddots  &    \ddots     &         & \vdots \\
0       & u_{M-1}  &   & \ddots  & \ddots  &     \ddots    & \vdots \\
\vdots  &  \ddots     & \ddots  &    & \ddots  & \ddots  & 0 \\
\vdots  &         & \ddots  & \ddots  &   & \ddots  & u_0 \\
\vdots  &         &          & \ddots  & \ddots  &   & u_1 \\
\vdots  &         &         &         & \ddots  & u_{M-1} & \vdots \\
0       & \cdots  &\cdots  & \cdots  & \cdots  & 0       & u_{M-1}
\end{bmatrix}
  }_{ \eqqcolon \toeplitz \in\C^{N\times(N-M+1)}}
  \underbrace{\begin{bmatrix}
    h_0\\
    \vdots\\
    h_{N-M}
  \end{bmatrix}}_{ \eqqcolon h\in\C^{N-M+1}}
  =
  \underbrace{\begin{bmatrix}
    y_0\\
    \vdots\\
    y_{N-1}
  \end{bmatrix}}_{ \eqqcolon y\in\C^{N}}.
\end{equation}
The matrix \(\toeplitz\in\C^{N\times(N-M+1)}\) has constant diagonals and is therefore a Toeplitz matrix. Equation~\cref{eq:T-conv} is exact under the finite-support assumptions above. For the stable systems considered below, it serves as a finite observation window model when the impulse-response tail beyond index \(N-M\) is negligible at the measurement accuracy.
\subsection{Circulant Systems}
\label{sec:circulant}
A circulant matrix \(\circulant\in\C^{N\times N}\) is a Toeplitz matrix whose diagonals ``wrap around''~\cite[Chapter 4.8]{golub2013}, i.e.,
\begin{equation}
  \label{eq:circulant}
  \circulant =\begin{bmatrix}
    c_0&c_{N-1}&\cdots&c_1\\
    c_1&\ddots&\ddots&\vdots\\
    \vdots&\ddots&\ddots&c_{N-1}\\
    c_{N-1}&\cdots&c_1&c_0
  \end{bmatrix}\in\C^{N\times N}
\end{equation}
and it is completely determined by the entries of its first column, \(c_0,\,\dots,\,c_{N-1}\in\C\).
By introducing the \emph{downshift matrix}
\begin{equation}
  \label{eq:downshift}
  \downshift = \begin{bmatrix}
    0&\cdots&\cdots&0&1\\
    1&\ddots&&&0\\
    0&\ddots&\ddots&&\vdots\\
    \vdots &\ddots&\ddots&\ddots&\vdots\\
    0&\cdots&0&1&0
  \end{bmatrix}\in\C^{N\times N},
\end{equation}
which is a circulant permutation matrix, i.e., \(\downshift\downshift^\top=I_N\). The following properties are used later.
\begin{proposition}~\label{prop:circulant}
  \begin{enumerate}[label=(\roman*)]
    \item Any circulant matrix \(\circulant\in\C^{N\times N}\) determined by the first column
          \(\begin{bmatrix}c_0&\cdots&c_{N-1}\end{bmatrix}^\top\in\C^N\) can be expanded as a power
          series \(\circulant=\sum_{k=0}^{N-1}c_k\downshift^{k}\).
    \item The sum or product of two circulant matrices is a circulant matrix.
    \item Any two circulant matrices \(\circulant_1,\circulant_2\in\C^{N\times N}\) commute, i.e., \(\circulant_1\circulant_2=\circulant_2\circulant_1\).
  \end{enumerate}
\end{proposition}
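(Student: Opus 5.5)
The plan is to obtain all three items from one explicit description of the powers of the downshift matrix \(\downshift\) in \cref{eq:downshift}. Index rows and columns by \(0,\dots,N-1\) and write \(e_j\) for the standard basis vectors. By definition, \(\downshift e_j=e_{(j+1)\bmod N}\). Induction on \(k\) then gives \(\downshift^k e_j=e_{(j+k)\bmod N}\). Equivalently, \((\downshift^k)_{ij}=1\) exactly when \(i-j\equiv k \pmod N\), and the entry is \(0\) otherwise. In particular \(\downshift^N=I_N\), so the powers of \(\downshift\) are periodic with period \(N\).

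For (i), read \cref{eq:circulant} entrywise as \(\circulant_{ij}=c_{(i-j)\bmod N}\). Since diagonals wrap around, the \((i,j)\) entry depends only on \(i-j\) modulo \(N\), and the first column fixes these values as \(c_0,\dots,c_{N-1}\). Compare this with \(\sum_{k=0}^{N-1}c_k(\downshift^k)_{ij}\). By the description above, exactly one term survives, namely \(k=(i-j)\bmod N\), so the sum equals \(c_{(i-j)\bmod N}\). Hence \(\circulant=\sum_k c_k\downshift^k\). The converse also holds: every polynomial in \(\downshift\) is circulant. After reducing exponents modulo \(N\) with \(\downshift^N=I_N\), the polynomial has the same form \(\sum_k d_k\downshift^k\), and the entrywise computation shows it is circulant with first column \((d_k)\).

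For (ii) and (iii), write \(\circulant_1=p(\downshift)\) and \(\circulant_2=q(\downshift)\) with polynomials \(p,q\) of degree less than \(N\), using (i). The sum is \((p+q)(\downshift)\) and the product is \((pq)(\downshift)\). Both are polynomials in \(\downshift\), so by the converse just noted they are circulant. Reducing \(pq\) modulo \(z^N-1\) identifies the first column of the product as the cyclic convolution of the two first columns. Commutativity holds because polynomials in one fixed matrix commute: \(p(\downshift)q(\downshift)=(pq)(\downshift)=(qp)(\downshift)=q(\downshift)p(\downshift)\).

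The only step needing care is the index bookkeeping for \(\downshift^k\) and the modulo-\(N\) wrap-around. Once the formula \((\downshift^k)_{ij}=[\,i-j\equiv k \pmod N\,]\) is established, everything else is routine. An alternative route would diagonalize every circulant by the DFT matrix; this gives (ii) and (iii) at once, but it is unnecessary here.
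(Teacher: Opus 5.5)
Your proof is correct. The paper gives no proof of this proposition; it states it as a standard fact, with the surrounding circulant material cited from Golub--Van Loan, Section 4.8.

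The tool the paper develops right afterwards is the Fourier diagonalization in \cref{lem:downshift} and \cref{thm:circulant}. From it, (ii) and (iii) follow in one line, because $\fourier$ diagonalizes all circulants simultaneously. That route also gives the eigenvalues, which the paper actually needs for \cref{lem:C-cond} and the ETFE formula. It does not bypass (i), though: \cref{thm:circulant} is obtained by writing $\circulant=\sum_k c_k\downshift^k$ and applying \cref{lem:downshift}. So the index bookkeeping $(\downshift^k)_{ij}=1 \iff i-j\equiv k \pmod N$ that you isolate is needed either way.

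Your purely combinatorial argument avoids the unitarity of $\fourier$ and roots of unity altogether. It also proves the converse, that every polynomial in $\downshift$ is circulant. The paper uses exactly this converse, without stating it, in the proof of \cref{thm:shifted-lag}. There it calls $I_N+a\downshift^\top$ circulant ``by \cref{prop:circulant}(i)'', which needs $\downshift^\top=\downshift^{-1}=\downshift^{N-1}$ together with your converse, since (i) as stated only goes from circulant to polynomial.
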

More importantly, the eigenvalue decomposition of a circulant matrix has a rich structure that lends
itself to efficient computation. To this end, we define the Fourier (or DFT) matrix.
\begin{definition}[\emph{Fourier matrix}]
  For \(N\in\N\), denote
  \begin{equation}
    \label{eq:omega}
    \omega\coloneqq\eu^{\frac{2\pi\iu}{N}}\in\C.
  \end{equation}
  The \emph{Fourier matrix} of dimension $N \times N$ is defined as
  \begin{equation}
    \label{eq:Fourier}
    \fourier =\frac{1}{\sqrt{N}}\left[\overline{\omega}^{jk}\right]_{0\leq j,\,k\leq N-1}\in\C^{N\times N}.
  \end{equation}
\end{definition}
The Fourier matrix is unitary and diagonalizes the downshift matrix, according to the following lemma.
\begin{lemma}[{\cite[Lem. 4.8.1]{golub2013}}\label{lem:downshift}]
  It holds
  \begin{equation}
    \label{eq:D-evd}
    \fourier ^*\downshift \fourier =\operatorname{diag}\left(\omega^0,\,\dots,\,\omega^{N-1}\right).
  \end{equation}
\end{lemma}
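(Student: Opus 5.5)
The plan is to verify the eigenvector relation column by column: show that each column of \(\fourier\) is an eigenvector of \(\downshift\) with eigenvalue \(\omega^k\). Since \(\fourier\) is unitary, this is equivalent to \(\downshift\fourier=\fourier\operatorname{diag}(\omega^0,\dots,\omega^{N-1})\).

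Denote the \(k\)-th column of \(\fourier\) by \(f_k=\frac{1}{\sqrt N}\begin{bmatrix}\overline\omega^{0\cdot k}&\overline\omega^{1\cdot k}&\cdots&\overline\omega^{(N-1)k}\end{bmatrix}^\top\), \(k=0,\dots,N-1\).

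\textbf{Step 1: action of \(\downshift\).} From \cref{eq:downshift}, \((\downshift x)_j=x_{j-1}\) for \(j\geq1\) and \((\downshift x)_0=x_{N-1}\). In other words, \((\downshift x)_j=x_{(j-1)\bmod N}\).

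\textbf{Step 2: eigenvector identity.} Applying Step 1 to \(f_k\) gives
\[
(\downshift f_k)_j=\tfrac{1}{\sqrt N}\,\overline\omega^{((j-1)\bmod N)k}.
\]
Because \(\omega^N=1\), the reduction modulo \(N\) can be dropped, so \(\overline\omega^{((j-1)\bmod N)k}=\overline\omega^{(j-1)k}\). Using \(\overline\omega=\omega^{-1}\),
\[
\overline\omega^{(j-1)k}=\overline\omega^{jk}\,\omega^{k}.
\]
Hence \(\downshift f_k=\omega^k f_k\), and stacking the columns yields \(\downshift\fourier=\fourier\operatorname{diag}(\omega^k)\).

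\textbf{Step 3: unitarity.} It remains to show \(\fourier^*\fourier=I_N\). The \((k,l)\) entry is
\[
(\fourier^*\fourier)_{kl}=\frac1N\sum_{j=0}^{N-1}\omega^{j(k-l)}.
\]
For \(k=l\) each term equals \(1\), so the entry is \(1\). For \(k\neq l\), the ratio \(\omega^{k-l}\) is an \(N\)-th root of unity different from \(1\), and the geometric sum equals \(\bigl(\omega^{N(k-l)}-1\bigr)/\bigl(\omega^{k-l}-1\bigr)=0\). Left-multiplying the identity from Step 2 by \(\fourier^*\) then gives \cref{eq:D-evd}.

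The only delicate point is the wraparound entry \(j=0\) in Step 2. There it must be checked that \(\overline\omega^{(N-1)k}=\overline\omega^{-k}=\omega^{k}\), which agrees with the general formula. The sign convention (\(\overline\omega\) in \(\fourier\), \(\omega^k\) in the diagonal) must also be tracked consistently. Everything else is routine.
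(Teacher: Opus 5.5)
Your proof is correct, including the wrap-around row $j=0$ and the sign bookkeeping: the $\overline{\omega}$ in the Fourier matrix is what gives the $k$-th column the eigenvalue $\omega^k$. The paper gives no argument of its own and simply cites Golub--Van Loan, Lem.~4.8.1; your direct column-by-column eigenvector check combined with unitarity of the Fourier matrix is the standard proof of that fact, so your route is essentially the same.
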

This relationship yields the following theorem.
\begin{theorem}[{\cite[Thm. 4.8.2]{golub2013}}\label{thm:circulant}]
  The eigenvalue decomposition of a circulant matrix \(\circulant\in\C^{N\times N}\) as defined in
  \cref{eq:circulant} is given by
  \begin{equation}
    \label{eq:C-evd}
\circulant=\fourier ^*\operatorname{diag}(\mu_0,\,\dots,\,\mu_{N-1})\fourier ,
  \end{equation}
  where \(\fourier \in\C^{N\times N}\) is the Fourier matrix from \cref{eq:Fourier} and \(\mu_n\), $n=0,\,\ldots,\,N-1$
  denote the eigenvalues of \(\circulant\) which are given by the (scaled) Fourier coefficients of its first
  column \(c=\begin{bmatrix}c_0&\cdots&c_{N-1}\end{bmatrix}^\top\in\C^N\), i.e.,
  \(\begin{bmatrix}\mu_0&\cdots&\mu_{N-1}\end{bmatrix}^\top=\sqrt{N}\fourier c\).
\end{theorem}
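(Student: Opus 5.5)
The plan is to combine \cref{prop:circulant}(i) with \cref{lem:downshift}. By \cref{prop:circulant}(i), the circulant matrix expands as \(\circulant=\sum_{k=0}^{N-1}c_k\downshift^{k}\). Since \(\fourier\) is unitary, \cref{eq:D-evd} can be rewritten as \(\downshift=\fourier\,\Lambda\,\fourier^*\), where \(\Lambda\coloneqq\operatorname{diag}(\omega^0,\dots,\omega^{N-1})\). Raising this to the \(k\)-th power telescopes the inner factors \(\fourier^*\fourier=I_N\), so \(\downshift^k=\fourier\Lambda^k\fourier^*\). Summing over \(k\) then gives
\begin{equation*}
  \circulant=\fourier\Big(\sum_{k=0}^{N-1}c_k\Lambda^k\Big)\fourier^*,
\end{equation*}
which is a unitary similarity transformation to a diagonal matrix. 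Its diagonal entries are \(\mu_n=\sum_{k}c_k\omega^{nk}\).

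The main obstacle is bookkeeping: matching the placement of \(\fourier\) versus \(\fourier^*\) in the statement, and the sign of the exponent in \(\mu_n\), with the convention \(\fourier=\tfrac{1}{\sqrt N}[\overline\omega^{jk}]\).

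For the placement, I would first check \cref{eq:D-evd} directly. The \(j\)-th column of \(\fourier\) is \(v_j=\tfrac1{\sqrt N}(\overline\omega^{jk})_k\). The downshift maps \(x_k\mapsto x_{k-1}\), so \(\downshift v_j=\overline\omega^{-j}v_j=\omega^{j}v_j\). Hence \(\downshift\fourier=\fourier\Lambda\), which is exactly \(\fourier^*\downshift\fourier=\Lambda\), consistent with \cref{eq:D-evd}. This yields \(\circulant=\fourier\,\mathrm{diag}(\mu)\,\fourier^*\) with \(\mu_n=\sum_k c_k\omega^{nk}\).

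For the eigenvalues, \(\sqrt N\fourier c\) has entries \(\sum_k c_k\overline\omega^{nk}\). These are the values \(\mu_n\) after the reindexing \(n\mapsto -n \bmod N\), and the same permutation also relates \(\fourier\) and \(\fourier^*\), since \(\overline{\fourier}=\fourier^*=P\fourier\) for the index-reversal permutation \(P\). So I would insert \(P^2=I_N\) to convert \(\fourier\,\mathrm{diag}(\mu)\,\fourier^*\) into \(\fourier^*\,\mathrm{diag}(\tilde\mu)\,\fourier\) with \(\tilde\mu=\sqrt N\fourier c\), matching \cref{eq:C-evd} as stated. Alternatively, one can verify \(\circulant\fourier^* e_n=\tilde\mu_n\fourier^*e_n\) columnwise, using that the \(n\)-th column of \(\fourier^*\) is \(\tfrac1{\sqrt N}(\omega^{nk})_k\), and compute \((\circulant x)_j=\sum_k c_{j-k}x_k\) with indices mod \(N\).

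Finally, since \(\fourier\) is unitary, this diagonalization is indeed an eigenvalue decomposition, and the \(\mu_n\) are the eigenvalues of \(\circulant\).
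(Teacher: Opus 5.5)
Your proof is correct and takes essentially the route the paper indicates. The paper gives no proof of its own: it cites Golub--Van Loan and remarks that the theorem follows from \cref{lem:downshift}, and your argument expands \(\circulant=\sum_k c_k\downshift^k\) via \cref{prop:circulant}(i) and diagonalizes each power of \(\downshift\). Your extra index-reversal step, using \(\fourier^*=P\fourier=\fourier P\), is what is needed to pass from the form \(\circulant=\fourier\operatorname{diag}(\sqrt{N}\fourier^*c)\fourier^*\) that \cref{lem:downshift} yields directly to the stated form \(\circulant=\fourier^*\operatorname{diag}(\sqrt{N}\fourier c)\fourier\).
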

Because a circulant matrix is Toeplitz, the circular convolution of two vectors \(x\in\C^N\) and
\(c\in\C^N\) can be written as a linear system of equations with operator \(\circulant\) as in
\cref{eq:circulant}, i.e.,
\begin{equation}
  \label{eq:C-conv-helper}
  x\ast c = \circulant x=y\in\C^N.
\end{equation}
If every entry of \(\fourier c\) is nonzero, plugging \cref{eq:C-evd} into \cref{eq:C-conv-helper} yields
\begin{align}
  \label{eq:etfe}
  &&x\ast c=\fourier ^*\operatorname{diag}(\mu_0,\dots,\mu_{N-1})\fourier x&=y\nonumber\\
  &\iff &\sqrt{N}\operatorname{diag}(\fourier c)\fourier x&=\fourier y\nonumber\\
  &\iff &x&=\frac{1}{\sqrt{N}}\fourier ^*\operatorname{diag}(\fourier c)^{-1}\fourier y,
\end{align}
which reveals a simple and fast strategy for computing the solution \(x\), as the multiplications
with the (inverse) Fourier matrix \(\fourier \) in \cref{eq:etfe} can be performed efficiently with an
\ac{fft} algorithm, see \cite[Algorithm 4.8.1]{golub2013}.
\begin{lemma}[Spectral condition number of circulant matrices\label{lem:C-cond}]
  Let \(\circulant\in\C^{N\times N}\) be a circulant matrix with first column \(c\in\C^N\). If \(\circulant\) is
  nonsingular, the spectral condition number of \(\circulant\) is given by
  \begin{equation*}
    \kappa_2(\circulant)={\lVert \circulant\rVert}_2\big\lVert \circulant^{-1}\big\rVert_2=\frac{\max_{k=0,\,\dots,N-1}\left|{[\fourier c]}_k\right|}{\min_{k=0,\,\dots,N-1}\left|{[\fourier c]}_k\right|}.
  \end{equation*}
\end{lemma}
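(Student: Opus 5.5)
The plan is to read off the condition number directly from the unitary diagonalization in \cref{thm:circulant}. That theorem gives
\begin{equation*}
  \circulant=\fourier^*\Lambda\fourier, \qquad \Lambda=\operatorname{diag}(\mu_0,\dots,\mu_{N-1}), \qquad \mu=\sqrt{N}\fourier c .
\end{equation*}
Since \(\fourier\) is unitary, \(\circulant\) is normal, so its singular values are the moduli of its eigenvalues. Concretely, I would note that the \(2\)-norm is unitarily invariant and compute
\begin{equation*}
  \lVert\circulant\rVert_2=\lVert\Lambda\rVert_2=\max_k|\mu_k|.
\end{equation*}

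Next I would treat the inverse. Nonsingularity of \(\circulant\) is equivalent to \(\mu_k\neq0\) for all \(k\), because the \(\mu_k\) are exactly the eigenvalues. Then \(\circulant^{-1}=\fourier^*\Lambda^{-1}\fourier\), and by the same invariance argument
\begin{equation*}
  \big\lVert\circulant^{-1}\big\rVert_2=\max_k|\mu_k|^{-1}=\Bigl(\min_k|\mu_k|\Bigr)^{-1}.
\end{equation*}
The norm of a diagonal matrix is the largest modulus of its diagonal entries, which can be checked directly on \(\lVert\Lambda x\rVert_2^2=\sum_k|\mu_k|^2|x_k|^2\).

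Multiplying the two norms gives \(\kappa_2(\circulant)=\max_k|\mu_k|/\min_k|\mu_k|\). Substituting \(\mu_k=\sqrt{N}[\fourier c]_k\), the common factor \(\sqrt{N}\) cancels between numerator and denominator, which yields the stated formula.

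There is no substantial obstacle. The only point needing care is the normalization: the eigenvalues carry the factor \(\sqrt{N}\) relative to \(\fourier c\), and this is harmless only because the condition number is a ratio. The unitarity of \(\fourier\) is taken as known from the paper's remark preceding \cref{lem:downshift}.
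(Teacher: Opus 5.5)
Your proof is correct and follows essentially the same route as the paper. The paper also uses the unitary diagonalization from \cref{thm:circulant} to identify the singular values of $\circulant$ with the moduli of its eigenvalues, via $\circulant^*\circulant=\fourier^*\operatorname{diag}(|\mu_0|^2,\dots,|\mu_{N-1}|^2)\fourier$ rather than unitary invariance of the $2$-norm, and then inverts the eigenvalues to handle $\circulant^{-1}$. Your explicit tracking of the factor $\sqrt{N}$, which cancels in the ratio, is more careful than the paper, which silently identifies $\mu_k$ with $[\fourier c]_k$.
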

\begin{proof}
  See \cref{sec:proof-C-cond}.
\end{proof}
\subsection{Discrete-Time LTI Systems}
\label{sec:systems}
A discrete-time \ac{siso} \ac{lti} system maps an input signal \(u\in\ell\) to an output signal \(y\in\ell\) via~\cite{vandenhof2005}
\begin{equation}
  \label{eq:conv}
  y = h \ast u,
\end{equation}
where $\ast : \ell \times \ell \to \ell$ denotes the discrete convolution operator. In
\cref{eq:conv}, $h \in \ell$ is referred to as the \ac{ir} of the system.
Consider the $\cZ$-transform defined as
\begin{equation}
  \label{eq:Ztrans}
  \cZ[x](z) \coloneqq \sum_{n=0}^{\infty} x_n z^n \quad (x \in \ell, z \in \C),
\end{equation}
whenever the sum exists. Let $\D \coloneqq \{z \in \C \,: \, |z| < 1 \}$ and
$\T \coloneqq \{z \in \C: \ |z| = 1 \}$ denote the open complex unit disk and its boundary. For a
causal and \ac{bibo}-stable system, equivalently \(h\in\ell_1\) (see \cref{sec:spaces}), applying the
$\cZ$-transform to \cref{eq:conv} gives
\begin{equation}
  \label{eq:freqdom}
  Y(z) = \tH(z)  U(z) \quad (z \in \overline{\D} \coloneqq \D \cup \T),
\end{equation}
where $Y \coloneqq \cZ[y], U \coloneqq \cZ[u]$, and $\tH \coloneqq \cZ[h]$. Here, $\tH$ is known
as the transfer function of the system. In our setting, transfer functions belong to
the Hardy space $H_2(\D)$ and can be expressed as
\begin{equation}
\label{eq:trfFull}
\tH(z) = z H(z),
\end{equation}
where
\begin{equation}
  \label{eq:transferfunc}
  H(z) \coloneqq \sum_{k=0}^{\ord-1} \frac{\res_k}{1 - \overline{\lambda}_k z},
\end{equation}
where $\ord \in \N$, $\res_k \in \C$, $\lambda_k \in \D$, and $z \in \overline{\D}$ for
 $k=0,\ldots,\ord-1$, see \cite{cdcAngino}. For a formal definition of the Hardy space \(H_2(\D)\), see \cref{sec:spaces}.
Thus, $H$ is a rational function fully defined by the residues $\res_k$ and the so-called mirror-image poles $\lambda_k \ (k=0,\ldots,\ord-1)$. The latter naming convention is appropriate, since
the poles of $H$ according to \cref{eq:transferfunc} coincide with
$1/\overline{\lambda}_k \ (k=0,\ldots,\ord-1)$, and thus they are the mirror-image reflections of
$\lambda_k$ across the boundary $\T$.
\begin{remark}
  Note that the transfer function definition given in \cref{eq:transferfunc} is different from the
  definition usually found in system theory literature~\cite{vandenhof2005}. Indeed, due to the nonstandard definition of the $\cZ$-transform with nonnegative exponents of $z$ instead of nonpositive ones, $\tH(1/z)$ is the standard transfer function typically used in systems and control. The methods proposed in this paper recover a sequence $h \in \ell_2$ that satisfies $\cZ[h] = H$, where $H$ is defined according to~\cref{eq:transferfunc}. For simplicity, we also refer to $H$ as the \emph{transfer function} and to $h$ as the \emph{impulse response} of the system. We note, however, that the true transfer function is defined by~\cref{eq:trfFull}. Consequently, the true \ac{ir} of the system that satisfies~\eqref{eq:conv} is given by $\tilde{h} \coloneqq (0, h_0, h_1, \ldots) \in \ell_2$. This also implies that the proposed methods can only identify impulse responses of systems with strictly proper transfer functions. For more details, see~\cref{app:sys}.
\end{remark}
Expanding \cref{eq:transferfunc} in powers of $z$ gives the \ac{ir} \(h = (h_0,h_1,h_2,\ldots)\) in \cref{eq:conv}:
\begin{equation}
  \label{eq:ir}
  h_n = \sum_{k=0}^{\ord-1} \res_k \overline{\lambda}_k^n \quad (n \in \N).
\end{equation}
Thus, the \ac{ir} is a finite sum of geometric sequences with quotients equal to the complex conjugates of the mirror-image poles. Since $|\lambda_k| < 1$, each term in \cref{eq:ir} decays as $n \to \infty$. The decay of $\{h_n\}_{n=0}^{\infty}$ can nevertheless be slow when $\ord$ or $|\res_k| \ (k=0,\ldots,\ord-1)$ are large, or when $|\lambda_k| \approx 1 \ (k=0,\ldots,\ord-1)$.

Finally, for the considered method it is important to note that $H_2(\D)$ is a \ac{rkhs}~\cite{Garcia2018} with the Szeg\H{o} kernel $\xi : \D \times \D \to \C$ defined as
\begin{equation}
  \label{eq:kernel}
  \xi(z, w) \coloneqq \frac{1}{1 - \overline{w}z} \quad (z,w \in \D).
\end{equation}
Note that for any $\lambda \in \D$, we have $\xi(\cdot, \lambda) \in H_2(\D)$. The Szeg\H{o} kernel satisfies the  \emph{reproducing property}
\begin{equation}
  \label{eq:reprod}
  {\langle \xi(\cdot, \lambda), f \rangle}_{H_2(\D)} = \overline{f(\lambda)} \quad (f \in H_2(\D), \ \lambda \in \D ),
\end{equation}
which we will use in our proposed method. For a proof of the above statements and a more profound discussion on Hardy spaces, we recommend~\cite{Garcia2018}.
\subsection{Laguerre-Fourier Expansions in \texorpdfstring{$H_2(\mathbb{D})$}{H2(D)}}
\label{sec:lag}
This subsection introduces the Laguerre functions used by the \ac{ir} recovery method in \cref{sec:irrec}.
\subsubsection{Laguerre-Fourier Coefficients}
Define the \emph{Blaschke factors} by
\begin{equation}
  \label{eq:Blaschfac}
  B^{a}(z) \coloneqq \frac{z-a}{1 - \overline{a}z} \quad (a \in \D, z \in \overline{\D}).
\end{equation}
For their properties, see \cref{sec:blaschke-factors}. The Laguerre functions are defined by
\begin{equation}
  \label{eq:Lag}
  L_n^{a}(z) \coloneqq \frac{\sqrt{1 - |a|^2}}{1 - \overline{a}z} B^a(z)^n \quad (a \in \D, z \in \overline{\D}, n \in \N).
\end{equation}
For any choice of $a \in \D$, these form a complete and orthonormal function system in $H_2(\D)$.
Thus, for any $H \in H_2(\D)$, we have
\begin{equation}
  \label{eq:LagExpanse}
  H = \sum_{n=0}^{\infty} {\langle H, L_n^a \rangle}_{H_2(\D)} L_n^a,
\end{equation}
where the convergence is understood in the $H_2(\D)$-norm. Since in a \ac{rkhs} (thus also in
$H_2(\D)$), convergence in norm implies pointwise convergence~\cite{aronszajn1950}, \cref{eq:LagExpanse} also holds
pointwise in $\D$. 

\begin{definition}[\emph{Laguerre-Fourier coefficients}]
  \label{def:lcf}
  Given $H \in H_2(\D)$ and $a \in \D$, for any $n \in \N$, the $n$-th \emph{Laguerre-Fourier coefficient}
  is defined as the number
  \begin{equation}
    \label{eq:lcoeff}
    \clag \coloneqq {\langle H, L_n^a \rangle}_{H_2(\D)}.
  \end{equation}
\end{definition}
\begin{remark}[Laguerre-Fourier coefficients generalize usual Fourier coefficients]
  Choosing $a=0$ yields $L_n^a(z) = z^n$ for $z\in \overline{\D}$. Using the definition of the
  $H_2(\D)$ inner product given in \cref{eq:H2inner}, we notice that if $a = 0$, then
  \begin{equation*}
    \clag = {\langle H, L_n^a \rangle}_{H_2(\D)} = \frac{1}{2 \pi} \int_{-\pi}^{\pi} H(\eu^{\iu t}) \eu^{-n \iu t}\, \du t.
  \end{equation*}
  Thus, for this special case, we obtain the well-known trigonometric Fourier coefficients.
\end{remark}

We can approximate $H \in H_2(\D)$ by projecting it onto an $n$-dimensional subspace of $H_2(\D)$ spanned by the first $n$ Laguerre functions. We call such a projection the $n$-th Laguerre-Fourier partial sum of $H$ and define it as
\begin{equation}
  \label{eq:partSum}
  S_n^a H \coloneqq  \sum_{k=0}^{n-1} h_k^a L_k^a \quad (a \in \D, n \in \N).
\end{equation}
Completeness of the Laguerre system gives $S_n^a H \to H$ in the $H_2(\D)$ norm as $n\to\infty$. If $H$ has the finite pole-residue structure in \cref{eq:transferfunc}, \cref{eq:reprod} gives
\begin{equation}
  \label{eq:lcoeffExplicit}
  \clag = {\langle H, L_n^a \rangle}_{H_2(\D)} = \sum_{k=0}^{\ord-1} \res_k \overline{L_n^a(\lambda_k)} = \sum_{k=0}^{\ord-1} \res_k \frac{\sqrt{1 - |a|^2}}{1 - a \overline{\lambda_k}} \overline{B^a(\lambda_k)}^n,
\end{equation}
where $\lambda_k \in \D \ (k=0,\ldots,\ord-1)$ denote the parameters of the transfer
function~\cref{eq:transferfunc}.
\subsubsection{Discrete Laguerre-Fourier Coefficients}
In our application, we are interested in estimating the Laguerre-Fourier coefficients of certain
$H_2(\D)$ functions sampled at $N \in \N$ points on $\T$. For this, we introduce the \emph{discrete} Laguerre-Fourier coefficients. Let $N \in \N$, $a \in \D$ and consider the discrete set
\begin{equation}
  \label{eq:discSet}
  \discSet \coloneqq \left\{ z \in \T \,:\, B^a(z)^N = 1 \right\}.
\end{equation}
Using
\cref{eq:Blaschfac} and the fact that Blaschke factors are self-maps on $\T$, we conclude that
$\big|\discSet\big| = N$. Since Blaschke factors are invertible, see \cref{eq:BlaschInv}, we
can easily compute the elements of $\discSet \coloneqq \{z_0,\ldots,z_{N-1} \}$ for given $a \in \D$ and $N \in \N$. Indeed, with
\(\omega\) as in \cref{eq:omega} it holds
\begin{equation}
  \label{eq:discSetExpl}
  B^a(z_k)^N = 1 \iff z_k = B^{-a}(\omega^k) \quad (k=0,\,\dots,\,N-1).
\end{equation}
By choosing an appropriate discrete measure, one can guarantee that the sampled Laguerre functions remain orthogonal with respect to the following discrete inner product of $H_2(\D)$-functions sampled on $\discSet$. Let
   \begin{align*}
       \hf &\coloneqq \begin{bmatrix} f(z_0) & \cdots & f(z_{N-1}) \end{bmatrix}^\top \in \C^N, \\
       \hg &\coloneqq \begin{bmatrix} g(z_0) & \cdots & g(z_{N-1}) \end{bmatrix}^\top \in \C^N.
   \end{align*}
Then we define
\begin{equation}
  \label{eq:discInner}
  {\big\langle \hf, \hg \big\rangle}_{\discSet} \coloneqq \sum_{k=0}^{N-1} f(z_k) \overline{g(z_k)} / \sigma(z_k)^2,
\end{equation}
where the discrete measure $\sigma$ is 
\begin{equation}
  \label{eq:discMes}
  \sigma(z)\coloneqq \frac{\sqrt{N(1 - |a|^2)}}{\left| 1 - \overline{a}z \right|} \quad (z \in \T).
\end{equation}
The following theorem makes the orthogonality of the sampled Laguerre functions in the discrete inner product explicit.
\begin{theorem}[Discrete orthogonality of sampled Laguerre functions]
  \label{thm:discorth}
  Let $N \in \N$, and let $\discSet$ be defined according to \cref{eq:discSet} with
  $\discSet \coloneqq \{z_0, z_1, \ldots z_{N-1} \} \subset \T$. Then, the vectors
  $\discLag_n$ for $n=0,\ldots,N-1$ with
  \begin{equation*}
    \discLag_n \coloneqq \begin{bmatrix}
      L_n^a(z_0) &
      L_n^a(z_1) &
      \cdots &
      L_n^a(z_{N-1})
    \end{bmatrix}^\top \in \C^N
  \end{equation*}
  form an orthogonal system with respect to the following discrete inner product on \(H_2(\D)\) given by~\cref{eq:discInner}.
\end{theorem}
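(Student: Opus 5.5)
Compute the discrete inner product of two sampled Laguerre vectors directly and reduce it to a geometric sum over the $N$-th roots of unity. For $z\in\T$ we have $|B^a(z)|=1$, so $\overline{B^a(z)}=B^a(z)^{-1}$. Hence for $n,m\in\{0,\dots,N-1\}$ and $z_k\in\discSet$,
\begin{equation*}
L_n^a(z_k)\overline{L_m^a(z_k)} = \frac{1-|a|^2}{|1-\overline{a}z_k|^2}\, B^a(z_k)^{n}\,\overline{B^a(z_k)}^{m} = \frac{1-|a|^2}{|1-\overline{a}z_k|^2}\, B^a(z_k)^{n-m}.
\end{equation*}
Dividing by $\sigma(z_k)^2 = N(1-|a|^2)/|1-\overline{a}z_k|^2$ from \cref{eq:discMes} cancels the weight exactly. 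This leaves
\begin{equation*}
{\big\langle \discLag_n,\discLag_m\big\rangle}_{\discSet} = \frac{1}{N}\sum_{k=0}^{N-1} B^a(z_k)^{n-m}.
\end{equation*}
The choice of measure is designed precisely to absorb the prefactor $\sqrt{1-|a|^2}/(1-\overline a z)$ of \cref{eq:Lag}.

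Next, use \cref{eq:discSetExpl}: $z_k = B^{-a}(\omega^k)$. Since $B^{-a}$ is the inverse of $B^a$ (\cref{eq:BlaschInv}), we get $B^a(z_k)=\omega^k$. The sum therefore becomes $\frac1N\sum_{k=0}^{N-1}\omega^{k(n-m)}$. Because $|n-m|\le N-1$, the number $\omega^{n-m}$ equals $1$ only when $n=m$. For $n\neq m$ the geometric series sums to $(\omega^{N(n-m)}-1)/(\omega^{n-m}-1)=0$. This gives ${\langle \discLag_n,\discLag_m\rangle}_{\discSet}=\delta_{nm}$, which is in fact orthonormality.

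The only point needing care is that the $z_k$ are $N$ distinct points and that $B^a(z_k)=\omega^k$ enumerates all $N$-th roots of unity exactly once. This follows from $B^a$ being a bijection of $\T$ with inverse $B^{-a}$, which the paper already asserts when it states $|\discSet|=N$. I expect no real obstacle beyond verifying the inverse identity $B^a\circ B^{-a}=\mathrm{id}$, a one-line rational computation if it is not taken from the appendix.
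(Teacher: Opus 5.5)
Your proof is correct, but it takes a different route from the paper's. The paper applies the Christoffel--Darboux identity (\cref{thm:CD}) to the \emph{rows} of $\L$, i.e.\ to sums over the Laguerre index:
\[
\sum_{j=0}^{N-1}L_j^a(z_n)\overline{L_j^a(z_k)}=\bigl(1-B^a(z_n)^N\overline{B^a(z_k)}^N\bigr)/(1-z_n\overline{z_k}).
\]
The numerator vanishes for distinct nodes of $\discSet$, and the diagonal value $\sigma(z_k)^2$ is extracted by a limit using l'H\^{o}pital's rule. This gives $\L(\L)^*=\Sigma^2$, and orthonormality of the columns is then inferred because the square matrix $\Sigma^{-1}\L$ has orthonormal rows and is therefore unitary.

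You work with the columns directly. Since $|B^a(z_k)|=1$ and $B^a(z_k)=\omega^k$, the weight cancels the prefactor of $L_n^a$ exactly, and everything collapses to $\frac{1}{N}\sum_{k=0}^{N-1}\omega^{k(n-m)}=\delta_{nm}$, i.e.\ to the unitarity of the DFT. In matrix form you are observing that $\Sigma^{-1}\L=\operatorname{diag}\bigl(|1-\overline{a}z_k|/(1-\overline{a}z_k)\bigr)_{k=0}^{N-1}\fourier ^*$ is a unimodular diagonal matrix times $\fourier ^*$. That is essentially \cref{lem:L-trans}(iv), which the paper derives only afterwards. Your argument is shorter and self-contained: it needs neither the Christoffel--Darboux identity (stated without proof in the paper), nor the limit computation, nor the rows-to-columns step.

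The paper's route buys structure and generality. It identifies $\sigma(z_k)^2=\sum_{j=0}^{N-1}|L_j^a(z_k)|^2$ as the diagonal of the reproducing kernel of $\operatorname{span}\{L_0^a,\dots,L_{N-1}^a\}$, and shows that this kernel is diagonal on $\discSet\times\discSet$. It never uses the fact that the sampled values $B^a(z_k)^n$ are Fourier exponentials. It therefore carries over verbatim to Malmquist--Takenaka systems with distinct poles, with nodes $\{z\in\T : B(z)=1\}$ for a finite Blaschke product $B$. There, your reduction to a single geometric sum is not directly available.
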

We note that this result can be found in~\cite{fridli2020}; however, for clarity, we give a short proof in \cref{sec:proof-discorth}. Using the Laguerre functions sampled over the grid $\discSet$, we can define the discrete Laguerre-Fourier coefficients~\cite{dozsa2024} describing $H_2(\D)$ functions.
\begin{definition}[\emph{Discrete Laguerre-Fourier coefficients}]
  \label{def:dlcf}
  Let $N \in \N$, $0 \leq n \leq N-1$, and $a \in \D$. Further, let $L_n^a$ denote the $n$-th
  Laguerre function corresponding to the parameter $a$. For \(H\in H_2(\D)\) and its discretization 
  \begin{equation}\label{eq:discH}
    \discH \coloneqq \begin{bmatrix}
      H(z_0)&\cdots&H(z_{N-1})
    \end{bmatrix}^\top\in\C^N,
  \end{equation}
  on the discretization set $\discSet\coloneqq \{z_0, z_1, \ldots z_{N-1} \} \subset \T$,
  we call \(\dlag{h} = \begin{bmatrix} \dlag{h}[0] & \dots & \dlag{h}[N-1]\end{bmatrix}^\top\in\C^N\) 
  \begin{equation}
    \label{eq:discCoeffs}
    \dlag{h}[n] \coloneqq \big\langle \discH, \discLag_n \big\rangle_{\discSet} \quad (n=0,\ldots,N-1),
  \end{equation}
  the \emph{discrete Laguerre-Fourier coefficients} of \(H\), where the inner product
  $\langle \cdot, \cdot \rangle_{\discSet}$ is defined according to \cref{eq:discInner}.
\end{definition}
The following lemma establishes \cref{eq:discCoeffs} as a linear transformation and introduces
several useful properties that offer notational convenience and efficient numerical computation. Somewhat uncommon for Laguerre functions, we will follow a matrix formulation. This will make notation less heavy and facilitate the numerical analysis and efficient algorithmic implementation of the deconvolution problem later on. 
\begin{lemma}[Discrete Laguerre-Fourier expansion\label{lem:L-trans}]
  Let \(H\in H_2(\D)\) and consider its discretization \cref{eq:discH} on $\discSet$. Let \(\dlag{h}\) denote the vector
  of discrete Laguerre-Fourier coefficients with \(\dlag{h}[n]\) as given in \cref{eq:discCoeffs},
  and define
  \begin{equation*}
    \Sigma \coloneqq \operatorname{diag}\left(\sigma(z_0),\,\dots,\,\sigma(z_{N-1})\right)\in\C^{N\times N}
  \end{equation*}
  with the discrete measure \(\sigma\) given by \cref{eq:discMes}. Finally, let
  \begin{equation}
    \label{eq:Lmat}
    \L \coloneqq 
    \begin{bmatrix}
      \discLag_{0}&\cdots& \discLag_{N-1}
    \end{bmatrix}=
    \begin{bmatrix}
      L_0^a(z_0)&\cdots&L_{N-1}^a(z_0)\\\vdots&&\vdots\\
      L_0^a(z_{N-1})&\cdots&L_{N-1}^a(z_{N-1})
    \end{bmatrix}\in\C^{N\times N}
  \end{equation}
  denote the matrix of sampled Laguerre functions.
  Then, the following hold:
  \begin{enumerate}[label=(\roman*)]
    \item The matrix \(\Sigma^{-1}\L\) is unitary.
    \item The discrete Laguerre-Fourier coefficients can be computed by a linear transformation
          \begin{equation}
            \label{eq:lagtrans}
            \dlag{h} = (\L)^*\Sigma^{-2}\discH.
          \end{equation}
    \item The inverse transformation is given by
          \begin{equation}
            \label{eq:SLE}
            \discH=\L\dlag{h}.
          \end{equation}
    \item It holds
          \begin{equation}
            \label{eq:lag-F}
            \L=\sqrt{\frac{N}{1-|a|^2}}\fourier ^*(I_N+\overline{a}\downshift),
          \end{equation}
          where \(\fourier \) is the Fourier matrix from \cref{eq:Fourier} and \(\downshift\) is the downshift
          matrix from \cref{eq:downshift}.
    \item The discrete measure \(\sigma\) can be expressed in terms of the Fourier frequencies:
          \begin{equation*}
            \sigma(z_k)=\sqrt{\frac{N}{1-|a|^2}}\big|1+\overline{a}\omega^k\big|\quad (k=0,\,\dots,\,N-1).
          \end{equation*}
  \end{enumerate}
\end{lemma}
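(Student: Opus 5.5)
The plan is to compute everything explicitly from the parametrization \cref{eq:discSetExpl} of the nodes. By \cref{eq:Blaschfac} with parameter $-a$ we have $z_k=B^{-a}(\omega^k)=(\omega^k+a)/(1+\overline{a}\omega^k)$, where $1+\overline{a}\omega^k\neq 0$ because $|a|<1$. The whole lemma rests on two elementary identities. The first is $B^a(z_k)=\omega^k$, which is just \cref{eq:discSetExpl}. The second is
\begin{equation*}
  1-\overline{a}z_k=\frac{1+\overline{a}\omega^k-\overline{a}\omega^k-|a|^2}{1+\overline{a}\omega^k}=\frac{1-|a|^2}{1+\overline{a}\omega^k}.
\end{equation*}
Inserting both into \cref{eq:Lag} gives the closed form
\begin{equation*}
  L_n^a(z_k)=\frac{1+\overline{a}\omega^k}{\sqrt{1-|a|^2}}\,\omega^{kn}\qquad(k,n=0,\dots,N-1).
\end{equation*}
I would establish this first and then derive (iv) and (v) from it before turning to (i)--(iii).

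For (iv), I compare columns. The $n$-th column of $\fourier^*$ has entries $\omega^{kn}/\sqrt{N}$. Since $\downshift e_n=e_{n+1}$ with indices taken modulo $N$, the $n$-th column of $\fourier^*(I_N+\overline{a}\downshift)$ has entries $\big(\omega^{kn}+\overline{a}\omega^{k(n+1)}\big)/\sqrt{N}$. The only boundary case is $n=N-1$, where $\downshift e_{N-1}=e_0$; it causes no trouble because $\omega^{kN}=1$. This column equals $\omega^{kn}(1+\overline{a}\omega^k)/\sqrt{N}$, and multiplying by $\sqrt{N/(1-|a|^2)}$ reproduces the closed form for $L_n^a(z_k)$.

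For (v), I substitute the identity for $1-\overline{a}z_k$ into \cref{eq:discMes}. This gives $\sigma(z_k)=\sqrt{N(1-|a|^2)}\,|1+\overline{a}\omega^k|/(1-|a|^2)$, which is the claimed expression.

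For (i), I divide the $k$-th row of $\L$ by $\sigma(z_k)$. This yields $[\Sigma^{-1}\L]_{kn}=\phi_k\,\omega^{kn}/\sqrt{N}$ with the unimodular factor $\phi_k\coloneqq(1+\overline{a}\omega^k)/|1+\overline{a}\omega^k|$. Hence $\Sigma^{-1}\L=\operatorname{diag}(\phi_0,\dots,\phi_{N-1})\fourier^*$, a product of a unitary diagonal matrix and a unitary matrix, so it is unitary. This also reproves \cref{thm:discorth}.

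Item (ii) is a direct rewriting of \cref{eq:discCoeffs} using \cref{eq:discInner}. We have $\dlag{h}[n]=\sum_k H(z_k)\overline{L_n^a(z_k)}\,\sigma(z_k)^{-2}$. Since $\Sigma$ is real, this is exactly the $n$-th entry of $\L^*\Sigma^{-2}\discH$.

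For (iii), unitarity from (i) gives $\L^*\Sigma^{-2}\L=(\Sigma^{-1}\L)^*(\Sigma^{-1}\L)=I_N$. So $\L$ is invertible with $\L^{-1}=\L^*\Sigma^{-2}$, and applying $\L$ to (ii) yields $\discH=\L\dlag{h}$.

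I do not expect a real obstacle anywhere. The only points requiring care are getting the algebra for $1-\overline{a}z_k$ right, since the sign of $a$ in $B^{-a}$ matters, and checking the wrap-around column in (iv).
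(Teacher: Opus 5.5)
Your proof is correct. For (ii)--(v) it is essentially the paper's computation; the real difference is how you obtain (i).

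The paper proves (i) first by invoking \cref{thm:discorth}. That theorem's proof uses the Christoffel--Darboux formula \cref{thm:CD} and a l'H\^opital limit to establish \(\L(\L)^*=\Sigma^2\). Only afterwards does the paper derive the closed form of \(L_n^a(z_k)\) to get (iv) and (v). For this it uses the Szeg\H{o}-kernel identity \cref{eq:elratAndBlasch} to evaluate \(1-\overline{a}z_k\), and the relation \(\operatorname{diag}(\omega^0,\dots,\omega^{N-1})\fourier^*=\fourier^*\downshift\) from \cref{lem:downshift}, which is exactly your column-by-column check.

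You reverse the order and read (i) directly off the factorization \(\Sigma^{-1}\L=\operatorname{diag}(\phi_0,\dots,\phi_{N-1})\fourier^*\) with unimodular \(\phi_k\).

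\emph{What your route buys.} It is shorter, self-contained, and avoids the limit argument altogether. It also shows why unitarity holds: up to diagonal phases, \(\Sigma^{-1}\L\) is just the inverse DFT. As a result, \cref{thm:discorth} becomes a corollary of the lemma rather than an input to it.

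\emph{What the paper's route buys.} The Christoffel--Darboux proof of discrete orthogonality does not rely on the Laguerre system being generated by a single Blaschke factor. It therefore carries over to rational systems of Malmquist--Takenaka type with several distinct poles. There the nodes are no longer a M\"obius image of the roots of unity, and no such DFT factorization is available.
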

\begin{proof}
  We prove the statements in order:
  \begin{enumerate}[label=(\roman*)]
    \item The identity \(\Sigma^{-1}\L(\L)^*\Sigma^{-*}=I_N\) follows immediately from
          \cref{thm:discorth} and the fact that \(\Sigma\) is real.
    \item Considering the \(n\)-th entry of \((\L)^*\Sigma^{-2}\discH\), we obtain
          \begin{equation*}
            \sum_{k=0}^{N-1}H(z_k)\overline{L_n^a(z_k)} / \sigma(z_k)^2=\big\langle \discH, \discLag_n \big\rangle_{\discSet},
          \end{equation*}
          which is the \(n\)-th discrete Laguerre-Fourier coefficient \(\dlag{h}[n]\) by definition.
    \item Left-multiplying \cref{eq:lagtrans} with $\L$ and using the identity $\L(\L)^\ast = \Sigma^2$ from (i) leads to the desired result.
    \item Due to the definition of \(\omega\) and \(z_k\) in \cref{eq:omega} and
          \cref{eq:discSetExpl}, respectively, it holds that
          \(B^a(z_k)^n=B^a(B^{-a}(\omega^k))^n=\omega^{kn}\). By \cref{eq:Lag}, this implies that
          \begin{equation*}
            L_n^a(z_k)=\frac{\sqrt{1-|a|^2}}{1-\overline{a}z_k}\omega^{kn}.
          \end{equation*}
          With the above, and recalling that, by \cref{eq:Fourier}, it holds
          \([\fourier ]_{ij}=\overline{\omega}^{ij}/\sqrt{N}\), it is easily verified that
          \begin{equation}
            \label{eq:L-intermediate}
            \L=\sqrt{N}\sqrt{1-|a|^2}\operatorname{diag}(1-\overline{a}z_0,\,\dots,\,1-\overline{a}z_{N-1})^{-1}\fourier ^*.
          \end{equation}
          Let us now have a closer look at the term \(1-\overline{a}z_k\). Using the identity
          \cref{eq:elratAndBlasch}, an algebraic manipulation yields
          \begin{equation}
            \label{eq:zk-to-omegak}
            \begin{aligned}
              1-\overline{a}z_k=&1-\overline{a}B^a(B^{-a}(z_k)) = 1-\overline{a}B^{-a}(B^{a}(z_k)) \\
              \overset{\cref{eq:elratAndBlasch}}{=}&(1 - |a|^2) \xi(B^a(z_k), -a) = \frac{1-|a|^2}{1+\overline{a}\omega^k}.
            \end{aligned}
          \end{equation}
          Thus, \cref{eq:L-intermediate} is equivalent to
          \begin{align*}
            \L&=\sqrt{\frac{N}{1-|a|^2}}\operatorname{diag}\left(1+\overline{a}\omega^0,\,\dots,\,1+\overline{a}\omega^{N-1}\right)\fourier ^*\\
                   &=\sqrt{\frac{N}{1-|a|^2}}\left(I_N+\overline{a}\operatorname{diag}\left(\omega^0,\,\dots,\,\omega^{N-1}\right)\right)\fourier ^*.
          \end{align*}
          Finally, by \cref{lem:downshift}, it holds that
          \(\operatorname{diag}\left(\omega^0,\,\dots,\,\omega^{N-1}\right)\fourier ^*=\fourier ^*\downshift\) which
          proves the statement.
    \item The statement follows directly by plugging \cref{eq:zk-to-omegak} into \cref{eq:discMes}.
  \end{enumerate}
\end{proof}
Finally, we clarify the relationship between classical Laguerre-Fourier~\cref{eq:lcoeff} and
discrete orthogonal Laguerre-Fourier~\cref{eq:discCoeffs} coefficients. Let
$H(z) = (1 - \overline{\lambda}z)^{-1} \ (z \in \overline{\D}, \lambda \in \D)$. By
\cref{eq:reprod,eq:lcoeffExplicit}, the (continuous) Laguerre-Fourier coefficients defined in
\cref{eq:lcoeff} satisfy
\begin{equation*}
  \clag=\overline{L_n^a (\lambda)} \quad (a \in \D, n \in \N).
\end{equation*}
In contrast, the discrete Laguerre-Fourier coefficients of such an $H$ are given by~\cite{dozsa2024}
\begin{equation}
  \label{eq:discCoeffElrat}
  \dlag{h}[n]=\frac{\clag}{1 - \overline{B^a(\lambda)}^N}.
\end{equation}
Since $B^a : \D \to \D$ for any $a \in \D$, we have $|B^a(\lambda)| < 1$ in the denominator in
\cref{eq:discCoeffElrat}. Consequently, for each fixed $n$, $\dlag{h}[n] \to\clag$ as $N\to\infty$. Finally, we note
that if $N$ is large, we can compute the discretization set $\discSet$ efficiently using
\cref{eq:discSetExpl}. In addition, the components of the matrix $\L$ can be computed in a
numerically safe manner using the recursion formula for Laguerre functions $L_{n+1}^a(z) = B^a(z) \cdot L_n^a(z)$. However, as will be shown in the following, we can circumvent the construction of \(\L\) altogether in the proposed algorithms. 

%% file: 03_method.tex
\section{Methods}
\label{sec:methods}
\subsection{Circulant Reformulation of \ac{etfe}}
\label{sec:problem}
By an appropriate column completion of the Toeplitz matrix
\(\toeplitz \in \C^{N\times (N-M+1)}\) in \cref{eq:T-conv}, we obtain
\begin{equation}
  \label{eq:C-conv}
  u\ast h=
  \underbrace{
  \begin{bmatrix}
      u_0&0&\cdots&\cdots&0&u_{M-1}&\cdots&u_1\\
      u_1&u_0&\ddots&&\vdots&\ddots&\ddots&\vdots\\
      \vdots&&\ddots&\ddots&\vdots&&\ddots&u_{M-1}\\
      u_{M-1}&u_{M-2}&\cdots&u_0&0&\cdots&\cdots&0\\
      0&u_{M-1}&\ddots&&\ddots&\ddots&&\vdots\\
      \vdots&\ddots&\ddots&\ddots&&\ddots&\ddots&\vdots\\
      \vdots&&\ddots&\ddots&\ddots&&\ddots&0\\
      0&\cdots&\cdots&0&u_{M-1}&u_{M-2}&\cdots&u_0
    \end{bmatrix}
    }_{ \eqqcolon \tilde{\toeplitz}\in\C^{N\times N}}
  \underbrace{\begin{bmatrix}
    h_0\\\vdots\\h_{N-M}\\0\\\vdots\\0
  \end{bmatrix}}_{ \eqqcolon \tilde{h}\in\C^N}=y\in\C^N.
\end{equation}
Note that \cref{eq:C-conv} is equivalent to \cref{eq:T-conv} in the sense that \(\tilde{h}\), defined accordingly, is a
solution to \cref{eq:C-conv} if and only if \(h\in\C^{N-M+1}\) is a solution to \cref{eq:T-conv}. The completion of \(\toeplitz\)
yields the square circulant matrix \(\tilde{\toeplitz}\in\C^{N\times N}\). The main appeal of this reformulation is that
\cref{eq:C-conv} can be solved efficiently with \ac{fft} by \cite[Algorithm 4.8.1]{golub2013}, as outlined in
\cref{sec:circulant}.

In \cref{alg:etfe}, the zero-padding in line \(1\) performs the column completion, and lines \(2-4\)
implement \cref{eq:etfe}. The final step in line \(5\) extracts the solution to
match the initial dimensions in \cref{eq:T-conv}\footnote{It should be noted that the equivalence of approaches only holds in the case where the discarded trailing \(M-1\) entries of the solution \(\tilde{h}\) in \cref{alg:etfe} are equal to zero, as they are in \cref{eq:C-conv}. Generally, this cannot be guaranteed and non-zero trailing entries can lead to degradation of the solution. This is a drawback of the \ac{etfe} approach, i.e., solving the \emph{linear deconvolution} problem by a \emph{circular deconvolution} problem. We refer the reader to \cite[Sec. 5.6]{hansen2002} for a detailed description of this phenomenon.}. Numerically computing
\(\tilde{h}\) in \cref{eq:C-conv} via the eigenvalue decomposition of circulant matrices \cref{eq:etfe} is equivalent to
the prototype \ac{etfe} algorithm. A conceptual overview of the described technique is depicted in \cref{fig:etfe}. We
will use a similar schematic portrayal to visualize the approach of our proposed method in the following section.

As mentioned in \cref{sec:intro}, \ac{etfe} works well only when the input signal does not
approximately vanish at the Fourier frequencies \(\omega^k\), \(k=0,\dots,N-1\), for \(\omega\) as in
\cref{eq:omega}. By \cref{lem:C-cond}, the spectral condition number
\(\kappa_2(\tilde{\toeplitz})\) of \cref{eq:C-conv} deteriorates when \(U(\omega^k)\approx0\) for some
\(k\).
The following assumption formalizes this setting.
\begin{assumption}[Input with vanishing frequency information]
  The input signal \(\discu\in\C^N\) is such that \cref{eq:C-conv} is ill-conditioned, i.e.,
  \begin{equation*}
    \frac{\max_{k=0,\dots,N-1}\big|{[\fourier \discu]}_k\big|}{\min_{k=0,\dots,N-1}\big|{[\fourier \discu]}_k\big|}=\frac{\max_{k=0,\dots,N-1}\big|U(\omega^k)\big|}{\min_{k=0,\dots,N-1}\big|U(\omega^k)\big|}\gg1.
  \end{equation*}
  \label{ass:vanishing}
\end{assumption}
\begin{figure}[tb]
  \centering \includegraphics{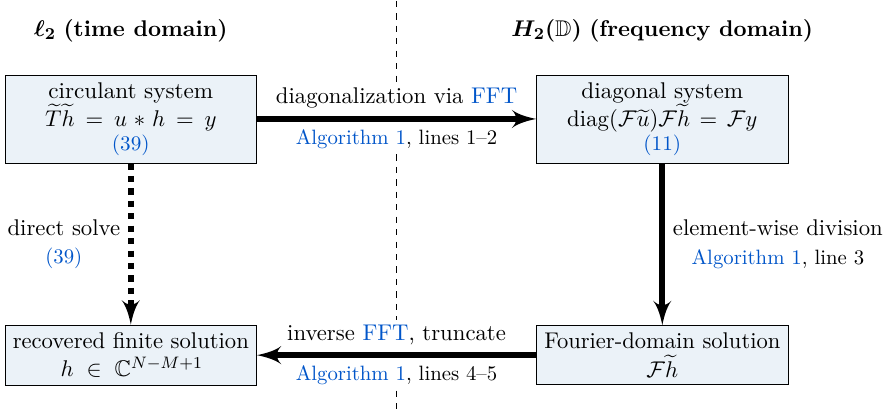}
  \caption{Schematic depiction of the \ac{etfe} algorithm as a commutative diagram.\label{fig:etfe}}
\end{figure}
\subsection{Proposed Method}
\label{sec:proposed-method}
If \(\tilde{u}\in\C^N\) satisfies \cref{ass:vanishing}, Fourier-domain division at the frequencies
\(\omega^k\), \(k=0,\ldots,N-1\), is numerically unreliable. We instead sample the truncated
\(\cZ\)-transform in \cref{eq:Ztrans} on the set \(\discSet\) from \cref{eq:discSet}, for some
\(a\in\D\), at frequencies \(z_k\in\discSet\), \(k=0,\ldots,N-1\). The parameter \(a\) changes the
sampling locations but does not guarantee that all sampled values of \(U\) are nonzero. In the following, denote
\begin{equation*}
  \discu_k \coloneqq \cZ\big[\tilde{u}\big](z_k), \quad \disch_k \coloneqq \cZ\big[\tilde{h}\big](z_k), \quad \discy_k \coloneqq \cZ[y](z_k)  \quad (k=0,\ldots,N-1)
\end{equation*}
as the truncated \(\cZ\)-transforms of the zero-padded input \(\tilde{u}\), \ac{ir} \(\tilde{h}\), and output \(y\),
respectively, sampled over \(\discSet\). The truncated \(\cZ\)-transform in \cref{eq:Ztrans} can also be written as a
matrix-vector multiplication with a Vandermonde matrix of the \(z_k\). Unlike \(\fourier\), a similarity
transformation with this Vandermonde matrix no longer diagonalizes \(\tilde{\toeplitz}\) in \cref{eq:C-conv}. This
means that we no longer make use of the rich circulant structure. Remarkably, we can employ the Laguerre-Fourier
transformation introduced in \cref{sec:lag} on the \(\cZ\)-transformed input, i.e., \(\L\Sigma^{-2}\discu\), to retain
the circulant structure of our problem. The following theorem, which is one of the two main results of this work,
reveals this connection in detail.
\begin{theorem}[Structure of the transformed problem\label{thm:shifted-lag}]
  Using the notation of the previous section, denote
  \begin{align*}
    \dlag{u}=(\L)^*\Sigma^{-2}\discu,\quad
    \dlag{y}=(\L)^*\Sigma^{-2}\discy,\quad
    \dlag{h}=(\L)^*\Sigma^{-2}\disch,
  \end{align*}
  as the discrete Laguerre-Fourier coefficients of \(\discu\), \(\discy\), and \(\disch\), respectively. Then, it holds
  \begin{equation}
    \label{eq:shifted-lag}
    \left((I_N+\overline{a}\downshift)\dlag{u}\right)\ast\dlag{h}=\sqrt{1-|a|^2}\dlag{y}.
  \end{equation}
\end{theorem}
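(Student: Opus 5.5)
The plan is to move everything to the sample domain on \(\discSet\), where convolution becomes pointwise multiplication. There I will use the factorization of \(\L\) from \cref{lem:L-trans}\,(iv), the diagonalization in \cref{lem:downshift}, and the convolution theorem for the Fourier matrix.

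\emph{Step 1 (pointwise product on \(\discSet\)).} By \cref{eq:T-conv}, \(y\) is the \emph{linear} convolution of \(u\in\C^M\) with \(h\in\C^{N-M+1}\). Its length is exactly \(N\), so no wrap-around occurs. Hence the truncated \(\cZ\)-transforms satisfy the polynomial identity \(\cZ[y](z)=\cZ[\tilde u](z)\,\cZ[\tilde h](z)\) for every \(z\in\C\), not merely at the \(N\)-th roots of unity. Evaluating at \(z_k\in\discSet\) gives \(\discy_k=\discu_k\disch_k\) for \(k=0,\dots,N-1\), i.e.\ \(\discy=\operatorname{diag}(\discu)\disch\). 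This is the point that matters: the identity has to hold at the non-equidistant nodes \(z_k\), and it does only because the finite-support model involves no circular aliasing.

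\emph{Step 2 (pass to Laguerre coefficients).} By \cref{lem:L-trans}\,(iii), \(\discu=\L\dlag{u}\), \(\discy=\L\dlag{y}\) and \(\disch=\L\dlag{h}\). Substituting \(\L=c\,\fourier^*(I_N+\overline a\downshift)\) from \cref{lem:L-trans}\,(iv), with \(c=\sqrt{N/(1-|a|^2)}\), turns Step 1 into
\begin{equation*}
  c\,\fourier^*(I_N+\overline a\downshift)\dlag{y}
  =\operatorname{diag}\!\big(c\,\fourier^* v\big)\,c\,\fourier^* w,
  \qquad v\coloneqq(I_N+\overline a\downshift)\dlag{u},\quad w\coloneqq(I_N+\overline a\downshift)\dlag{h}.
\end{equation*}

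\emph{Step 3 (convolution theorem and circulant commutation).} A direct computation from \cref{eq:Fourier} gives the identity \(\operatorname{diag}(\fourier^*v)\fourier^*w=N^{-1/2}\fourier^*(v\ast w)\) for circular convolution. Circular convolution with a fixed vector is multiplication by a circulant matrix. Since \(I_N+\overline a\downshift\) is circulant, \cref{prop:circulant} lets it commute, so
\begin{equation*}
  v\ast\big((I_N+\overline a\downshift)\dlag{h}\big)=(I_N+\overline a\downshift)\big(v\ast\dlag{h}\big).
\end{equation*}
This yields \(c\,\fourier^*(I_N+\overline a\downshift)\dlag{y}=c^2N^{-1/2}\fourier^*(I_N+\overline a\downshift)(v\ast\dlag{h})\).

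\emph{Step 4 (cancellation).} Cancel the unitary \(\fourier^*\). Then cancel \(I_N+\overline a\downshift\), which is invertible because by \cref{lem:downshift} its eigenvalues are \(1+\overline a\omega^k\neq0\) for \(|a|<1\). This leaves \(\dlag{y}=\tfrac{c}{\sqrt N}\,v\ast\dlag{h}=(1-|a|^2)^{-1/2}\,v\ast\dlag{h}\), which is \cref{eq:shifted-lag}.

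I expect the main obstacle to be Step 1, namely justifying the exact pointwise factorization at the nodes \(z_k\). The circulant reformulation \cref{eq:C-conv} alone would only give it at the roots of unity. The remaining work is bookkeeping of the normalization constants, which I would double-check against the \(\sqrt N\) convention in \cref{eq:Fourier}.
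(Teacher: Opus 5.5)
Your proof is correct, but it is organized differently from the paper's. The paper works with operators. It rewrites $\operatorname{diag}(\discu)\disch=\discy$ as $\Gamma\dlag{h}=\dlag{y}$ with $\Gamma=(\L)^*\Sigma^{-2}\operatorname{diag}(\discu)\L$. It then applies \cref{lem:L-trans}(iv) on both sides of $\operatorname{diag}(\discu)$, which exhibits $\Gamma$ as a product of the circulant matrices $I_N+a\downshift^\top$, $\fourier\Sigma^{-2}\operatorname{diag}(\discu)\fourier^*$ and $I_N+\overline{a}\downshift$. Finally it identifies the first column of $\Gamma$: the first column of $\sqrt{N}\fourier^*$ consists of ones, so the first column of $\sqrt{N}(\L)^*\Sigma^{-2}\operatorname{diag}(\discu)\fourier^*$ is $(\L)^*\Sigma^{-2}\discu=\dlag{u}$.

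You work with vectors instead. You substitute \cref{lem:L-trans}(iii)--(iv) into the pointwise product, apply the DFT convolution theorem, commute the circulant factor, and cancel $I_N+\overline{a}\downshift$. That last step relies on the invertibility of $I_N+\overline{a}\downshift$, which is harmless but never needed in the paper's route.

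Your argument is shorter and never manipulates $\Sigma^{-2}$ directly. It also exposes the mechanism: $(I_N+\overline{a}\downshift)\dlag{u}=c^{-1}\fourier\discu$ is just the ordinary DFT of the warped samples. This is also what makes the single-FFT \cref{alg:laguerre-fourier} possible. The paper's argument, in exchange, delivers the explicit circulant matrix $\Gamma$ and its first column \cref{eq:1stColGamma}. These are what \cref{alg:proto-laguerre-fourier} and the conditioning bound \cref{thm:condGamma} are built on.

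Your Step 1 is also more careful than the paper, which simply invokes \cref{eq:freqdom}. You observe that the linear convolution in \cref{eq:T-conv} has no wrap-around, so the truncated transforms satisfy an exact polynomial identity that holds at the non-equidistant nodes $z_k$. That is exactly the justification this step needs.
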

\begin{proof}
  By \cref{eq:freqdom}, it clearly holds that \(U(z_k)H(z_k)=Y(z_k)\) for \(z_k\in\discSet\) and so
  \begin{align}
    &&\operatorname{diag}(\discu)\disch&=\discy\nonumber\\
    &\overset{\mathclap{\mathrm{\cref{lem:L-trans}(i)}}}{\iff}&(\L
                                                                (\L)^*\Sigma^{-2})\operatorname{diag}(\discu)(\L(\L)^*\Sigma^{-2})\disch&=\discy\nonumber\\
    &\iff&(\L)^*\Sigma^{-2}\operatorname{diag}(\discu)\L\cdot(\L)^*\Sigma^{-2}\disch&=(\L)^*\Sigma^{-2}\discy\nonumber\\
    &\iff&\underbrace{(\L)^*\Sigma^{-2}\operatorname{diag}(\discu)\L}_{\coloneqq\Gamma\in\C^{N\times N}}\dlag{h}&=\dlag{y}.\label{eq:Gamma}
  \end{align}
  Now, we take a closer look at \(\Gamma\) defined in \cref{eq:Gamma}. By \cref{lem:L-trans}(iv), it holds
  \begin{equation}
    \label{eq:Gamma1}
    \Gamma=\frac{1}{\sqrt{1-|a|^2}}\sqrt{N}(\L)^*\Sigma^{-2}\operatorname{diag}(\discu)\fourier ^*
    (I_N+\overline{a}\downshift).
  \end{equation}
  Denote
  \begin{equation}
      \label{eq:Phi1}
      \Phi\coloneqq\sqrt{N}(\L)^*\Sigma^{-2}\operatorname{diag}(\discu)\fourier ^*
  \end{equation}
  such that \(\Gamma=(1-|a|^2)^{-1/2}\Phi(I_N+\overline{a}\downshift)\). Applying \cref{lem:L-trans}(iv) again to \(\Phi\) yields
  \begin{equation}
    \label{eq:Phi}
    \Phi=\frac{N}{\sqrt{1-|a|^2}}(I+a\downshift^\top) \fourier \Sigma^{-2}\operatorname{diag}(\discu)\fourier^*.
  \end{equation}
  According to \cref{thm:circulant}, the matrix \(\fourier \Sigma^{-2}\operatorname{diag}(\discu)\fourier ^*\) in \cref{eq:Phi} is a
  circulant matrix and, by \cref{prop:circulant}(i), \((I_N+a\downshift^\top)\) is circulant as well. It follows that, as a
  product of circulant matrices, \(\Phi\) is circulant itself. Therefore, \(\Phi\) is determined by its first column
  \(\phi\coloneqq[\phi_0~\cdots~\phi_{N-1}]^\top\in\C^N\). Revisiting \cref{eq:Phi1}, the first
  column of \(\sqrt{N}\fourier ^*\) contains only ones. Hence,
  \begin{equation*}
    \sqrt{N}\operatorname{diag}(\discu)\fourier ^*
    = \begin{bmatrix}\discu & \ast & \cdots & \ast \end{bmatrix}
    \in \C^{N \times N}.
  \end{equation*}
  By \cref{eq:lagtrans}, the entries \(\phi_n\), \(n=0,\,\dots,\,N-1\), are given by
  \begin{equation}
    \label{eq:Phi1stCol}
    \phi_n=\sum_{k=0}^{N-1}\frac{U(z_k)\overline{L_n^a(z_k)}}{\sigma(z_k)^2}=\big\langle \discU, \discLag_n \big\rangle_{\discSet}=\dlag{u}[n],
  \end{equation}
  where $\discU$ denotes the discretization of $U$ on $\discSet$.
  The entries $\phi_n$, $n=0,\ldots,N-1$ are the discrete Laguerre-Fourier coefficients of the excitation by definition, so \(\phi=\dlag{u}\). By the
  same argument, \(\Gamma\) in \cref{eq:Gamma1} is circulant and determined by its first column, which is given by
  \((1-|a|^2)^{-1/2}\cdot(I_N+\overline{a}\downshift)\cdot\dlag{u}\). Substituting this into \cref{eq:Gamma} and reading
  the matrix multiplication as a convolution, we obtain \(\cref{eq:shifted-lag}\).
\end{proof}
The transformed equation \cref{eq:shifted-lag} is again a convolution equation. Thus, \(\dlag{h}\) can be computed
by solving a system with the same structure as \cref{eq:T-conv}, with
\((I_N+\overline{a}\downshift)\dlag{u}\) as input signal and
\(\sqrt{1-|a|^2}\dlag{y}\) as the right-hand side.
Assuming that \((I_N+\overline{a}\downshift)\dlag{u}\) now does not lead to ill-conditioning, we can solve the transformed
problem with classical \ac{etfe} to obtain the Laguerre-Fourier coefficients \(\dlag{h}\) of our desired solution \(h\).

We next investigate the numerical implications of \cref{thm:shifted-lag} and derive two
algorithms for recovering \(\dlag{h}\). The recovery of \(h\) from \(\dlag{h}\) follows in
\cref{sec:irrec}.
\subsubsection{Phase 1: Laguerre-Fourier Coefficient Recovery}
\label{sec:compute}
The key result of \cref{thm:shifted-lag} is that the matrix $\Gamma$ in \cref{eq:Gamma} is circulant. We can
therefore solve it with \ac{etfe}, i.e., \cref{alg:etfe}, using the Laguerre-Fourier coefficients of the response
signal \(\dlag{y}\) as right-hand side and the first column of \(\Gamma\) as ``excitation''. By
\cref{eq:Gamma1,eq:Phi1stCol}, the first column  of \(\Gamma\) is given by
\begin{equation}
  \label{eq:1stColGamma}
  \frac{1}{\sqrt{1 - |a|^2}}  (I_N+\overline{a}\downshift)\dlag{u} = \frac{1}{\sqrt{1 - |a|^2}}  \begin{bmatrix}
    \dlag{u}[0]&+&\overline{a}\dlag{u}[N-1] \\
    \dlag{u}[1]&+&\overline{a}\dlag{u}[0] \\
               &\vdots& \\
    \dlag{u}[N-1]&+&\overline{a}\dlag{u}[N-2]
  \end{bmatrix}.
\end{equation}
\begin{theorem}[Conditioning of $\Gamma$]
  \label{thm:condGamma}
  Let $\Gamma \in \C^{N \times N}$ be the circulant matrix defined according to \cref{eq:Gamma} whose first column
  \(\gamma\) is given by \cref{eq:1stColGamma}. If
  \begin{equation}
    \label{eq:wellCondCond}
    |\dlag{u}[0]| > \discs \coloneqq \sum_{k=1}^{N-1} |\dlag{u}[k]|,
  \end{equation}
  then
  \begin{equation*}
    \kappa_2(\Gamma) \leq \frac{1 + |a|}{1 - |a|} \cdot \frac{|\dlag{u}[0]| + \discs}{|\dlag{u}[0]| - \discs},
  \end{equation*}
  where $\dlag{u}[n] = \big\langle \discU, \discLag_n \big\rangle_{\discSet}$.
\end{theorem}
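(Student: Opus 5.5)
Since $\Gamma$ is circulant with first column $\gamma$ from \cref{eq:1stColGamma}, \cref{lem:C-cond} reduces the task to bounding the eigenvalues $\mu_k=\sqrt{N}[\fourier\gamma]_k$ from above and below. The plan is to factor these eigenvalues into a part coming from $(I_N+\overline{a}\downshift)$ and a part coming from $\dlag{u}$, and to bound each factor separately.

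First, by \cref{thm:circulant} and \cref{lem:downshift}, the circulant matrix with first column $\dlag{u}$ equals $\sum_n \dlag{u}[n]\downshift^n$. It therefore has eigenvalues $\nu_k=\sum_{n=0}^{N-1}\dlag{u}[n]\omega^{kn}$, using the diagonalization $\fourier^*\downshift\fourier=\operatorname{diag}(\omega^k)$. The factor $I_N+\overline{a}\downshift$ is simultaneously diagonalized, with eigenvalues $1+\overline{a}\omega^k$. Since $\Gamma=(1-|a|^2)^{-1/2}(I_N+\overline{a}\downshift)\,C(\dlag{u})$ (a product of commuting circulants, \cref{prop:circulant}), the eigenvalues of $\Gamma$ are
\begin{equation*}
  \mu_k=\frac{(1+\overline{a}\omega^k)\,\nu_k}{\sqrt{1-|a|^2}},\qquad k=0,\dots,N-1.
\end{equation*}
I would keep the ordering and normalization consistent with the convention $\mu=\sqrt{N}\fourier c$ of \cref{thm:circulant}. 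Whichever convention is used, the moduli are just a permutation of $\{|1+\overline{a}\omega^k|\,|\nu_k|\}$ (possibly with $\omega^k$ replaced by $\overline{\omega}^k$). The scalar prefactor cancels in the ratio of \cref{lem:C-cond}.

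Next, I bound the two factors. For the Blaschke-type factor, $1-|a|\le|1+\overline{a}\omega^k|\le1+|a|$ by the triangle inequality. For the input factor, write $\nu_k=\dlag{u}[0]+\sum_{n\ge1}\dlag{u}[n]\omega^{kn}$, so that $|\dlag{u}[0]|-\discs\le|\nu_k|\le|\dlag{u}[0]|+\discs$. Under \cref{eq:wellCondCond} the lower bound is positive, so $\Gamma$ is nonsingular. Then
\begin{equation*}
  \kappa_2(\Gamma)=\frac{\max_k|\mu_k|}{\min_k|\mu_k|}\le\frac{(1+|a|)(|\dlag{u}[0]|+\discs)}{(1-|a|)(|\dlag{u}[0]|-\discs)},
\end{equation*}
which is the claim. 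The bound holds because the maximum of a product is at most the product of the maxima, and the minimum of a product is at least the product of the minima.

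The only delicate step is the first one: confirming rigorously that the spectrum of $\Gamma$ factors as the product $(1+\overline{a}\omega^k)\nu_k$ indexwise. This requires using the same Fourier diagonalization for both circulant factors and tracking whether $\omega^k$ or $\overline{\omega}^k$ appears. Because both estimates are uniform in $k$ and invariant under conjugating $\omega$, a mismatch in conventions cannot affect the final bound. So I would check this factorization briefly via \cref{prop:circulant}(i) and \cref{lem:downshift} and not belabor it.
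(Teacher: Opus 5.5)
Your proof is correct and follows the paper's argument: both read off the eigenvalues of the circulant $\Gamma$ from its first column, bound them above and below by $1\pm|a|$ and $|\dlag{u}[0]|\pm\discs$ using the (reverse) triangle inequality, and take the ratio. Your explicit factorization $\mu_k=(1+\overline{a}\omega^k)\nu_k/\sqrt{1-|a|^2}$ is what actually justifies the paper's lower bound. The paper bounds each coefficient $\omega^{jk}+\overline{a}\omega^{j(k+1)}$ separately, and such termwise bounds alone would only give $(1-|a|)|\dlag{u}[0]|-(1+|a|)\discs$ unless one notes that all these coefficients share the common modulus $|1+\overline{a}\omega^j|$.
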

\begin{proof}
  The proof is given in~\cref{sec:proof-Gamma}.
\end{proof}
\cref{thm:condGamma} shows that the conditioning of the problem in \cref{eq:Gamma} mostly depends on how quickly
$|\dlag{u}[n]|$ tends to zero as $n$ increases. It also shows that choosing an $a \in \D$ close to the boundary makes the
problem ill-conditioned. These results show that the proposed method is expected to work well when
condition~\cref{eq:wellCondCond} is satisfied and $|a| \ll 1$. In such a case, \cref{eq:Gamma} can be solved
directly in a numerically safe manner, even if the input $U(z)$ vanishes in certain sampling points (which would cause
problems for the original \ac{etfe} method in~\cref{alg:etfe}). We note that exploiting the circulant nature of $\Gamma$
and \cref{eq:etfe}, \cref{eq:Gamma} can be solved without computing $\Gamma$ directly. Indeed, we have
\begin{equation}
  \label{eq:CircSys}
  \Gamma \dlag{h}=\dlag{y}\iff
  \dlag{h} = \frac{1}{\sqrt{N}}\fourier ^*\operatorname{diag}(\fourier  \gamma)^{-1}\fourier  \dlag{y},
\end{equation}
where $\gamma$ is the first column of $\Gamma$ in~\cref{eq:1stColGamma}, giving rise to
\cref{alg:proto-laguerre-fourier}.
%
%
%
%
The following prototype computes the Laguerre-Fourier coefficients.
\begin{algorithm}[tb]
\caption{\textsc{ComputeLaguerreFourierCoeffs\_Prototype}($u,y,a$)\label{alg:proto-laguerre-fourier}}
  \begin{algorithmic}[1]
    \Require{Excitation signal \(u\in\C^M\) and response \(y\in\C^N\), Laguerre parameter \(a\in\D\).}
    \Ensure{Discrete Laguerre coefficients \(\dlag{h}\in\C^{N}.\)} \State
    \(\tilde{u}\gets\begin{bmatrix}u_0&\cdots&u_{M-1}&0&\cdots&0\end{bmatrix}^\top\in\C^N\) \Comment{Zero-pad the excitation
      signal.}
      \State{$\zeta \gets \begin{bmatrix} B^{-a}\big(\omega^0\big) & \cdots & B^{-a}\big(\omega^{N-1}\big) \end{bmatrix}^\top \in (\discSet)^N$ as in \cref{eq:discSetExpl}}
      \State{\(\discu \gets \cZ[\tilde{u}](\zeta)\), \(\discy \gets \cZ[y](\zeta)\)} \Comment{Element-wise truncated \(\cZ\)-transforms \cref{eq:Ztrans}.} 
      \State \(\dlag{u}\gets (\L)^{*} \Sigma^{-2} \discu\),
      \(\dlag{y}\gets (\L)^{*} \Sigma^{-2} \discy\) \Comment{Compute discrete Laguerre coefficients.} 
      \For{\(k=0,\,\dots,\,N-1\)}\Comment{Generate the first column of $\Gamma$} 
      \State \(\gamma_k \gets (1 - |a|^2)^{-1/2}\cdot(\dlag{u}[k] + \overline{a} \dlag{u}[k-1 \ \textrm{mod} \ N]) \)
      \EndFor
    \State \(\dlag{h}\gets \textsc{ETFE}(\gamma, \dlag{y})\)
    \Comment{Apply \cref{alg:etfe}.}
  \end{algorithmic}
\end{algorithm}

The direct application of classical \ac{etfe} to \cref{eq:shifted-lag}, as in
\cref{alg:proto-laguerre-fourier}, is valid but can be made more efficient. Applying \cref{eq:etfe}
to \cref{eq:shifted-lag} yields
\begin{equation}
  \label{eq:hhat}
  \dlag{h}=\fourier ^*\operatorname{diag}(\underbrace{\fourier (I_N+\overline{a}\downshift)\dlag{u}}_{\eqqcolon \cX_1})^{-1}  \underbrace{\sqrt{\frac{1-|a|^2}{N}}\fourier \dlag{y}}_{\eqqcolon \cX_2}.
\end{equation}
Dissecting the individual terms in \cref{eq:hhat} gives
\begin{align}
    \mathcal{X}_1&=\fourier (I_N+\overline{a}\downshift)\dlag{u}\stackrel{\mathclap{\cref{eq:lagtrans}}}{=}\fourier (I_N+\overline{a}\downshift)(\L)^*\Sigma^{-2}\discu\nonumber\\
    &\stackrel{\mathclap{\cref{eq:lag-F}}}{=}\sqrt{\frac{N}{1-|a|^2}}\fourier (I_N+\overline{a}\downshift)(I_N+a\downshift^\top)\fourier \Sigma^{-2}\discu\nonumber\\
    &=\sqrt{\frac{N}{1-|a|^2}}\fourier ^2\fourier ^*(I_N+\overline{a}\downshift)\fourier \fourier ^*\big(I_N+a\downshift^\top\big)\fourier \Sigma^{-2}\discu\nonumber\\
    &\stackrel{\mathclap{\cref{eq:D-evd}}}{=}\sqrt{\frac{N}{1-|a|^2}}\fourier ^2\operatorname{diag}\left(\big|1+\overline{a}\omega^0\big|^2,\,\dots,\,\big|1+\overline{a}\omega^{N-1}\big|^2\right)\Sigma^{-2}\discu\nonumber\\
    &\stackrel{\mathclap{\cref{eq:zk-to-omegak}}}{=}\sqrt{\frac{1-|a|^2}{N}}\fourier ^2\discu,\label{eq:helper-u}
\end{align}
and similarly,
\begin{align}
    \mathcal{X}_2&=\sqrt{\frac{1-|a|^2}{N}}\fourier \dlag{y}\stackrel{\mathclap{\cref{eq:lagtrans}}}{=}\sqrt{\frac{1-|a|^2}{N}}\fourier (\L)^*\Sigma^{-2}\discy \nonumber \\ &\stackrel{\mathclap{\cref{eq:lag-F}}}{=}\fourier \big(I_N+a\downshift^\top\big)\fourier \Sigma^{-2}\discy\nonumber\\
    &=\fourier ^2\fourier ^*\big(I_N+a\downshift^\top\big)\fourier \Sigma^{-2}\discy\nonumber\\
    &\stackrel{\mathclap{\cref{eq:D-evd}}}{=}\fourier ^2\operatorname{diag}\big(1+a\overline{\omega}^{0},\,\dots,\,1+a\overline{\omega}^{N-1}\big)\Sigma^{-2}\discy\nonumber\\
    &\stackrel{\mathclap{\cref{eq:zk-to-omegak}}}{=}\frac{1-|a|^2}{N}\fourier ^2\operatorname{diag}(1+\overline{a}\omega^0,\,\dots,\,1+\overline{a}\omega^{N-1})^{-1}\discy\label{eq:helper-y}.
\end{align}
Substituting \(\mathcal{X}_1\) and \(\mathcal{X}_2\) in \cref{eq:hhat} with \cref{eq:helper-u,eq:helper-y} and pruning the scalar factors, we obtain
\begin{equation*}
    \dlag{h}=\sqrt{\frac{1-|a|^2}{N}}\fourier ^*\operatorname{diag}\big(\fourier ^2\discu\big)^{-1} \fourier ^2\operatorname{diag}\big(1+\overline{a}\omega^0,\,\dots,\,1+\overline{a}\omega^{N-1}\big)^{-1}\discy.
\end{equation*}
The matrix $\fourier ^2$ keeps the zeroth sample in place and reverses the order of all
remaining entries, i.e., $[\fourier ^2x]_k = x_{(N-k)\mod N}$ for all $x \in \C^N$.
Together with $\fourier ^4=I_N$~\cite{Candan2011}, this allows the reversal to be performed
after the element-wise division:
\begin{align}
    \dlag{h}&=  \sqrt{\frac{1-|a|^2}{N}}\fourier ^* \operatorname{diag}\big( \fourier ^2 \discu\big)^{-1} \fourier ^2 \operatorname{diag}\big(1+\overline{a}\omega^0,\,\dots,\,1+\overline{a}\omega^{N-1}\big)^{-1}\discy\nonumber \\ &= \sqrt{\frac{1-|a|^2}{N}}\fourier ^*\fourier ^2\operatorname{diag}(\discu)^{-1} \fourier ^4 \operatorname{diag}\big(1+\overline{a}\omega^0,\,\dots,\,1+\overline{a}\omega^{N-1}\big)^{-1}\discy\nonumber\\
    &=\sqrt{\frac{1-|a|^2}{N}}\fourier \operatorname{diag}(\discu)^{-1}\operatorname{diag}\big(1+\overline{a}\omega^0,\,\dots,\,1+\overline{a}\omega^{N-1}\big)^{-1}\discy. \label{eq:hhat-simplified}
\end{align}
The final statement \cref{eq:hhat-simplified} provides an efficient formula for computing $\dlag{h}$ that requires only a single \ac{fft}
call instead of the three \ac{fft} calls in \cref{alg:proto-laguerre-fourier}. This leads to \cref{alg:laguerre-fourier}.
\begin{algorithm}[tb]
  \caption{\textsc{ComputeLaguerreFourierCoeffs\_Efficient}($u,y,a$)\label{alg:laguerre-fourier}}
  \begin{algorithmic}[1]
    \Require{Excitation signal \(u\in\C^M\) and response \(y\in\C^N\), Laguerre parameter \(a\in\D\).}
    \Ensure{Discrete Laguerre-Fourier coefficients \(\dlag{h}\in\C^{N}.\)} 
    \State \(\tilde{u}\gets\begin{bmatrix}u_0&\cdots&u_{M-1}&0&\cdots&0\end{bmatrix}^\top\in\C^N\) \Comment{Zero-pad the excitation
      signal.} 
     \State{$\zeta \gets \begin{bmatrix} B^{-a}\big(\omega^0\big) & \cdots & B^{-a}\big(\omega^{N-1}\big) \end{bmatrix}^\top \in (\discSet)^N$ as in \cref{eq:discSetExpl}}
      \State{\(\discu \gets \cZ[\tilde{u}](\zeta)\), \(\discy \gets \cZ[y](\zeta)\)} \Comment{Element-wise truncated \(\cZ\)-transforms \cref{eq:Ztrans}.} 
    \State \(\hat{U} \gets\operatorname{diag}(1+\overline{a}\omega^0,\,\dots,\,1+\overline{a}\omega^{N-1})\discu\) \Comment{Scaling.} 
    \State
    \(\hat{H}_N\gets \discy\oslash\hat{U}\) \Comment{Elementwise division.} \State
    \(\dlag{h}\gets \textsc{FFT}(\hat{H}_N)\cdot \sqrt{(1-|a|^2)/N}\) \Comment{\ac{fft} of
      \(\hat{H}_N\).}
  \end{algorithmic}
\end{algorithm}

The element-wise division in step $5$ of \cref{alg:laguerre-fourier} remains numerically problematic when a sampled value $|U(z_k)|$ is small. The conditioning analysis above provides the alternative circulant solve in \cref{eq:CircSys} for this case. In the numerical experiments in \cref{sec:exp}, we did not observe numerical deterioration attributable to this efficient reformulation.

\subsubsection{Phase 2: Impulse Response Recovery}
\label{sec:irrec}
\begin{algorithm}
  \caption{\textsc{RecoverIR}($\dlag{h},a,N_q$)} 
  \label{alg:recover-ir}
  \begin{algorithmic}[1]
    \Require{Discrete Laguerre coefficients \(\dlag{h}\in\C^{N}\), Laguerre parameter \(a\in\D\), number of quadrature nodes $N_q \in \N$.}
    \Ensure{Truncated impulse response \(h \in \C^N\).}
 \State{$\rho \gets \begin{bmatrix} 1 & \eu^{\frac{2 \pi \iu}{N_q}} & \cdots & \eu^{\frac{2 \pi \iu (N_q-1)}{N_q}}  & 1 \end{bmatrix}^\top \in \T^{N_q+1}$}
    \State{\( \discH^a  \gets \cZ\big[\dlag{h}\big](\rho)\)} \Comment{scaled samples for the \(H_2(\D)\) inner product; use IFFT}
    \For{\(j=0,\,\dots,\,N-1\)}
    \For{\(k=0,\,\dots,\,N_q\)}
    \State \({[\varphi_j^a]}_k \gets \rho_k^j(1-\overline{a}\rho_k)/\sqrt{1-|a|^2}\)
    \EndFor
      \State \(h_j \gets \frac{1}{N_q + 1}{\sum_{k=0}^{N_q}}\discH^a_k{[\overline{\varphi_j^a}]}_k\)
    \EndFor
  \end{algorithmic}
\end{algorithm}
Suppose that, given $u,y \in \ell_2$ up to index $N-1 \in \N$, we have computed the discrete Laguerre-Fourier
coefficients \(\dlag{h}\) of the transfer function by solving \cref{eq:shifted-lag}. If the transfer function
$H \in H_2(\D)$ has the structure given in \cref{eq:transferfunc}, then by \cref{eq:lcoeffExplicit,eq:discCoeffElrat}
the discrete Laguerre-Fourier coefficients of \cref{def:dlcf} take the form
\begin{equation}
  \label{eq:transformedh}
  \dlag{h}[n]\coloneqq\sum_{k=0}^{\ord-1} \res_k \frac{\overline{L_n^a(\lambda_k)}}{1 - \overline{B^a(\lambda_k)}^N} = \\ \sum_{k=0}^{\ord-1} \res_k \frac{\sqrt{1 - |a|^2}}{1 - a \overline{\lambda}_k} \cdot \frac{\overline{B^a(\lambda_k)}^n}{1 - \overline{B^a(\lambda_k)}^N},
\end{equation}
where \(\res_k \in \C\), \(\lambda_k \in \D\), \(k=0,\ldots,\ord-1\), $N \in \N$, and $a \in \D$ are fixed and $0 \leq n \leq N-1$. Since $|B^a(\lambda)|^N \to 0$ exponentially, if $N \to \infty$, then for
any $\lambda \in \D$, the coefficients in \cref{eq:transformedh} approximate the Laguerre-Fourier coefficients presented
in \cref{eq:lcoeffExplicit}. If \(N\) is sufficiently large, we have $\overline{B^a(\lambda_k)}^N \approx 0$ for $k=0, \dots, \ord-1$. For this reason, the term
$\left(1 - \overline{B^a(\lambda_k)}^N\right)^{-1}$ in \cref{eq:transformedh} will be disregarded henceforth. However, note that if $|B^a(\lambda_k)| \approx 1$ for some $k$, then a very large $N$ may be necessary to achieve $\dlag{h}[n] \approx \clag$.

By \cref{eq:transformedh}, we have $\dlag{h}[n] \to \clag$ if $N \to \infty$, where $\clag$ are the Laguerre-Fourier
coefficients defined in \cref{eq:lcoeff}. Notice that by \cref{eq:lcoeffExplicit} the sequence $\clag$ can be written
as
\begin{equation}
  \label{eq:transIR}
  \clag = {\langle H, L_n^a \rangle}_{H_2(\D)} = \sum_{k=0}^{\ord-1} \trfRes_k \left(\trfPoles_k\right)^n,
\end{equation}
where $H$ is defined according to \cref{eq:transferfunc}, and 
\begin{equation*}
  \trfRes_k = \res_k \frac{\sqrt{1 - |a|^2}}{1 - a  \overline{\lambda}_k}, \quad 
  \trfPoles_k = \overline{B^a(\lambda_k)} \quad (k=0,\ldots,\ord-1).
\end{equation*}
Since $|\trfPoles_k| < 1$ for $k=0,\ldots,\ord-1$, \cref{eq:transIR} describes the $n$-th component of the \ac{ir} of a
\ac{bibo}-stable causal \ac{siso} \ac{lti} system characterized by the parameters $\{ \trfPoles_k \}_{k=0}^{\ord-1}$ and
the residues $\{ \trfRes_k \}_{k=0}^{\ord-1}$. The \ac{ir}
\begin{equation*}
 \trfIR \coloneqq \left(\clag \right)_{n \ge 0}
\end{equation*}
belongs to $\ell_2$. Therefore, the transformed transfer function
\begin{multline}
  \label{eq:transTRF}
  \trfTF(z) \coloneqq \cZ[\trfIR](z) = \sum_{k=0}^{\ord-1} \frac{\trfRes_k}{1 - \overline{\trfPoles}_k z} \\ = \sum_{k=0}^{\ord-1} \trfRes_k \xi(z, B^a(\lambda_k)) \quad (z \in \overline{\D}, \trfPoles \in \D, k=0,\ldots,\ord-1),
\end{multline}
where $\xi$ denotes the Szeg\H{o} kernel from \cref{eq:kernel}, belongs to $H_2(\D)$. The \ac{ir} of the original system can be recovered from
the transformed transfer function in \cref{eq:transTRF} with the following theorem. The key idea behind our construction
is to exploit the \ac{rkhs} property (see \cref{eq:reprod}) of $H_2(\D)$ and construct a sequence of $H_2(\D)$ inner
products which coincide with the \ac{ir} $h$ of the system we are trying to identify.

\begin{theorem}[Impulse response recovery]
  \label{thm:irrec}
  Let $\{\lambda_k\}_{k=0}^{\ord-1} \subset \D$, $\{\res_k\}_{k=0}^{\ord-1} \subset \C$ for $\ord \in \N$ and consider a \ac{siso}
  \ac{lti} system described by the transfer function
  \begin{equation*}
    H(z) = \sum_{k=0}^{\ord-1} \frac{\res_k}{1 - \overline{\lambda}_k z} = \sum_{k=0}^{\ord-1} \res_k \xi(z, \lambda_k),
  \end{equation*}
  where $\xi$ is the Szeg\H{o} kernel given in \cref{eq:kernel}. Let $a \in \D$ and $\trfIR = (\trfIR_0,\trfIR_1,\ldots) \in \ell_2$ satisfy
  \begin{equation*}
    \trfIR_n \coloneqq {\langle H, L_n^a \rangle}_{H_2(\D)} \quad (n \in \N).
  \end{equation*}
  Define the transformed transfer function $\trfTF \coloneqq \cZ[\trfIR]$ as given in \cref{eq:transTRF}. Then, the \ac{ir}
  $h = (h_0,h_1,\ldots) = \cZ^{-1}[H]$ is given by
  \begin{equation}
    \label{eq:impresp}
    h_n = \sum_{k=0}^{\ord-1} \res_k \overline{\lambda}_k^n = \left\langle \trfTF, \varphi_n^a \right\rangle_{H_2(\D)},
  \end{equation}
  where
  \begin{equation}
    \label{eq:varphi}
    \varphi_n^a(z) \coloneqq \frac{(1 - \overline{a}B^{-a}(z))}{\sqrt{1 - |a|^2}}  B^{-a}(z)^n \quad (n \in \N, z \in \overline{\D}).
  \end{equation}
\end{theorem}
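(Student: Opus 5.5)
The plan is to compute the right-hand side of \cref{eq:impresp} directly, using the reproducing property \cref{eq:reprod} together with the fact that $B^{-a}$ inverts $B^a$ (see \cref{eq:BlaschInv}). The argument has four steps, which I would carry out in the following order.

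\emph{Step 1: closed form of the transformed transfer function.} By \cref{eq:lcoeffExplicit}, each coefficient satisfies
\[
\trfIR_n = \sum_{k=0}^{\ord-1} c_k\, \overline{B^a(\lambda_k)}^{\,n}, \qquad c_k \coloneqq \res_k \frac{\sqrt{1-|a|^2}}{1-a\overline{\lambda}_k}.
\]
Since $|B^a(\lambda_k)|<1$, summing the geometric series gives
\[
\trfTF(z)=\sum_{k} \frac{c_k}{1-\overline{B^a(\lambda_k)}\,z}=\sum_k c_k\,\xi\big(z,\mu_k\big), \qquad \mu_k\coloneqq B^a(\lambda_k)\in\D .
\]
This confirms \cref{eq:transTRF}, and it shows that $\trfIR\in\ell_2$ and $\trfTF\in H_2(\D)$.

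\emph{Step 2: the test functions are admissible.} We need $\varphi_n^a\in H_2(\D)$. The Blaschke factor $B^{-a}$ is analytic on a neighbourhood of $\overline{\D}$ and bounded there by $1$. Hence $\varphi_n^a$ in \cref{eq:varphi} is a bounded analytic function on $\D$ with continuous boundary values, so it lies in $H_2(\D)$. This justifies applying \cref{eq:reprod} with $f=\varphi_n^a$.

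\emph{Step 3: apply the reproducing property term by term.} By linearity in the first argument and \cref{eq:reprod},
\[
\big\langle \trfTF,\varphi_n^a\big\rangle_{H_2(\D)}=\sum_k c_k\,\big\langle \xi(\cdot,\mu_k),\varphi_n^a\big\rangle_{H_2(\D)}=\sum_k c_k\,\overline{\varphi_n^a(\mu_k)}.
\]
Since $B^{-a}(B^a(\lambda_k))=\lambda_k$ by \cref{eq:BlaschInv}, we get
\[
\varphi_n^a(\mu_k)=\frac{(1-\overline{a}\lambda_k)\,\lambda_k^n}{\sqrt{1-|a|^2}},
\qquad\text{hence}\qquad
\overline{\varphi_n^a(\mu_k)}=\frac{(1-a\overline{\lambda}_k)\,\overline{\lambda}_k^{\,n}}{\sqrt{1-|a|^2}}.
\]

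\emph{Step 4: cancellation.} Multiplying by $c_k$, the factors $\sqrt{1-|a|^2}$ and $1-a\overline{\lambda}_k$ cancel, leaving $\res_k\overline{\lambda}_k^{\,n}$. Summing over $k$ yields $\sum_k\res_k\overline{\lambda}_k^{\,n}$, which equals $h_n$ by \cref{eq:ir}. This proves \cref{eq:impresp}.

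I expect the main obstacle to be bookkeeping rather than any genuine analytical difficulty. One has to check that the second argument of the kernel in $\trfTF$ is $B^a(\lambda_k)$ and not its conjugate, and keep track of which inner-product slot is conjugated in \cref{eq:reprod}. An error in either place would produce $\overline{a}$ instead of $a$, or $\lambda_k$ instead of $\overline{\lambda}_k$, and break the cancellation in Step 4. I would therefore verify Step 1 against \cref{eq:kernel} before doing anything else.
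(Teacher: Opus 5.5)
Your proposal is correct and follows the paper's proof essentially step for step. You expand $\trfTF$ as a sum of Szeg\H{o} kernels $\xi(\cdot, B^a(\lambda_k))$ via the geometric series, apply the reproducing property \cref{eq:reprod}, use $B^{-a}\circ B^a = \mathrm{id}$ from \cref{eq:BlaschInv}, and cancel the factors $\sqrt{1-|a|^2}$ and $1-a\overline{\lambda}_k$. Your Step 2, checking that $\varphi_n^a \in H_2(\D)$, is a small justification the paper leaves implicit. One minor wording fix there: $|B^{-a}|\le 1$ holds on $\overline{\D}$ itself, not on a neighbourhood of it, and that is all you need.
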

\begin{proof}
  By \cref{eq:transIR}, we have
  \begin{equation*}
    \trfIR_n = {\langle H, L_n^a \rangle}_{H_2(\D)} = \sum_{k=0}^{\ord-1} \res_k \frac{\sqrt{1 - |a|^2}}{1- a \overline{\lambda}_k} \overline{B^a(\lambda_k)}^n.
  \end{equation*}
  Hence, for any $z \in \D$,
  \begin{align*}
    \trfTF(z) &= \cZ[\trfIR](z) \\ &= \sum_{n=0}^{\infty} \sum_{k=0}^{\ord-1} \res_k \frac{\sqrt{1 - |a|^2}}{1- a \overline{\lambda}_k} \overline{B^a(\lambda_k)}^n  z^n \\ &= \sum_{k=0}^{\ord-1} \res_k \frac{\sqrt{1 - |a|^2}}{1- a \overline{\lambda}_k} \sum_{n=0}^{\infty}  \overline{B^a(\lambda_k)}^n  z^n \\ &=  \sum_{k=0}^{\ord-1} \res_k \frac{\sqrt{1 - |a|^2}}{1- a \overline{\lambda}_k} \cdot \frac{1}{1 - \overline{B^a(\lambda_k)}z} = \sum_{k=0}^{\ord-1} \res_k \frac{\sqrt{1 - |a|^2}}{1- a \overline{\lambda}_k} \xi(z, B^{a}(\lambda_k)).
  \end{align*}
  Finally,
  \begin{align*}
    \begin{split}
      \left\langle \trfTF, \varphi_n^a \right\rangle_{H_2(\D)} &\stackrel{\mathclap{\cref{eq:varphi}}}{=} \sum_{k=0}^{\ord-1} \res_k \frac{\sqrt{1 - |a|^2}}{1- a \overline{\lambda}_k} \left\langle \xi(\cdot, B^a(\lambda_k)), \frac{(1 - \overline{a}B^{-a}(\cdot))}{\sqrt{1 - |a|^2}} \cdot B^{-a}(\cdot)^n \right\rangle_{H_2(\D)} \\ 
      &= \sum_{k=0}^{\ord-1} \frac{\res_k}{1 - a \overline{\lambda}_k} \left\langle \xi(\cdot, B^a(\lambda_k)), (1 - \overline{a}B^{-a}(\cdot)) \cdot B^{-a}(\cdot)^n\right\rangle_{H_2(\D)} \\
      &\stackrel{\mathclap{\cref{eq:reprod}}}{=} \sum_{k=0}^{\ord-1} \frac{\res_k}{1 - a \overline{\lambda}_k}  \left(1 - a \overline{B^{-a}(B^{a}(\lambda_k))}\right) \cdot \overline{B^{-a}(B^{a}(\lambda_k))}^n \\ &\stackrel{\mathclap{\cref{eq:BlaschInv}}}{=} \sum_{k=0}^{\ord-1} \frac{\res_k}{1 - a \overline{\lambda}_k} (1 - a \overline{\lambda}_k) \overline{\lambda}_k^n = \sum_{k=0}^{\ord-1} \res_k \overline{\lambda}_k^n.
    \end{split}
  \end{align*}
\end{proof}

\begin{remark}
  \label{rem:reprod}
    In practice, the $H_2(\D)$ inner products from \cref{eq:impresp} can only be approximated. Notice, however,
          that (subject to hardware limitations) $\trfTF$ and a $\varphi_n^a$ can be evaluated over arbitrarily many points
          on $\T$. Therefore, we can approximate the inner products in \cref{eq:impresp} up to any precision using
          well-known quadratures. In this work, we use the composite trapezoid method, i.e.,
          \begin{equation}
          \label{eq:quadrature}
              h_n\approx \frac{1}{N_q+1} \sum_{k=0}^{N_q}{}\trfTF(\rho_k)\overline{\varphi_n^a(\rho_k)},\quad\rho_k\coloneqq\eu^{\frac{2\pi\iu k}{N_q}},
          \end{equation}
          for some discretization order \(N_q\in\N\). Due to the periodicity that arises from integrating along \(\T\), we weigh all quadrature nodes equally, instead of weighing the start and end nodes by a factor of 1/2. This means that \cref{eq:quadrature} is equivalent to the classical trapezoidal rule applied to the discretization with \(N_q+1\) points with identical start and end node \(q_0=q_{N_q}\). Also, note that the classical composite trapezoidal rule applied in this context gives
          \begin{equation*}
              \frac{2 \pi}{N_q + 1}\sum_{k=0}^{N_q} \trfTF(\rho_k)\overline{\varphi_n^a(\rho_k)} \approx \int_{0}^{2\pi} \trfTF\big(\eu^{\iu t}\big) \overline{\varphi_n^a\big(\eu^{\iu t}\big)}\, \du t =  2 \pi {\langle \trfTF, \varphi_n^a \rangle}_{H_2(\D)}.
          \end{equation*}
          Therefore~\cref{eq:quadrature} approximates the inner product ${\langle \trfTF, \varphi_n^a \rangle}_{H_2(\D)}$. It has been shown in \cite{trefethen2014} that \cref{eq:quadrature} is exponentially convergent on periodic contours. We observe this clearly in the numerical experiments in the following section.
\end{remark}
\begin{figure}[tbh]
  \centering \includegraphics{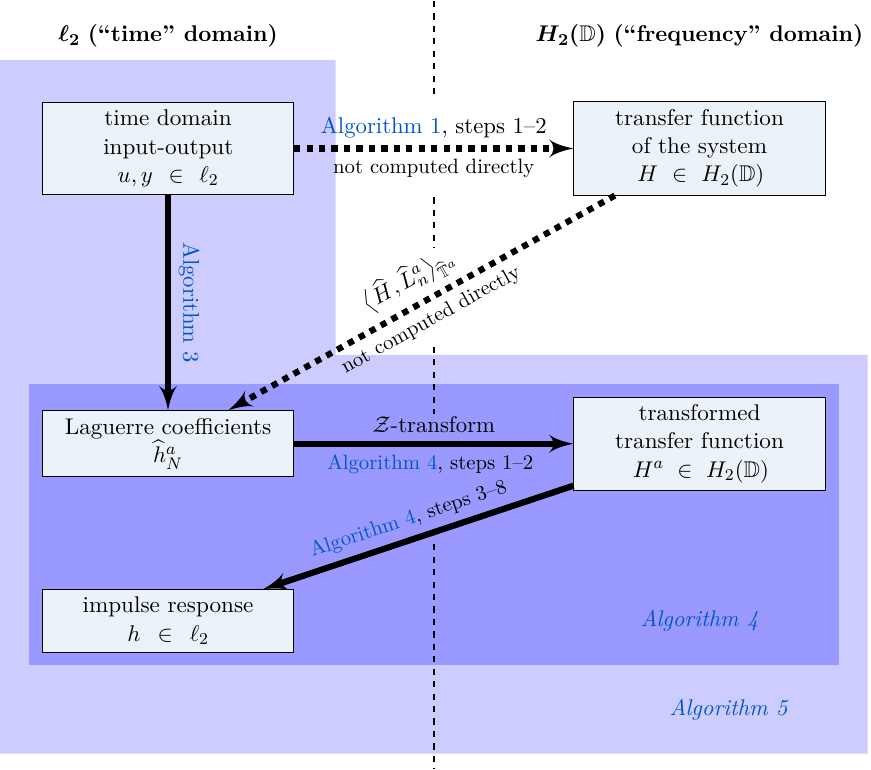}
  \caption{Schematic depiction of the proposed \ac{ir} recovery process.\label{fig:proc}}
\end{figure}
\begin{algorithm}[tbh]
\caption{\textsc{L-ETFE($u,y,a,N_q$)}\label{alg:lag-etfe}}
    \begin{algorithmic}[1]
        \Require{Excitation signal \(u\in\C^M\) and response signal \(y\in\C^N\), Laguerre parameter \(a\in\D\), number of quadrature nodes $N_q \in \N$.}
        \Ensure{Truncated impulse response \(h\in\C^{N}\).}
        \State \(\dlag{h}\gets\textsc{ComputeLaguerreFourierCoeffs\_Efficient}(u,y,a)\) \Comment{\cref{alg:laguerre-fourier}.}
        \State \(h\gets\textsc{RecoverIR}(\dlag{h},a,N_q)\) \Comment{\cref{alg:recover-ir}.}
    \end{algorithmic}
\end{algorithm}

\cref{alg:lag-etfe} combines the two phases and is referred to as \ac{letfe}. For chosen
parameters \(a\in\D\) and \(N_q\in\N\), it first computes \(\dlag{h}\) from the input-output data
\(u,y\) with \cref{alg:laguerre-fourier}. It then passes \(\dlag{h}\) to \cref{alg:recover-ir} to
return the first \(N\) samples of the recovered \ac{ir}. The prototype method in
\cref{alg:proto-laguerre-fourier} provides an alternative direct solve of \cref{eq:CircSys}. We did not
observe a case in our experiments in which it was numerically preferable to
\cref{alg:laguerre-fourier}, but this does not rule out such cases. \Cref{fig:proc} illustrates the
proposed transformation chain to recover the \ac{ir} from time-domain input and output signals.

%% file: 04_results.tex
\section{Results and Discussion}
\label{sec:exp}

\pgfplotstableread[col sep=comma]{data/exp1_scalars.csv}\expOneScalars
\pgfplotstableread[col sep=comma]{data/exp2_scalars.csv}\expTwoScalars
\pgfplotstableread[col sep=comma]{data/exp3_scalars.csv}\expThreeScalars
\pgfplotstableread[col sep=comma]{data/exp3_convergence.csv}\expThreeConvergence
\newcommand{\irestdefcell}[4]{%
  \pgfplotstablegetelem{#2}{#3}\of{#1}%
  \expandafter\def\expandafter#4\expandafter{\pgfplotsretval}%
}
\pgfplotstablegetelem{0}{runtime_s}\of{\expOneScalars}\edef\expOneEtfeRuntime{\pgfplotsretval}
\pgfplotstablegetelem{1}{runtime_s}\of{\expOneScalars}\edef\expOneLagAOneRuntime{\pgfplotsretval}
\pgfplotstablegetelem{2}{runtime_s}\of{\expOneScalars}\edef\expOneLagATwoRuntime{\pgfplotsretval}
\pgfplotstablegetelem{0}{cond_num}\of{\expOneScalars}\edef\expOneEtfeCondNum{\pgfplotsretval}
\pgfplotstablegetelem{1}{cond_num}\of{\expOneScalars}\edef\expOneLagAOneCondNum{\pgfplotsretval}
\pgfplotstablegetelem{2}{cond_num}\of{\expOneScalars}\edef\expOneLagATwoCondNum{\pgfplotsretval}
\def\lagOneParam{0.1 \iu }
\def\lagTwoParam{0.3 +  0.2 \iu}
\def\lagThreeParam{0.97 \cdot \eu^{ 0.052 \iu}}
\pgfplotstablegetelem{1}{runtime_s}\of{\expTwoScalars}\edef\expTwoLagAOneRuntime{\pgfplotsretval}
\pgfplotstablegetelem{2}{runtime_s}\of{\expTwoScalars}\edef\expTwoLagATwoRuntime{\pgfplotsretval}
\pgfplotstablegetelem{3}{runtime_s}\of{\expTwoScalars}\edef\expTwoLagAThreeRuntime{\pgfplotsretval}
\pgfplotstablegetelem{1}{cond_num}\of{\expTwoScalars}\edef\expTwoLagAOneCondNum{\pgfplotsretval}
\pgfplotstablegetelem{2}{cond_num}\of{\expTwoScalars}\edef\expTwoLagATwoCondNum{\pgfplotsretval}
\pgfplotstablegetelem{3}{cond_num}\of{\expTwoScalars}\edef\expTwoLagAThreeCondNum{\pgfplotsretval}
\pgfplotstablegetelem{0}{runtime_s}\of{\expThreeScalars}\edef\expThreeLagRuntime{\pgfplotsretval}
\pgfplotstablegetelem{0}{cond_num}\of{\expThreeScalars}\edef\expThreeLagCondNum{\pgfplotsretval}
\pgfplotstablegetelem{3}{relative_tail_h2}\of{\expThreeConvergence}\edef\expThreeTailAtThreeHundred{\pgfplotsretval}

This section evaluates the \ac{ir} recovery method proposed in \cref{sec:proposed-method} under random, spectral-zero, and band-limited excitation.

\subsection{Experimental Setup}

We estimate the \ac{ir} of two causal stable \ac{lti} systems with the proposed algorithm (see
\cref{fig:proc}) given the time-domain inputs and outputs $u,y \in \ell_2$. The first system is deliberately chosen as a very large
one, described by $\ord=20000$, while for the second experiment we choose a smaller system governed by $\ord=50$ mirror-image poles. For the implementation specifications of the systems, we refer to \cref{app:impl}. \Cref{fig:sys} illustrates the true \ac{ir} of each simulated system, and \cref{fig:poles} shows the corresponding mirror-image poles.
\begin{figure}[tb]
  \begin{subfigure}[t]{0.48\linewidth}
    \centering
    \includegraphics{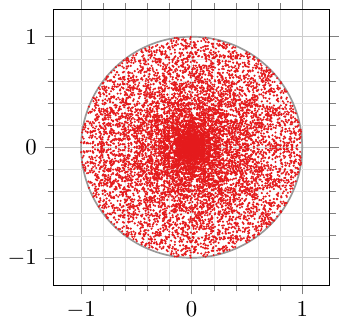}
    \subcaption{Large system ($\ord=20000$)}
  \end{subfigure}
  \hfill
  \begin{subfigure}[t]{0.48\linewidth}
    \centering
    \includegraphics{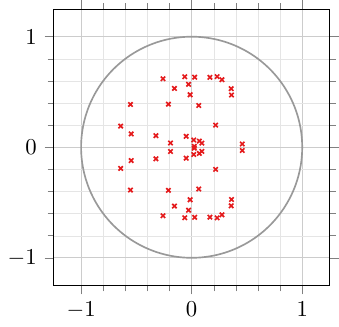}
    \subcaption{Small system ($\ord=50$)}
  \end{subfigure}
  \caption{Mirror image poles of the considered systems.}
  \label{fig:poles}
\end{figure}
\begin{figure}[tb]
  \begin{subfigure}[t]{0.5\linewidth}
    \centering
    \parbox[t][130pt][t]{\linewidth}{\vspace{0pt}\centering\includegraphics{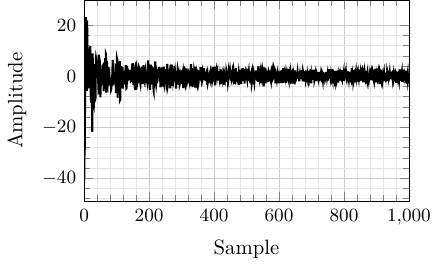}}
    \subcaption{Large system ($\ord=20000$)}
  \end{subfigure}
  \begin{subfigure}[t]{0.5\linewidth}
    \centering
    \parbox[t][130pt][t]{\linewidth}{\vspace{0pt}\centering\includegraphics{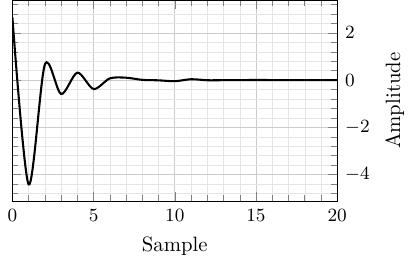}}
    \subcaption{Small system ($\ord=50$)}
  \end{subfigure}
  \caption{Impulse responses of the considered systems.}
  \label{fig:sys}
\end{figure}
In each experiment, we compute the pointwise difference between the true and recovered \ac{ir}. This error is given by
\begin{equation}
  \label{eq:err}
  \err \coloneqq \begin{bmatrix}h_0 - h^{\textrm{est}}_0 & \cdots & h_{N-1} - h^{\textrm{est}}_{N-1} \end{bmatrix}^\top \in \C^N,
\end{equation}
where $h$ and $h^{\textrm{est}}$ denote the true and estimated \ac{ir}, respectively. Based on~\cref{eq:err}, we also display the absolute and relative errors of the recovered \ac{ir} in the ${\| \cdot \|}_1$ and ${\| \cdot \|}_2$ vector norms for different choices of $N$. For a better visual and quantitative analysis, we list and display the errors in dB, i.e., \(20\operatorname{log}_10(\lVert \varepsilon\rVert)\), as is customary in engineering sciences.  Whenever applicable, we report the condition numbers of $\Gamma$ from~\cref{eq:Gamma} and the circulant \ac{etfe} system $\tilde{\toeplitz}$ from~\cref{eq:C-conv}. Finally, we record the wall-clock times of the algorithms (see also  \cref{app:impl} for the computer specification).
\subsection{Experiment 1: ETFE-Valid Random Excitation}
\label{subsec:largeLTI}

This experiment serves as a baseline on the larger system shown in \cref{fig:sys}; see also \cref{app:impl}. The excitation is chosen as a finite sequence of uniformly distributed random numbers in $[0,1)$. \ac{etfe} is expected to work well because $|U(z)|=|\cZ[u](z)| \gg 0$ on $\T$, so the division in~\cref{eq:etfe} is numerically safe. The purpose of this experiment is therefore not to show an advantage of the proposed method, but to test whether it can still reproduce the \ac{etfe} baseline and to show the effect of a nonzero Laguerre parameter.

The main result is that \ac{letfe} with $a=0$ reproduces the \ac{etfe} reconstruction throughout the reported range of $N$. This is visible in \cref{fig:randomInput}, where the pointwise errors coincide, and in \cref{tab:noise}, where the reported $\ell_1$-, $\ell_2$-, and relative errors are identical for all listed values of $N$. In contrast, choosing $a=\lagTwoParam$ leads to substantially larger errors for $N=10000$, $15000$, and $20000$.

This deterioration is consistent with the numerical properties of the method in the noise setting. The truncated $\cZ$-transform over $\discSet$ does not benefit from rapidly decaying Fourier coefficients of $U$, and the quadrature step in \cref{sec:irrec} introduces an additional approximation error for $a \neq 0$. Moreover, noise excitation is unfavorable for the conditioning result in \cref{thm:condGamma}, so poor choices of $a$ make the recovery of $\dlag{h}$ less reliable. For this reason, Experiment~1 should be read as a control case: when \ac{etfe} is applicable, the proposed method matches it for $a=0$, but it becomes sensitive to the Laguerre parameter away from the Fourier case.
\begin{figure}[tb]
  \begin{subfigure}[t]{0.48\linewidth}
    \centering
    \parbox[t][121pt][t]{\linewidth}{\vspace{0pt}\centering\includegraphics{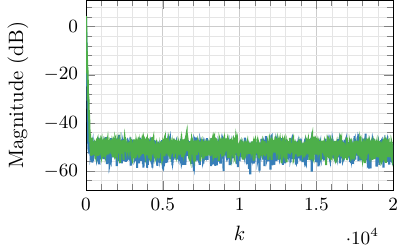}}
    \subcaption{Laguerre-Fourier coefficients.}
  \end{subfigure}
  \hfill
  \begin{subfigure}[t]{0.48\linewidth}
    \centering
    \parbox[t][121pt][t]{\linewidth}{\vspace{0pt}\centering\includegraphics{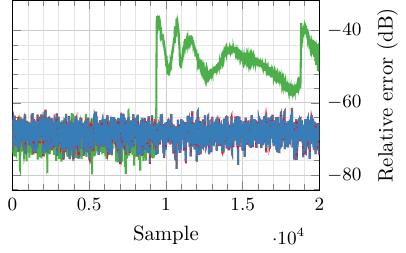}}
    \subcaption{Relative pointwise error.}
  \end{subfigure}
  \caption{Experiment~1: Laguerre-Fourier coefficient magnitudes and relative pointwise error in dB for random excitation. \irestplotkey{Set1-B} $a=0$; \irestplotkey{Set1-C} $a=\lagTwoParam$; \irestplotkey{Set1-A} \ac{etfe}. In~(b), the $a=0$ curve is dash-dotted. Curves are smoothed with a 16-sample moving average for visual clarity.}
  \label{fig:randomInput}
\end{figure}
\begin{table}[tb]
    \centering
    \caption{Recovery errors for random excitation using \ac{letfe} and \ac{etfe}.}
    \label{tab:noise}
    \begin{filecontents*}{data/exp1_table.csv}
method_key,method_tex,a_tex,N,l1,l2,l1rel,l2rel
lag_a1,\multirow{4}{*}{\shortstack[l]{\ac{letfe}\\$(a = 0.0 + 0.0 \iu)$}},0.0 + 0.0 \iu,100,0.3162798105,0.03934542222,0.008520477445,0.005633077791
lag_a1,,0.0 + 0.0 \iu,10000,27.83276896,0.3477233771,0.7341879977,0.04977578854
lag_a1,,0.0 + 0.0 \iu,15000,41.67869635,0.4252940812,1.099423441,0.06087985348
lag_a1,,0.0 + 0.0 \iu,20000,55.64824328,0.4915090481,1.467919788,0.0703583712
lag_a2,\multirow{4}{*}{\shortstack[l]{\ac{letfe}\\$(a = 0.3 + 0.2 \iu)$}},0.3 + 0.2 \iu,100,0.2116037067,0.02631371384,0.00570053652,0.003767330192
lag_a2,,0.3 + 0.2 \iu,10000,62.37591242,1.96689652,1.645385923,0.2815566387
lag_a2,,0.3 + 0.2 \iu,15000,230.2008668,3.532655842,6.072364331,0.5056914253
lag_a2,,0.3 + 0.2 \iu,20000,362.9225253,4.373518862,9.573368806,0.6260590008
etfe,\multirow{4}{*}{\ac{etfe}},--,100,0.3162798105,0.03934542222,0.008520477446,0.005633077791
etfe,,--,10000,27.83276896,0.3477233771,0.7341879977,0.04977578854
etfe,,--,15000,41.67869635,0.4252940812,1.099423441,0.06087985348
etfe,,--,20000,55.64824328,0.4915090481,1.467919788,0.0703583712
\end{filecontents*}
\pgfplotstabletypeset[irest-result-table-multi]{data/exp1_table.csv}
\end{table}
At $N=20000$, the proposed method required \irestprintfixed{\expOneLagAOneRuntime} seconds for $a=0$ and \irestprintfixed{\expOneLagATwoRuntime} seconds for $a=0.3+\iu 0.2$, whereas \ac{etfe} required only \irestprintsci{\expOneEtfeRuntime} seconds. This difference is expected because \ac{etfe} avoids using the trapezoidal rule as described in \cref{sec:irrec}. For $a=0$, $\kappa_2(\Gamma)=\irestprintfixed[2]{\expOneLagAOneCondNum}$ matches $\kappa_2(\tilde{\toeplitz})=\irestprintfixed[2]{\expOneEtfeCondNum}$. For $a=0.3+\iu 0.2$, $\kappa_2(\Gamma)=\irestprintfixed[2]{\expOneLagATwoCondNum}$.

\Cref{fig:runtimes} compares runtimes over different values of $N$. Both implementations use the efficient \ac{letfe} variant from \cref{alg:laguerre-fourier}. Although the proposed method is consistently slower than \ac{etfe}, this baseline isolates the main computational trade-off: in a regime where \ac{etfe} is already stable, \ac{letfe} offers comparable accuracy at $a=0$ but at higher cost.

\begin{figure}[tb]
    \begin{subfigure}[t]{0.48\linewidth}
    \centering
    \includegraphics{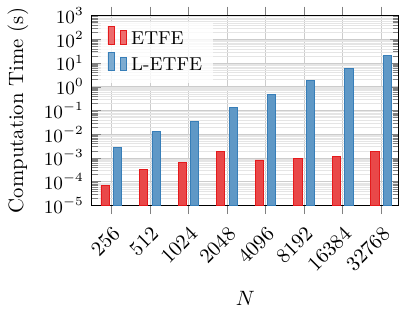}
    \subcaption{MATLAB implementation.}
  \end{subfigure}
  \hfill
   \begin{subfigure}[t]{0.48\linewidth}
    \centering
    \includegraphics{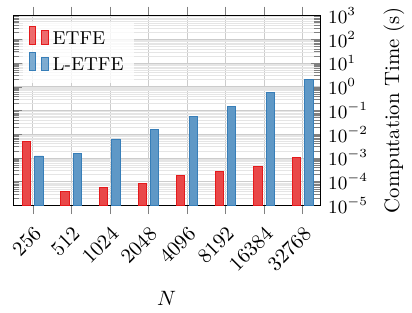}
    \subcaption{Python implementation.}
  \end{subfigure}
  \caption{Runtime comparisons of \ac{etfe} and \ac{letfe} with \(a=\lagTwoParam\).}
  \label{fig:runtimes}
\end{figure}
\subsection{Experiment 2: A Spectral Zero}
Next, we examine a scenario where \ac{etfe} fails, but the proposed methodology can successfully recover the \ac{ir}. In particular, we consider an input signal $u$ for which
\begin{equation}
    \label{eq:expInputProps}
    \cZ[u] = U \in H_2(\D), \quad U(1) = 0.
\end{equation}
The latter property makes \ac{etfe} fail, since to obtain $H(1)$, by~\cref{eq:etfe} one would have to divide by $0$. In contrast, choosing an appropriate parameter $a \neq 0$, we can ensure $\discSet \cap \{1\} = \emptyset$, avoiding this issue in~\cref{alg:laguerre-fourier}. For the details of the input generation, see \cref{app:impl}.

Given the large system illustrated in~\cref{fig:sys} and an excitation $u$ satisfying~\cref{eq:expInputProps}, the proposed method can recover the \ac{ir} with high accuracy. \Cref{fig:clagExp} shows the Laguerre-Fourier coefficient magnitudes and pointwise error up to $N=2000$ for the three parameter choices $a=\lagOneParam$, $a=\lagTwoParam$, and $a=\lagThreeParam$.
The above experiment is practically relevant in cases where the input is generated by a feedback controller. For example, controllers associated with industrial process control (such as PID loops), flight control, and automotive control
systems have been documented to produce excitation signals with near-zero components at certain frequencies (see, e.g.,~\cite{he2020}). \Cref{tab:exp} shows the quantitative results obtained for this experiment. Among the tested values, the smallest reconstruction errors are obtained for $a=\lagOneParam$, followed by $a=\lagTwoParam$, whereas $a=\lagThreeParam$ performs worst. This ordering is consistent with the fact that $a=0$ is not admissible because of~\cref{eq:expInputProps}, but parameter choices closer to $0$ still yield better approximations. Considering the distribution of the mirror-image poles in~\cref{fig:poles}, this behaviour is not surprising. By~\cref{thm:irrec}, the error of the reconstruction depends on how well the first $N$ discrete Laguerre coefficients capture the behaviour of the transfer function $H$. It is known~\cite[Eq. (15)]{soumelidis2017} that if $H$ can be written according to~\cref{eq:transferfunc}, then
\begin{equation*}
    \left\| H -  S_N^a H \right\|_{H_2(\D)} \leq C \max_{ k=0,\ldots,\ord-1 } \left| B^a( \lambda_k) \right|^N \quad (C > 0),
\end{equation*}
where the Laguerre-Fourier partial sums $S_N^a$ are defined according to~\cref{eq:partSum} and $\lambda_k$ denote the parameters defining the system in~\cref{eq:transferfunc}. Hence, we find that we can estimate how well the computed Laguerre-Fourier coefficients capture the behaviour of $H$ by considering the maximum possible pseudo-hyperbolic distance $\max_{ k=0,\ldots,\ord-1 } \left| B^a( \lambda_k) \right|$ between the Laguerre parameter $a$ and the mirror-image poles defining the system. Since according to~\cref{fig:poles}, these parameters cover much of the disk, choosing an $a$ far away from $0$ increases this distance and in turn the \ac{ir} recovery error.

\begin{table}[tbh]
    \centering
    \caption{Recovery errors for smooth excitation using \ac{letfe}.}
    \label{tab:exp}
    \begin{filecontents*}{data/exp2_table.csv}
method_key,method_tex,a_tex,N,l1,l2,l1rel,l2rel
lag_a1,\multirow{4}{*}{\shortstack[l]{\ac{letfe}\\$(a = 0.0 + 0.1 \iu)$}},0.0 + 0.1 \iu,100,1.050321002,0.1149662166,0.001902421685,0.00131256119
lag_a1,,0.0 + 0.1 \iu,10000,5.510246066,0.2130735931,0.001063188734,0.001853959858
lag_a1,,0.0 + 0.1 \iu,15000,5.510269602,0.2130735931,0.001002220418,0.001851980762
lag_a1,,0.0 + 0.1 \iu,20000,2651.100417,53.9239566,0.4734363674,0.4686407637
lag_a2,\multirow{4}{*}{\shortstack[l]{\ac{letfe}\\$(a = 0.3 + 0.2 \iu)$}},0.3 + 0.2 \iu,100,9.413964949,1.040520182,0.01705129292,0.01187954556
lag_a2,,0.3 + 0.2 \iu,10000,779.6605059,21.25414026,0.1504336206,0.1849329253
lag_a2,,0.3 + 0.2 \iu,15000,4074.193789,62.08476102,0.7410236702,0.5396247433
lag_a2,,0.3 + 0.2 \iu,20000,5987.520248,71.15791003,1.069257814,0.618417108
lag_a3,\multirow{4}{*}{\shortstack[l]{\ac{letfe}\\$(a = 0.97 \cdot \eu^{\iu 0.052})$}},0.97 \cdot \eu^{\iu 0.052},100,179.4094589,22.29455535,0.3249601261,0.2545353667
lag_a3,,0.97 \cdot \eu^{\iu 0.052},10000,24696.09414,310.3739794,4.765052004,2.700573499
lag_a3,,0.97 \cdot \eu^{\iu 0.052},15000,37509.78337,383.0019775,6.822365056,3.328954488
lag_a3,,0.97 \cdot \eu^{\iu 0.052},20000,50072.88542,441.6730507,8.942069802,3.838479385
\end{filecontents*}
\pgfplotstabletypeset[irest-result-table-multi]{data/exp2_table.csv}
\end{table}
\begin{figure}[tb]
  \begin{subfigure}[t]{0.48\linewidth}
    \centering
    \parbox[t][123pt][t]{\linewidth}{\vspace{0pt}\centering\includegraphics{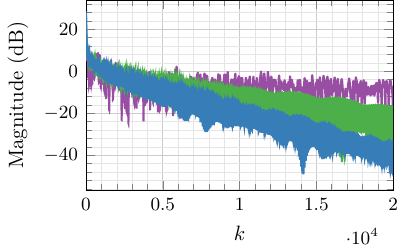}}
    \subcaption{Laguerre-Fourier coefficients.}
  \end{subfigure}
  \hfill
  \begin{subfigure}[t]{0.48\linewidth}
    \centering
    \parbox[t][123pt][t]{\linewidth}{\vspace{0pt}\centering\includegraphics{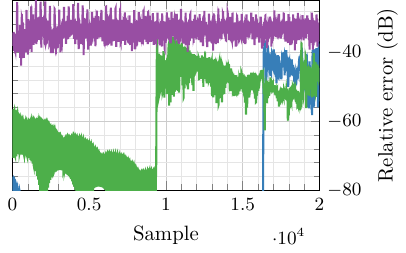}}
    \subcaption{Relative pointwise error.}
  \end{subfigure}
  \caption{Experiment~2: Laguerre-Fourier coefficient magnitudes and relative pointwise error in dB for spectral-zero excitation. \irestplotkey{Set1-B} $a=\lagOneParam$; \irestplotkey{Set1-C} $a=\lagTwoParam$; \irestplotkey{Set1-D} $a=\lagThreeParam$. Curves are smoothed with a 16-sample moving average for visual clarity.}
  \label{fig:clagExp}
\end{figure}

We emphasize that \ac{etfe} is omitted from~\cref{tab:exp}, because it does not produce a solution at all. The condition numbers of $\Gamma$ are \irestprintfixed[2]{\expTwoLagAOneCondNum}, \irestprintfixed[2]{\expTwoLagATwoCondNum}, and \irestprintfixed[2]{\expTwoLagAThreeCondNum} for $a = \lagOneParam$, $a = \lagTwoParam$, and $a = \lagThreeParam$, respectively. Respective run times are \irestprintfixed{\expTwoLagAOneRuntime} seconds for $a = \lagOneParam$, \irestprintfixed{\expTwoLagATwoRuntime} seconds for $a = \lagTwoParam$, and \irestprintfixed{\expTwoLagAThreeRuntime} seconds for $a = \lagThreeParam$.

\subsection{Experiment 3: Band-Limited Excitation}
\label{subsec:smallLTI}
In the next experiment, we consider the use of band-limited input signals. For this type of excitation, \ac{etfe} cannot
be used to recover the values of the transfer function at points that fall outside the support of $U$. Nevertheless,
band-limited inputs are frequently used in practice in cases where physical constraints on the system-to-be-identified
prevent the use of excitation signals with a rich frequency profile. Examples include the identification of aircraft
wing dynamics~\cite{Venkataraman2019} and nuclear power plants~\cite{soumelidis1991}. For our simulated experiment, we use the
excitation
\begin{equation}
  \label{eq:sinc}
  u_n = \frac{\omega_c}{\pi} \operatorname{sinc} \left( \frac{\omega_c}{\pi} n \right) = \begin{cases}
    \dfrac{\sin(\omega_c n)}{\pi n}, & n \neq 0, \\
    \dfrac{\omega_c}{\pi}, & n = 0,
  \end{cases}
\end{equation}
where $\omega_c \in [-\pi, \pi)$ is a cutoff frequency and $n \in \N$. The $\cZ$-transform of~\cref{eq:sinc} evaluated
over an equidistant grid on $\T$ is a rectangular pulse that vanishes for frequencies which fall outside the range
$[-\omega_c, \omega_c] \subset [-\pi, \pi]$. In our experiment the parameter $\omega_c$
satisfies
\begin{equation*}
  \omega_c = \frac{5 \pi}{6}.
\end{equation*}
\cref{fig:sincInp} illustrates the excitation signal in the time and frequency domains.
We consider the above input and measured output signals for $N=2000$ time steps. \Cref{fig:sampling} compares the equidistant sampling grid with the set $\discSet$, for three different nonzero choices of $a$ used in the experiments. The equidistant grid in \cref{fig:sampT} contains sampling points in the vanishing part of the input.
\DeclareRobustCommand{\samplingLegendSet}{\tikz[baseline=-0.6ex]{\path[draw=Set1-B, only marks, mark=x, mark size=3pt, line width=0.5pt, rotate=45] plot coordinates {(0,0)};}}
\DeclareRobustCommand{\samplingLegendParam}{\tikz[baseline=-0.6ex]{\path[index of colormap=0, only marks, mark=*, mark size=2pt] plot coordinates {(0,0)};}}
\DeclareRobustCommand{\samplingLegendZero}{\tikz[baseline=-0.6ex]{\draw[index of colormap=2, line width=1.2pt] (0,0) -- (0.9em,0);}}
\begin{figure}[tb]
  \begin{subfigure}[t]{0.48\linewidth}
    \centering
    \includegraphics{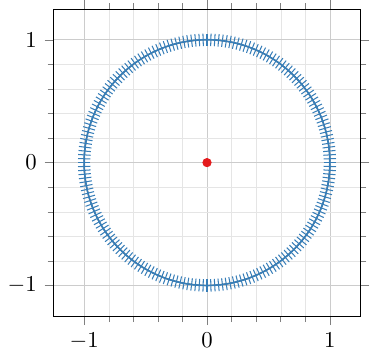}
    \subcaption{$a=0$. The equidistant grid samples the full circle.}
    \label{fig:sampT}
  \end{subfigure}
  \hfill
  \begin{subfigure}[t]{0.48\linewidth}
    \centering
    \includegraphics{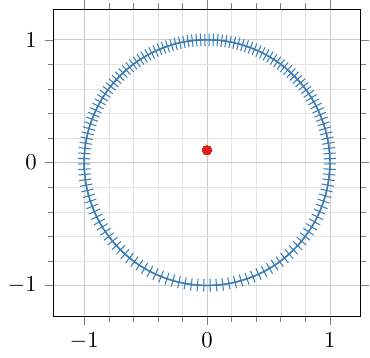}
    \subcaption{$a=\lagOneParam$. Mild clustering near $a$.}
    \label{fig:clusterAOne}
  \end{subfigure}

  \medskip

  \begin{subfigure}[t]{0.48\linewidth}
    \centering
    \includegraphics{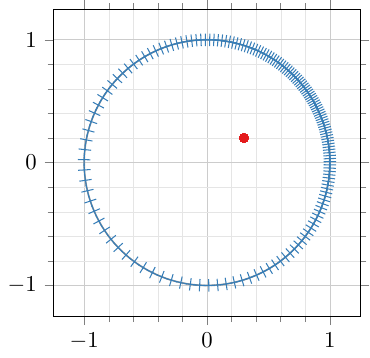}
    \subcaption{$a=\lagTwoParam$. Stronger clustering on one arc.}
    \label{fig:clusterATwo}
  \end{subfigure}
  \hfill
  \begin{subfigure}[t]{0.48\linewidth}
    \centering
    \includegraphics{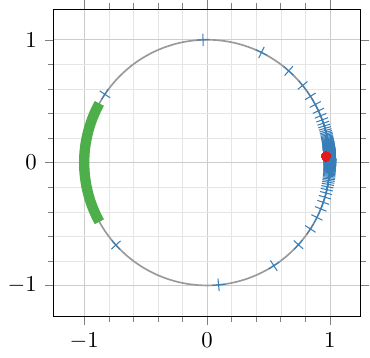}
    \subcaption{$a=\lagThreeParam$. All sampling points remain inside the excited region.}
    \label{fig:goodSamp}
  \end{subfigure}
  \caption{Samplings of the unit circle $\T$ for the sinc-input experiment. Legend: \samplingLegendSet\ $\discSet$, \samplingLegendParam\ $a$, \samplingLegendZero\ $U(z)=0$. From \subref{fig:sampT} to \subref{fig:goodSamp}, increasing $|a|$ shifts more sampling points toward the excited frequency range.}
  \label{fig:sampling}
\end{figure}

As $|a| \to 1$, most points $z_n \in \discSet$ cluster on an arc of $\T$ close to $a \in \D$; compare \cref{fig:clusterAOne}, \cref{fig:clusterATwo}, and \cref{fig:goodSamp}. This clustering is not perfect because some sampling points can still appear far from $a$, especially if $N$ is large. Choosing a moderate $N$, however, allows us to obtain a sampling grid $\discSet$ that falls completely inside the support of $U$ on $\T$, as shown in \cref{fig:goodSamp}.
Since our \ac{ir} recovery method in \cref{thm:irrec} uses $\trfTF = \cZ\big[\dlag{h}\big]$, our pipeline can only
approximate the \ac{ir} well if the transfer function is closely approximated by its $N$-th Laguerre-Fourier partial sum~\cref{eq:partSum}. That is, we need
\begin{equation}
    \label{eq:pSumApr}
    {\| H - S_N^a H \|}_{H_2(\D)} \approx 0
\end{equation}
to hold, where $S_N^a H$ is defined according to~\cref{eq:partSum} and $H$
denotes the transfer function of the system~\cref{eq:transferfunc}. By Parseval's theorem, this is equivalent to having $\big|\dlag{h}_n\big| \approx 0$ for $n \geq N$. Indeed, we have
\begin{align*}
    \| H - S_N^a H \|_{H_2(\D)}^2 &= \left\| \sum_{n=0}^{\infty} \left\langle H,  L_n^a\right\rangle_{H_2(\D)} L_n^a - \sum_{n=0}^{N-1} \left\langle H, L_n^a \right\rangle_{H_2(\D)} L_n^a  \right\|_{H_2(\D)}^2 \\&= \left\| \sum_{n=N}^{\infty} \left\langle H, L_n^a \right\rangle_{H_2(\D)} L_n^a \right\|_{H_2(\D)}^2 \stackrel{\mathrm{Parseval}}{=}\sum_{n=N}^{\infty} \big|\left\langle H, L_n^a \right\rangle_{H_2(\D)}\big|^2 \\&= \sum_{n=N}^{\infty} \left|\sum_{k=0}^{\ord-1} \res_k \frac{\sqrt{1 - |a|^2}}{1 - \overline{a} \lambda_k} B_a(\lambda_k)^n \right|^2,
\end{align*}
where $L_n^a$ are defined according to~\cref{eq:Lag}, and we use~\cref{eq:lcoeffExplicit}. We observe that the $n$-th Laguerre-Fourier coefficient is a linear combination of exactly $\ord$ convergent geometric sequences, since $\big| B_a(\lambda_k) \big| < 1$, whenever $|\lambda_k| < 1$. 

We conclude that the proposed \ac{ir} recovery method is only applicable whenever we can find $a \in \D$ such that $\left\langle H, L_n^a \right\rangle_{H_2(\D)}$ is negligible for $n > N$ and $N$ is not too large. If a system is defined by many parameters (e.g., the one considered in~\cref{subsec:largeLTI}), or if it has dominant poles close to $\T$, we expect a large $N$ for \cref{eq:pSumApr} to hold. In addition, for large $N$, we need $|a| \approx 1$ for the clustering phenomenon in~\cref{fig:goodSamp} to appear, which makes $\Gamma$ in~\cref{eq:Gamma} poorly conditioned. 

For these reasons, we consider the smaller system from~\cref{fig:sys}. The system is defined by $\ord=50$ mirror-image
poles. Choosing $a=\lagThreeParam$, we obtain the sampling grid $\discSet$ as
shown in~\cref{fig:goodSamp}. In~\cref{fig:SnErr}, we illustrate the discrete Laguerre-Fourier coefficient magnitudes and relative pointwise error for $n=0,\ldots,299$.
The smaller system has quickly decaying Laguerre-Fourier coefficients that satisfy~\cref{eq:pSumApr}; more precisely for $N=300$, \cref{eq:pSumApr} is equal to \irestprintfixed[4]{\expThreeTailAtThreeHundred} (about \(-30\,\)dB). Thus the partial sum satisfies~\cref{eq:pSumApr} to this measured truncation level, with the relative error defined in~\cref{eq:err}. We emphasize that \ac{etfe} cannot be used in this context, as it would not be able to provide any
information about the behaviour of the system over frequencies where $U(z) = 0$ (see~\cref{fig:goodSamp}). Hence, the
proposed method can capture the behaviour of the system along frequencies that were not excited by the input. This is also well reflected by~\cref{tab:sinc}, which depicts the obtained metrics for the experiment up to $N=300$. We note that to achieve the results in~\cref{tab:sinc} we zero-padded the computed $N=300$ coefficients $\dlag{h}$ before recovering the \ac{ir} with~\cref{alg:recover-ir} to $N_q = 20000$ (see remark following~\cref{thm:irrec}). This greatly improves the accuracy of the numerical quadratures used in~\cref{alg:recover-ir}. For this experiment, we observe a condition number $\kappa\left(\Gamma\right) = \irestprintfixed[4]{\expThreeLagCondNum}$ and a total runtime of \irestprintfixed{\expThreeLagRuntime} seconds.
\begin{table}[tbh]
    \centering
    \caption{Recovery errors for band-limited excitation using \ac{letfe}.}
    \label{tab:sinc}
    \begin{filecontents*}{data/exp3_table.csv}
method_key,method_tex,a_tex,N,l1,l2,l1rel,l2rel
lag,\multirow{4}{*}{\shortstack[l]{\ac{letfe}\\$(a = 0.97 \cdot \eu^{\iu 0.052})$}},0.97 \cdot \eu^{\iu 0.052},2,0.417522875,0.3227456543,0.05890942736,0.0625148508
lag,,0.97 \cdot \eu^{\iu 0.052},150,1.945981483,0.4375193199,0.2087326413,0.08313995705
lag,,0.97 \cdot \eu^{\iu 0.052},225,2.084513284,0.4380308199,0.2235920369,0.08323715524
lag,,0.97 \cdot \eu^{\iu 0.052},300,2.59270573,0.4729525594,0.2781025,0.08987318658
\end{filecontents*}
\pgfplotstabletypeset[irest-result-table-multi]{data/exp3_table.csv}
\end{table}
\begin{figure}[tb]
  \begin{subfigure}[t]{0.48\linewidth}
    \centering
    \parbox[t][126pt][t]{\linewidth}{\vspace*{-6pt}\centering\includegraphics{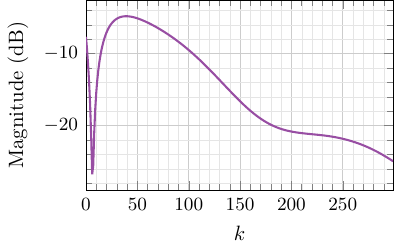}}
    \subcaption{Laguerre-Fourier coefficients.}
  \end{subfigure}
  \hfill
  \begin{subfigure}[t]{0.48\linewidth}
    \centering
    \parbox[t][126pt][t]{\linewidth}{\vspace*{-6pt}\centering\includegraphics{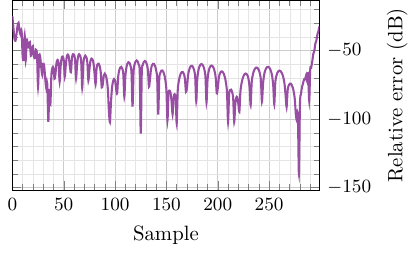}}
    \subcaption{Relative pointwise error.}
  \end{subfigure}
  \caption{Experiment~3: Laguerre-Fourier coefficient magnitudes and relative pointwise error in dB for band-limited excitation. \irestplotkey{Set1-D} $a=\lagThreeParam$.}
  \label{fig:SnErr}
\end{figure}

%% file: 05_conclusion.tex
\section{Conclusion}
\label{sec:conclusion}
In this study, we first propose a generalization of the \ac{etfe} method in \cite{peherstorfer2017}. In particular, we
show that \ac{etfe} arises as a special case of~\cref{alg:proto-laguerre-fourier} and~\cref{alg:laguerre-fourier}. Using the proposed
generalized approach, however, allows us to recover the expansion coefficients of the transfer function even in cases,
where the input vanishes on the equidistant sampling grid of \ac{etfe}. Our second key contribution is~\cref{alg:recover-ir},
which allows us to recover the \ac{ir} given a sequence of Laguerre-Fourier coefficients. The complete pipeline is implemented in \cref{alg:lag-etfe}. In our experiments, we
demonstrate that the proposed methodology can be used to recover the \ac{ir} of very large \ac{lti} systems. In addition,
we show that our Laguerre-Fourier expansion-based method can be used to recover information about system behaviour over unexcited
frequencies, which can be particularly useful for identification scenarios where physical constraints on the system
only allow for the use of band-limited input signals.

In the next phase of our research, we plan to extend our method to \ac{mimo} systems. In addition, we would like to use
our identification pipeline to identify large acoustic systems appearing in real applications. Finally, based on
previous results such as~\cite{qian2011}, we would like to develop a method to automatically obtain the Laguerre
parameter appearing in~\cref{alg:proto-laguerre-fourier} and~\cref{alg:laguerre-fourier}.

%% file: funding.tex
\section*{CRediT Author Statement}
\textbf{Tam\'as D\'ozsa:} Conceptualization, Methodology, Software, Formal Analysis, Investigation, Writing - Original Draft, Visualization; 
\textbf{Art J. R. Pelling:} Conceptualization, Methodology, Software, Validation, Investigation, Data Curation, Writing - Original Draft, Visualization;
\textbf{Matthias Voigt:} Writing - Review \& Editing, Supervision, Funding Acquisition.

\section*{Acknowledgement}
We thank Alexandros Soumelidis for insightful discussions and early feedback over some of the technical details in the manuscript. 

\section*{Funding}
The work of Art J. R. Pelling was funded by the Deutsche Forschungsgemeinschaft (DFG, German
Research Foundation) – project no. \href{https://gepris.dfg.de/gepris/projekt/504367810?language=en}{504367810}. The research of Tam\'as D\'ozsa has received funding from the Swiss
Government Excellence Scholarship No. 2025.0057. The research of Matthias Voigt was funded in part by the Swiss National Science Foundation (SNSF) grant No. \href{https://data.snf.ch/grants/grant/224943}{224943}.

\section*{Code and Data Availability}
\begin{center}%
  \setlength{\fboxsep}{5pt}%
  \fbox{%
  \begin{minipage}{.92\linewidth}
    \textbf{Source code availability}\newline
    The source code and scripts used to compute the results presented in this
    paper can be obtained from
    \begin{center}
      \href{https://doi.org/10.5281/zenodo.21934090}%
        {\texttt{doi:10.5281/zenodo.21934090}}
    \end{center}
    under the MIT licence. The code was authored by Art J. R. Pelling and Tamás Dózsa.
  \end{minipage}}
\end{center}

%% file: acronyms.tex
\begin{acronym}[MIRACLE]\itemsep0pt
    \acro{bibo}[BIBO]{bounded-input bounded-output}
    \acro{blas}[BLAS]{Basic Linear Algebra Subprograms}
    \acro{dft}[DFT]{discrete Fourier transform}
    \acro{dof}[DOF]{degree of freedom}
    \acroplural{dof}[DOFs]{degrees of freedom}
    \acro{evd}[EVD]{eigenvalue decomposition}
    \acro{etfe}[ETFE]{empirical transfer function estimate}
    \acro{fir}[FIR]{finite impulse response}
    \acro{fft}[FFT]{fast Fourier transform}
    \acro{flop}[FLOP]{floating-point operation}
    \acro{fom}[FOM]{full-order model}
    \acro{hrtf}[HRTF]{head-related transfer function}
    \acro{iir}[IIR]{infinite impulse response}
    \acro{ifft}[IFFT]{inverse fast Fourier transform}
    \acro{ir}[IR]{impulse response}
    \acro{io}[IO]{input-output}
    \acro{lapack}[LAPACK]{Linear Algebra Package}
    \acro{letfe}[L-ETFE]{Laguerre empirical transfer function estimate}
    \acro{lti}[LTI]{linear time-invariant}
    \acro{mimo}[MIMO]{multiple-input multiple-output}
    \acro{miso}[MISO]{multiple-input single-output}
    \acro{mor}[MOR]{model order reduction}
    \acro{rir}[RIR]{room impulse response}
    \acro{rkhs}[RKHS]{reproducing kernel Hilbert space}
    \acro{rom}[ROM]{reduced-order model}
    \acro{simo}[SIMO]{single-input multiple-output}    
    \acro{siso}[SISO]{single-input single-output}
    \acro{svd}[SVD]{singular value decomposition}
\end{acronym}

%% file: appendix.tex
\section{Theoretical Background}
\label[appendix]{app:background}
\subsection{Signal and System Spaces}
\label{sec:spaces}
We make use of the following spaces of complex sequences:
\begin{align*}
  \ell_1 &\coloneqq \left\{ x \in \ell \,:\, {\|x\|}_{\ell_1} \coloneqq \sum_{n=0}^{\infty} |x_n| < \infty \right\}, \\
  \ell_2 &\coloneqq \left\{ x \in \ell \,:\, {\|x\|}_{\ell_2} \coloneqq \left(\sum_{n=0}^{\infty} |x_n|^2\right)^{1/2} < \infty \right\}.
\end{align*}
The latter forms a Hilbert space endowed with the inner product
\begin{equation*} {\langle x, u \rangle}_{\ell_2} \coloneqq \sum_{n=0}^{\infty} x_n \overline{u_n} \quad (x, u \in \ell_2).
\end{equation*}
If for any bounded input sequence $u$ the output sequence $y$ is also bounded, the system is called \ac{bibo}-stable.
This condition is equivalent to $h \in \ell_1$~\cite{vandenhof2005}. Consequently, the transfer function of
\ac{bibo}-stable systems satisfies $H \in H_{\infty}(\D)$, where $H_{\infty}(\D)$ denotes the space of holomorphic
functions bounded on $\D$. Then, it also belongs to the Hardy space $H_2(\D)$~\cite{vandenhof2005} defined as
\begin{equation*}
  H_2(\D) \coloneqq \left\{ f \in \fA(\D) \,:\, {\| f \|}_{H_2(\D)} \coloneqq \sup_{r < 1} \left( \frac{1}{2\pi} \int_{-\pi}^{\pi} \left| f\left(r  \eu^{\iu t}\right) \right|^2 \,\du t \right)^{1/2} < \infty \right\},
\end{equation*}
where $\fA(\D)$ denotes the space of all holomorphic functions on $\D$. Endowed with the inner product
\begin{equation}
  \label{eq:H2inner}
  {\langle f, g \rangle}_{H_2(\D)} \coloneqq \frac{1}{2 \pi} \int_{-\pi}^{\pi} f( \eu^{\iu t} ) \overline{g ( \eu^{\iu t} )} \,\du t,
\end{equation}
$H_2(\D)$ is a Hilbert space. In addition, since the point evaluation functional is bounded, $H_2(\D)$ is a reproducing kernel Hilbert space (\ac{rkhs}) with the kernel~\cref{eq:kernel}. For a deep discussion on \ac{rkhs}s we recommend~\cite{aronszajn1950}. 

\subsection{Blaschke Factors}
\label{sec:blaschke-factors}
Blaschke factors given in~\cref{eq:Blaschfac} are the building blocks of Laguerre functions and appear directly in the discretization set used in
\cref{sec:proposed-method}. They form a group with respect to function composition and are self-maps on $\T$ and $\D$.
The inverse of $B^a$ is given by
\begin{equation}
  \label{eq:BlaschInv}
  (B^{a})^{-1} = B^{-a} \quad (a \in \D).
\end{equation}
This property is heavily exploited in Theorem~\ref{thm:irrec}. We note that more general definitions exist for these objects; see, e.g.,~\cite{Garcia2018}. Finally, we recall the following property used in the proof of~\cref{lem:L-trans}:
\begin{lemma}
  Let $a \in \D$ and $z \in \overline{\D}$. Then, \begin{equation}
    \label{eq:elratAndBlasch}
    \left(1 - |a|^2\right) \cdot \xi(z, a) = \overline{a}B^{a}(z) + 1 
  \end{equation}
  where $\xi$ is the Szeg\H{o} kernel given in \cref{eq:kernel}.
\end{lemma}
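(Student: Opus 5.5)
The identity is a direct algebraic verification, so the plan is to bring the right-hand side to a single fraction and compare it with the left-hand side. First I would check that both sides are well defined. Since $a \in \D$ and $z \in \overline{\D}$, we have $|\overline{a}z| \leq |a| < 1$, so $1 - \overline{a}z \neq 0$. Hence both $\xi(z,a) = (1-\overline{a}z)^{-1}$ from \cref{eq:kernel} and $B^a(z)$ from \cref{eq:Blaschfac} make sense on all of $\overline{\D}$. Strictly speaking, \cref{eq:kernel} is stated only for $z \in \D$, but the same formula extends to $z \in \T$ with a nonvanishing denominator. I would note this explicitly so that the lemma covers the boundary points used in the proof of \cref{lem:L-trans}.

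Next I would insert the definition of $B^a$ and write the right-hand side over the common denominator $1 - \overline{a}z$:
\begin{equation*}
  \overline{a}B^a(z) + 1 = \frac{\overline{a}(z-a) + (1-\overline{a}z)}{1-\overline{a}z}.
\end{equation*}
In the numerator the terms $\overline{a}z$ and $-\overline{a}z$ cancel, and $-\overline{a}a = -|a|^2$. The numerator therefore reduces to $1 - |a|^2$, and the right-hand side equals $(1-|a|^2)/(1-\overline{a}z) = (1-|a|^2)\,\xi(z,a)$, which is the claim.

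I do not expect a genuine obstacle. The only points needing care are the nonvanishing of the denominator on the closed disk and the extension of the Szeg\H{o} kernel formula to $z \in \T$. Both follow from $|a| < 1$.
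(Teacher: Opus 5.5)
Your proof is correct and is essentially the paper's argument: the paper also writes $1+\overline{a}B^a(z)$ over the common denominator $1-\overline{a}z$ and observes that the numerator collapses to $1-|a|^2$. Your extra remark that $1-\overline{a}z\neq 0$ for $|z|\leq 1$, so that the kernel formula extends to the boundary points where the lemma is later applied, is a useful detail the paper leaves implicit.
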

\begin{proof}
  Indeed,
  \begin{align*}
    1 + \overline{a}B^a(z) &= 1 + \frac{\overline{a}z - |a|^2}{1 - \overline{a} z} = \frac{1 - \overline{a}z}{1 - \overline{a}z} + \frac{\overline{a}z - |a|^2}{1 - \overline{a} z} \\&= \frac{1 - \overline{a}z + \overline{a}z - |a|^2}{1 - \overline{a}z} =  \left(1 - |a|^2\right) \cdot \xi(z, a).
  \end{align*}
\end{proof}
\section{Considered System Class}
\label[appendix]{app:sys}
We discuss some important details about the considered class of systems. First, we show that the proposed method assumes the strict properness of the transfer function of the system to be identified. Let $\tH, H$ and $h = (h_0, h_1, \ldots)$ be defined according to~\cref{eq:trfFull},~\cref{eq:transferfunc} and~\cref{eq:ir}. Furthermore, define $\tilde{h} \in \ell_2$ as the sequence that satisfies $\cZ \big[ \tilde{h} \big] = \tH$. Then, by~\cref{eq:trfFull} and $\cZ[h] = H$, we have 
\begin{equation*}
    \tilde{h} \coloneqq \big(\tilde{h}_0, \tilde{h}_1,\tilde{h}_2, \ldots\big) = (0, h_0, h_1, \ldots).
\end{equation*}
Hence, we can only recover the \ac{ir} of a system if the first entry is $0$. This implies that the system's transfer function is strictly proper, which, due to our definition of the $\cZ$-transform~\cref{eq:Ztrans} means that $\tH(0) = 0$. Indeed,
\begin{equation*}
    \lim_{z \to 0} \tH(z) = \lim_{z \to 0} \cZ \big[\tilde{h} \big](z) \stackrel{\mathclap{\cref{eq:Ztrans}}}{=} \lim_{z \to 0} \sum_{n=0}^{\infty} \tilde{h}_n z^n = \tilde{h}_0 + \lim_{z \to 0} \sum_{n=1}^{\infty} \tilde{h}_n z^n = \tilde{h}_0.
\end{equation*}
Since $\tilde{h}_0 = 0$ by construction, this gives $\lim_{z \to 0} \tH(z) = 0$. On the other hand, $\tH$ is the rational transfer function of the original system and so
\begin{equation*}
    \tH(z) = \frac{B(z)}{A(z)} = \frac{b_0 + b_1 z + \cdots + b_m z^m}{a_0 + a_1 z + \cdots + a_n z^n}.
\end{equation*}
Under the assumption that $a_0 \neq 0$, evaluating the limit gives
\begin{equation*}
    \lim_{z \to 0} \tH(z) =  \lim_{z \to 0}  \frac{b_0 + b_1 z + \cdots + b_m z^m}{a_0 + a_1 z + \cdots + a_n z^n} = \frac{b_0}{a_0} = 0,
\end{equation*}
which implies $b_0 = 0$. If instead $a_0 = b_0 = 0$, then the numerator and denominator of $\tH$ can both be divided by $z$ and after a finite number of such reductions, one would end up with a rational function whose denominator polynomial is not homogeneous. This in turn implies the homogeneity of the corresponding numerator polynomial. This coincides with the transfer function being strictly proper. 

Finally, we discuss why~\cref{eq:transferfunc} holds when using the $\cZ$-transform with nonnegative exponents as defined in~\cref{eq:Ztrans}. The discrete-time system can be interpreted as an $\ell \to \ell$ operator mapping inputs to outputs. The effect of this operator on an input sequence $u \in \ell$ is given by~\cref{eq:conv}. It is well-known~\cite[Eq. (7.16) and Eq. (7.17)]{kailath1980} that the \ac{ir} of a discrete-time \ac{siso} \ac{lti} system satisfies~\cref{eq:ir}. Notice that in~\cref{eq:ir}, the parameters $\lambda_k$ belong to $\D$ for $k=0,\ldots,\ord-1$. This means, that applying the transformation~\cref{eq:Ztrans} to $h$ in~\cref{eq:ir} gives
\begin{equation*}
    H(z) = \cZ[h](z) = \sum_{n=0}^{\infty} h_n z^n = \sum_{n=0}^{\infty} \sum_{k=0}^{\ord-1} \res_k \overline{\lambda}_k^n z^n = \sum_{k=0}^{\ord-1} \res_k \sum_{n=0}^{\infty} \overline{\lambda_k}^n z^n.
\end{equation*}
Notice that if $z \in \overline{\D}$, then $\overline{\lambda}_k^n z^n \in \D$. Consequently, the infinite sum in the last equation is a geometric series. This gives~\cref{eq:transferfunc}. Finally, applying the $\cZ$-transform to obtain~\cref{eq:trfFull} gives
\begin{equation*}
   \tH(z) = \cZ \big[ \tilde{h} \big](z) = \sum_{n=0}^{\infty} \tilde{h}_n z^n = \sum_{n=1}^{\infty} h_{n-1} z^n = z \sum_{n=0}^{\infty} h_{n} z^n = z H(z).
\end{equation*}
\section{Proofs}
\label[appendix]{app:proofs}
\subsection{Proof of \texorpdfstring{\cref{thm:discorth}}{Theorem~\ref{thm:discorth}}}
\label{sec:proof-discorth}
Before proving \cref{thm:discorth}, we recall the so-called Christoffel-Darboux formula~\cite{vandenhof2005} (also known
as Džrbašjan's identity~\cite{dzrbasjan1962,lorentz1996}) summarized in the following theorem that will be used in the
proof.
\begin{theorem}[Christoffel-Darboux formula]
  \label{thm:CD}
  Let $a \in \D$, $z,w \in \overline{\D}$. Then,
  \begin{equation*}
    \sum_{n=0}^{N-1} L_n^a(z) \overline{L_n^a(w)} =
    \frac{1 - B^a(z)^N \overline{B^a(w)}^N}{1 - z\overline{w}},
  \end{equation*}
  where $L_n^a$ denotes the $n$-th Laguerre function defined in \cref{eq:Lag}.
\end{theorem}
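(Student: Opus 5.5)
The plan is a direct computation: factor the product of Laguerre functions, sum a finite geometric series, and then use one algebraic identity for Blaschke factors.

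First, by \cref{eq:Lag}, for every $n$ we have
\begin{equation*}
  L_n^a(z)\overline{L_n^a(w)} = \frac{1-|a|^2}{(1-\overline{a}z)(1-a\overline{w})}\, q^n,
  \qquad q \coloneqq B^a(z)\overline{B^a(w)}.
\end{equation*}
The prefactor does not depend on $n$, so it comes out of the sum. What remains is $\sum_{n=0}^{N-1} q^n = (1-q^N)/(1-q)$, valid whenever $q\neq 1$. Since $q^N = B^a(z)^N\overline{B^a(w)}^N$, the numerator $1-q^N$ is already the numerator of the claimed formula.

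It therefore suffices to establish the identity
\begin{equation*}
  1 - B^a(z)\overline{B^a(w)} = \frac{(1-|a|^2)(1-z\overline{w})}{(1-\overline{a}z)(1-a\overline{w})}.
\end{equation*}
Once it holds, the prefactor cancels against $1/(1-q)$ and leaves exactly $1/(1-z\overline{w})$. To prove it, I would put the left-hand side over the common denominator $(1-\overline{a}z)(1-a\overline{w})$ and expand the numerator $(1-\overline{a}z)(1-a\overline{w})-(z-a)(\overline{w}-\overline{a})$. The cross terms $a\overline{w}$ and $\overline{a}z$ cancel, leaving $1-|a|^2-z\overline{w}+|a|^2z\overline{w} = (1-|a|^2)(1-z\overline{w})$. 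This is the same kind of manipulation as in \cref{eq:elratAndBlasch}.

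The only real obstacle is the degenerate case $q=1$. The identity shows that $q=1$ exactly when $z\overline{w}=1$, which for $z,w\in\overline{\D}$ means $z=w\in\T$. In that case the right-hand side is the indeterminate form $0/0$, and the formula must be read as a limit. I would check this directly. The left-hand side equals $N(1-|a|^2)/|1-\overline{a}z|^2$. The right-hand side, rewritten as $\frac{1-|a|^2}{(1-\overline{a}z)(1-a\overline{w})}\cdot\frac{1-q^N}{1-q}$, tends to the same value as $w\to z$ because $(1-q^N)/(1-q)\to N$. Hence the formula holds everywhere, with the continuous extension on the diagonal of $\T\times\T$. For $z\neq w$ on $\T$, the case relevant to \cref{thm:discorth}, no limiting argument is needed.
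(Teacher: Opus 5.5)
Your proof is correct. The paper gives no details beyond remarking that \cref{thm:CD} ``can be proven by induction''. Your closed-form geometric-series computation is essentially that induction unrolled. With $q = B^a(z)\overline{B^a(w)}$, the inductive step amounts to checking $L_N^a(z)\overline{L_N^a(w)} = q^N(1-q)/(1-z\overline{w})$, and this is exactly your identity $1-B^a(z)\overline{B^a(w)} = (1-|a|^2)(1-z\overline{w})/\big((1-\overline{a}z)(1-a\overline{w})\big)$. The base case $N=1$ reduces to the same identity.

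What your version adds is an explicit treatment of the diagonal $z=w\in\T$. There the right-hand side, as stated in the theorem for all $z,w\in\overline{\D}$, is the indeterminate form $0/0$. The paper deals with this only later, in the proof of \cref{thm:discorth}, where it evaluates the limit by l'H\^{o}pital. Your factorization gives the limit immediately, because $(1-q^N)/(1-q)=\sum_{n=0}^{N-1}q^n$ is a polynomial in $q$ that tends to $N$.
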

\cref{thm:CD} can be proven by induction. Next, we prove \cref{thm:discorth}.
\begin{proof}
  Consider the matrix of sampled Laguerre functions \(\L\) as in \cref{eq:Lmat} and let \(\tL_{n}^a\) denote the
  \(n\)-th row of \(\L\). By \cref{thm:CD}, it holds
  \begin{equation}
    \label{eq:discOrthoInner}
    \left\langle \tL_{n}^a, \tL_{k}^a \right\rangle_{\C^N}=
    \sum_{j=0}^{N-1} L_n^a(z_j) \overline{L_k^a(z_j)} =
    \frac{1 - B^a(z_n)^N \overline{B^a(z_k)}^N}{1-z_n\overline{z_k}}.
  \end{equation}
  It follows from \cref{eq:discSet} that the numerator of \cref{eq:discOrthoInner} is zero for \(z_n\neq z_k\). Next,
  for \(z_n\rightarrow z_k\) and exploiting $z_k \in \discSet$, we obtain 
  \begin{align*}
    \lim_{z\rightarrow z_k}\frac{1 - B^a(z)^N \overline{B^a(z_k)}^N}{1 - z\overline{z_k}} &= \lim_{z\rightarrow z_k}\frac{1 - B^a(z)^N}{1-z\overline{z_k}} \\ &=\lim_{z\rightarrow z_k}\frac{\frac{\mathrm{d}}{\mathrm{d}z}(1 - B^a(z)^N)}{\frac{\mathrm{d}}{\mathrm{d}z}(1-z\overline{z_k})}\\ &=\lim_{z\rightarrow z_k}\frac{N\frac{1-|a|^2}{(1-\overline{a}z)^2}B^a(z)^{N-1}}{\overline{z_k}} \\
    &=\lim_{z\rightarrow z_k}N\frac{1-|a|^2}{\overline{z_k}(z-a)(1-\overline{a}z)}B^a(z)^N = N\frac{1-|a|^2}{|1-\overline{a}z_k|^2}=\sigma(z_k)^2,
  \end{align*}
  where the last equality of the first line is obtained by the rule of l'H\^{o}pital. With the above, it is easily
  verified that \(\L\operatorname{diag}(\sigma(z_0),\,\dots,\,\sigma(z_{N-1}))^{-1}\) is a unitary matrix which
  concludes the proof.
\end{proof}
\subsection{Proof of \texorpdfstring{\cref{lem:C-cond}}{Lemma~\ref{lem:C-cond}}}
\label{sec:proof-C-cond}
\begin{proof}
  Recall that the eigenvalues of $\circulant$ are given by \({[\fourier c]}_k=\mu_k\). By \cref{thm:circulant}
  \begin{align*}
    \circulant^*\circulant&=\fourier ^*\operatorname{diag}\left(\mu_1,\,\dots,\,\mu_{N}\right)\operatorname{diag}\left(\overline{\mu_1},\,\dots,\overline{\mu_{N}}\right)\fourier \\&=\fourier ^*\operatorname{diag}\left(|\mu_1|^2,\,\dots,\,|\mu_N|^2\right)\fourier .
  \end{align*}
  This implies that
  \begin{equation*}
    {\lVert \circulant\lVert}_2=\sigma_\mathrm{max}(\circulant)=\sqrt{\lambda_\mathrm{max}(\circulant^*\circulant)}=\sqrt{\max_{k=1,\,\dots,N}|\mu_k|^2}=\max_{k=1,\,\dots,N}|{[\fourier c]}_k|,
  \end{equation*}
  where $\sigma_{\max}$ and $\lambda_{\max}$ denote the maximum singular value and maximum eigenvalue of a (symmetric) matrix.
  Together with the connection \(\mu_k\big(\circulant^{-1}\big)=\mu_k^{-1}\), the proof is complete.
\end{proof}
\subsection{Proof of \texorpdfstring{\cref{thm:condGamma}}{Theorem~\ref{thm:condGamma}}}
\label{sec:proof-Gamma}
\begin{proof}
  Consider the first column $\gamma$ of $\Gamma$ as given in \cref{eq:1stColGamma}. By \cref{thm:circulant} we know that
  \begin{equation*}
    \mu_j \coloneqq \gamma_0 + \sum_{k=1}^{N-1} \gamma_k \omega^{kj} \quad (j=0,\ldots,N-1)
  \end{equation*}
  are the eigenvalues of $\Gamma$, where $\omega$ is defined according to \cref{eq:omega}. Noticing that according to \cref{eq:1stColGamma}, we have
  \begin{equation*}
  \gamma_k = \frac{1}{\sqrt{1-|a|^2}}\left(\dlag{u}[k] + \overline{a}\dlag{u}[k-1 \, \textrm{mod} \, N]\right).
  \end{equation*}
  Rearranging the sum, for $j=0,\ldots,N-1$
  we get
  \begin{equation*}
    \mu_j = \frac{1}{\sqrt{1-|a|^2}}\left( (1 + \overline{a}\omega^{j}) \dlag{u}[0] + \sum_{k=1}^{N-1} \left(\omega^{jk} + \overline{a} \omega^{j ( k+1 \, \textrm{mod} \, N)}\right) \dlag{u}[k]\right).
  \end{equation*}
  Notice that by the triangle and reverse triangle inequalities, for all $0 \le j,k < N-1$, we have 
  \begin{equation*}
    0 < 1 - |a| \leq \left|\omega^{jk} + \overline{a}\omega^{j (k+1 \, \textrm{mod} \, N)}\right| \leq 1 + |a|
  \end{equation*}
  for any $a \in \D$. By using again the reverse triangle inequality, this yields 
  \begin{align*}
    |\mu_j| &= \frac{1}{\sqrt{1-|a|^2}}\left| (1 + \overline{a}\omega^{j}) \dlag{u}[0] + \sum_{k=1}^{N-1} \left(\omega^{jk} + \overline{a} \omega^{j ( k+1 \ \textrm{mod} \ N)}\right) \dlag{u}[k] \right| \\ &\geq \frac{1 - |a|}{\sqrt{1 - |a|^2}} \left| |\dlag{u}[0]| - \discs\right|,
  \end{align*}
  where $\discs$ is defined according to \cref{eq:wellCondCond}. 
  If the diagonal dominance condition from \cref{eq:wellCondCond} holds, then we have
  \begin{equation*}
    |\mu_j| \geq \frac{1 - |a|}{\sqrt{1 - |a|^2}} (|\dlag{u}[0]| - \discs) \quad (j=0,\ldots,N-1).
  \end{equation*}
  By analogous arguments, we obtain
  \begin{equation*}
    |\mu_j| \leq \frac{1 + |a|}{\sqrt{1 - |a|^2}} (|\dlag{u}[0]| + \discs) \quad (j=0,\ldots,N-1).
  \end{equation*}
  Exploiting that $\Gamma$ is normal, we conclude
  \begin{equation*}
    \kappa_2(\Gamma) = \frac{\max_{j=0,\ldots,N-1}|\mu_j|}{\min_{j=0,\ldots,N-1} |\mu_j|} \leq \frac{1 + |a|}{1 - |a|}\cdot\frac{|\dlag{u}[0]| + \discs}{|\dlag{u}[0]| - \discs}.
  \end{equation*}
\end{proof}

\section{Implementation Details}
\label[appendix]{app:impl}

\subsection{General Remarks on Experiment Design}

In both of the considered experimental systems, the parameters $\{ \lambda_k \}_{k=0}^{\ord-1} \subset \D$ and the
corresponding residues $\{\res_k\}_{k=0}^{\ord-1} \subset \C$ are chosen randomly according to a uniform distribution.
The parameters $\{ \lambda_k \}_{k=0}^{\ord-1} \subset \D$ and $\{\res_k\}_{k=0}^{\ord-1} \subset \C$ define a causal
\ac{bibo}-stable \ac{siso} \ac{lti} system whose transfer function can be expressed according to \cref{eq:transferfunc}.
We note that for our experiments, the parameters $\lambda_k$ and $\res_k$ are chosen as complex conjugate pairs for
$k=0,\ldots,\ord-1$ to ensure that the system's transfer function is real-rational. All of our experiments are evaluated
on an Apple MacBook Pro (M4, 32 GB RAM, macOS Tahoe) using MathWorks MATLAB 2025a. The code of the proposed experiments is
available, and our results are fully reproducible; see the code statement at the end of the manuscript.

\subsection{Implementation Details for Experiments with the Large System}

Next, we detail some implementation choices for the experiments whose results are detailed in~\cref{tab:exp}. In
particular, we show how to construct the excitation that fulfils the conditions in~\cref{eq:expInputProps}. Consider an
$N$-point equidistant sampling of the interval $[0, 10]$, i.e., let $t_0 \coloneqq 0$ and
$t_n \coloneqq t_0 + n\cdot \Delta t$, where $\Delta t \coloneqq 10/N$. First, define the sequences $s = {(s_n)}_{n \ge 0}, w = {(w_n)}_{n \ge 0}\in \ell_2$ as
\begin{equation*}
s_n \coloneqq \begin{cases}
  \eu^{-t_n^2 / 2}, & 0 \leq n \leq N, \\
  0, & \textrm{otherwise},
\end{cases}
\end{equation*}
and
\begin{equation*}
  w_n \coloneqq \begin{cases}
    \eu^{-t_n^2 / 6}, & 0 \leq n \leq N, \\
    0, & \textrm{otherwise}.
  \end{cases}
\end{equation*}
Define furthermore the constant
\begin{equation*}
  \alpha \coloneqq -\frac{\cZ[w](1)}{\cZ[s](1)}.
\end{equation*}
Then, the input sequence can be given as
\begin{equation}
  \label{eq:slowdecInp}
  u_n \coloneqq \begin{cases}
    \alpha \cdot s_n + w_n, & 0 \leq n \leq N, \\
    0, & \textrm{otherwise}.
  \end{cases}
\end{equation}
This input is clearly in $\ell_2$ and its $\cZ$-transform satisfies
\begin{equation}
  \label{eq:Ugone}
  U(1) = \cZ[u](1) = \alpha \cdot \cZ[s](1) + \cZ[w](1) = -\frac{\cZ[w](1)}{\cZ[s](1)} \cdot \cZ[s](1) +  \cZ[w](1) = 0.
\end{equation}
The considered input sequence $u$ is illustrated in \cref{fig:expu-appendix}.
\begin{figure}[bth]
  \begin{subfigure}[t]{0.5\linewidth}
    \centering
    \parbox[t][130pt][t]{\linewidth}{\vspace{0pt}\centering\includegraphics{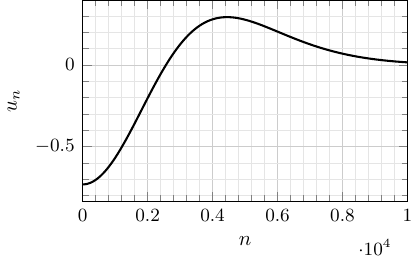}}
  \subcaption{Time domain}
  \end{subfigure}
  \begin{subfigure}[t]{0.5\linewidth}
    \centering
    \parbox[t][130pt][t]{\linewidth}{\vspace{0pt}\centering\includegraphics{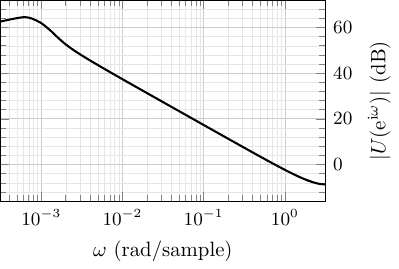}}
    \subcaption{Frequency domain}
  \end{subfigure}
  \caption{Experiment~2 input signal generated according to~\cref{eq:slowdecInp}. By construction, $U(1)=0$, hence \ac{etfe} cannot be applied directly.}
  \label{fig:expu-appendix}
\end{figure}
\begin{figure}[bth]
  \begin{subfigure}[t]{0.5\linewidth}
    \centering
    \parbox[t][130pt][t]{\linewidth}{\vspace{0pt}\centering\includegraphics{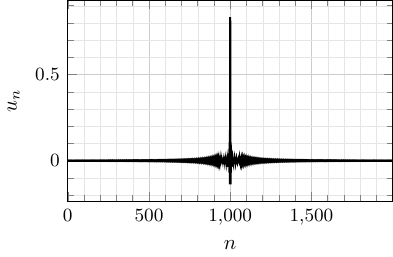}}
    \subcaption{Time domain.}
  \end{subfigure}
  \begin{subfigure}[t]{0.5\linewidth}
    \centering
    \parbox[t][130pt][t]{\linewidth}{\vspace{0pt}\centering\includegraphics{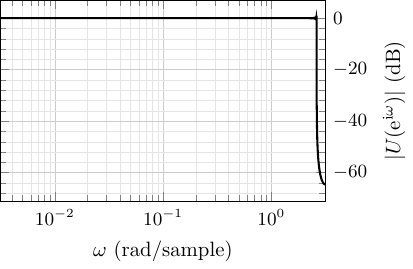}}
    \subcaption{Frequency domain.}
  \end{subfigure}
  \caption{Experiment~3 band-limited input signal~\cref{eq:sinc}.}
  \label{fig:sincInp}
\end{figure}
%

%% file: references.bib
@inproceedings{cdcAngino,
	address = {Honolulu, HI, USA},
	title = {{{${H}_2$}}-optimal model order reduction using hyperbolic geometry},
	booktitle = {2026 {IEEE} 65th {Conference} on {Decision} and {Control} ({CDC})},
	author = {Angino, Andrea and Dózsa, Tamás and Voigt, Matthias},
	year = {2026},
	note = {Accepted for presentation},
}

@article{muller-trapet2020,
	title = {On the practical application of the impulse response measurement method with swept-sine signals in building acoustics},
	volume = {148},
	doi = {10.1121/10.0001916},
	number = {4},
	urldate = {2020-11-02},
	journal = {The Journal of the Acoustical Society of America},
	author = {Müller-Trapet, Markus},
	year = {2020},
	pages = {1864--1878},
}

@article{pillonetto2014,
	title = {Kernel methods in system identification, machine learning and function estimation: {A} survey},
	volume = {50},
	issn = {00051098},
	shorttitle = {Kernel methods in system identification, machine learning and function estimation},
	url = {https://linkinghub.elsevier.com/retrieve/pii/S000510981400020X},
	doi = {10.1016/j.automatica.2014.01.001},
	language = {en},
	number = {3},
	urldate = {2024-05-07},
	journal = {Automatica},
	author = {Pillonetto, Gianluigi and Dinuzzo, Francesco and Chen, Tianshi and De Nicolao, Giuseppe and Ljung, Lennart},
	year = {2014},
	pages = {657--682},
}

@article{soumelidis2017,
	title = {Hyperbolic geometrical approach to model reduction},
	volume = {50},
	issn = {24058963},
	url = {https://linkinghub.elsevier.com/retrieve/pii/S2405896317324059},
	doi = {10.1016/j.ifacol.2017.08.1784},
	language = {en},
	number = {1},
	urldate = {2026-03-26},
	journal = {IFAC-PapersOnLine},
	author = {Soumelidis, Alexandros and Bokor, József and Schipp, Ferenc and Szabó, Zoltán},
	year = {2017},
	pages = {12905--12910},
}

@article{trefethen2014,
	title = {The exponentially convergent trapezoidal rule},
	volume = {56},
	issn = {0036-1445, 1095-7200},
	url = {http://epubs.siam.org/doi/10.1137/130932132},
	doi = {10.1137/130932132},
	language = {en},
	number = {3},
	urldate = {2022-03-18},
	journal = {SIAM Review},
	author = {Trefethen, Lloyd N. and Weideman, J. A. C.},
	year = {2014},
	pages = {385--458},
}

@article{vanoverschee1994,
	title = {{N4SID}: {Subspace} algorithms for the identification of combined deterministic-stochastic systems},
	volume = {30},
	issn = {00051098},
	shorttitle = {{N4SID}},
	url = {https://linkinghub.elsevier.com/retrieve/pii/0005109894902305},
	doi = {10.1016/0005-1098(94)90230-5},
	language = {en},
	number = {1},
	urldate = {2021-06-14},
	journal = {Automatica},
	author = {Van Overschee, Peter and De Moor, Bart},
	year = {1994},
	pages = {75--93},
}

@article{Venkataraman2019,
	title = {System identification for a small, rudderless, fixed-wing unmanned aircraft},
	volume = {56},
	issn = {0021-8669, 1533-3868},
	url = {https://arc.aiaa.org/doi/10.2514/1.C035141},
	doi = {10.2514/1.C035141},
	language = {en},
	number = {3},
	urldate = {2026-04-29},
	journal = {Journal of Aircraft},
	author = {Venkataraman, Raghu and Seiler, Peter},
	year = {2019},
	pages = {1126--1134},
}

@article{verhaegen1992,
	title = {Subspace model identification {Part} 1. {The} output-error state-space model identification class of algorithms},
	volume = {56},
	issn = {0020-7179, 1366-5820},
	url = {http://www.tandfonline.com/doi/abs/10.1080/00207179208934363},
	doi = {10.1080/00207179208934363},
	language = {en},
	number = {5},
	urldate = {2021-06-17},
	journal = {International Journal of Control},
	author = {Verhaegen, Michel and Dewilde, Patrick},
	year = {1992},
	pages = {1187--1210},
}

@article{rebillat2018,
	title = {Comparison of least squares and exponential sine sweep methods for {Parallel} {Hammerstein} {Models} estimation},
	volume = {104},
	issn = {08883270},
	url = {https://linkinghub.elsevier.com/retrieve/pii/S0888327017305976},
	doi = {10.1016/j.ymssp.2017.11.015},
	language = {en},
	urldate = {2022-10-15},
	journal = {Mechanical Systems and Signal Processing},
	author = {Rebillat, Marc and Schoukens, Maarten},
	year = {2018},
	pages = {851--865},
}

@article{illg2024,
	title = {Regularized finite impulse response models versus {Laguerre} models: {A} comparison},
	volume = {58},
	issn = {2405-8963},
	shorttitle = {Regularized {Finite} {Impulse} {Response} {Models} versus {Laguerre} {Models}},
	url = {https://www.sciencedirect.com/science/article/pii/S2405896324012862},
	doi = {10.1016/j.ifacol.2024.08.506},
	number = {15},
	urldate = {2025-12-12},
	journal = {IFAC-PapersOnLine},
	author = {Illg, Christopher and Nelles, Oliver},
	year = {2024},
	pages = {67--72},
}

@incollection{vandenhof2005,
	address = {London, UK},
	title = {System {Identification} with {Generalized} {Orthonormal} {Basis} {Functions}},
	isbn = {978-1-84628-178-5},
	doi = {10.1007/1-84628-178-4_4},
	language = {en},
	urldate = {2026-01-20},
	booktitle = {Modelling and {Identification} with {Generalized} {Orthonormal} {Basis} {Functions}},
	publisher = {Springer},
	author = {Van den Hof, Paul and Ninness, Brett},
	editor = {Heuberger, Peter S.C. and Van den Hof, Paul M.J. and Wahlberg, Bo},
	year = {2005},
}

@article{tuma2019,
	address = {Banska Bystrica, Slovakia},
	title = {Impulse response approximation of dead time {LTI} {SISO} systems using generalized {Laguerre} functions},
	volume = {2116},
	url = {https://pubs.aip.org/aip/acp/article/759966},
	doi = {10.1063/1.5114317},
	language = {en},
	number = {1},
	urldate = {2025-12-12},
	journal = {AIP Conference Proceedings},
	author = {Tuma, Martin and Jura, Pavel},
	year = {2019},
	pages = {310010},
}

@article{qian2011,
	title = {Adaptive {Fourier} series—a variation of greedy algorithm},
	volume = {34},
	issn = {1572-9044},
	url = {https://doi.org/10.1007/s10444-010-9153-4},
	doi = {10.1007/s10444-010-9153-4},
	language = {en},
	number = {3},
	urldate = {2026-03-26},
	journal = {Advances in Computational Mathematics},
	author = {Qian, Tao and Wang, Yan-Bo},
	year = {2011},
	pages = {279--293},
}

@article{he2020,
	title = {Frequency constrained predictive control for large scaled spatially distributed systems},
	volume = {100},
	issn = {09670661},
	url = {https://linkinghub.elsevier.com/retrieve/pii/S0967066120300873},
	doi = {10.1016/j.conengprac.2020.104440},
	language = {en},
	urldate = {2026-04-29},
	journal = {Control Engineering Practice},
	author = {He, Ning and Shen, Chao},
	year = {2020},
	pages = {104440},
}

@article{hansen2002,
	title = {Deconvolution and regularization with {Toeplitz} matrices},
	volume = {29},
	issn = {1572-9265},
	url = {https://doi.org/10.1023/A:1015222829062},
	doi = {10.1023/A:1015222829062},
	language = {en},
	number = {4},
	urldate = {2026-01-16},
	journal = {Numerical Algorithms},
	author = {Hansen, Per Christian},
	year = {2002},
	pages = {323--378},
}

@article{dozsa2024,
	title = {On {Bernoulli}’s method},
	volume = {62},
	issn = {0036-1429},
	url = {https://epubs.siam.org/doi/10.1137/22M1528501},
	doi = {10.1137/22M1528501},
	number = {3},
	urldate = {2025-12-10},
	journal = {SIAM Journal on Numerical Analysis},
	publisher = {Society for Industrial and Applied Mathematics},
	author = {Dózsa, Tamás and Schipp, Ferenc and Soumelidis, Alexandros},
	year = {2024},
	pages = {1259--1277},
}

@article{Candan2011,
	title = {On the eigenstructure of {DFT} matrices [{DSP} {Education}]},
	volume = {28},
	copyright = {https://ieeexplore.ieee.org/Xplorehelp/downloads/license-information/IEEE.html},
	issn = {1053-5888},
	url = {http://ieeexplore.ieee.org/document/5714385/},
	doi = {10.1109/MSP.2010.940004},
	number = {2},
	urldate = {2026-07-09},
	journal = {IEEE Signal Processing Magazine},
	author = {Candan, Cagatay},
	year = {2011},
	pages = {105--108},
}

@article{gustavsen1999,
	title = {Rational approximation of frequency domain responses by vector fitting},
	volume = {14},
	copyright = {https://ieeexplore.ieee.org/Xplorehelp/downloads/license-information/IEEE.html},
	issn = {08858977},
	url = {http://ieeexplore.ieee.org/document/772353/},
	doi = {10.1109/61.772353},
	number = {3},
	urldate = {2026-04-22},
	journal = {IEEE Transactions on Power Delivery},
	author = {Gustavsen, B. and Semlyen, A.},
	year = {1999},
	pages = {1052--1061},
}

@book{garcia2018,
	address = {Cham, Switzerland},
	title = {Finite {Blaschke} {Products} and {Their} {Connections}},
	copyright = {http://www.springer.com/tdm},
	isbn = {978-3-319-78246-1 978-3-319-78247-8},
	url = {http://link.springer.com/10.1007/978-3-319-78247-8},
	doi = {10.1007/978-3-319-78247-8},
	language = {en},
	urldate = {2026-03-26},
	publisher = {Springer},
	author = {Garcia, Stephan Ramon and Mashreghi, Javad and Ross, William T.},
	year = {2018},
}

@book{golub2013,
	address = {Baltimore, MD, USA},
	edition = {4th},
	series = {Johns {Hopkins} {Studies} in the {Mathematical} {Sciences}},
	title = {Matrix {Computations}},
	isbn = {978-1-4214-0794-4},
	language = {en},
	publisher = {The Johns Hopkins University Press},
	author = {Golub, Gene H. and Van Loan, Charles F.},
	year = {2013},
}

@article{fridli2020,
	title = {Discrete rational biorthogonal systems on the disc},
	volume = {50},
	issn = {01389491, 30580811},
	url = {https://ac.inf.elte.hu/Vol_050_2020/doi/127_50.html},
	doi = {10.71352/ac.50.127},
	urldate = {2026-03-26},
	journal = {Annales Universitatis Scientiarum Budapestinensis de Rolando Eötvös Nominatae. Sectio Computatorica},
	author = {Fridli, Sándor and Schipp, Ferenc},
	year = {2020},
	pages = {127--134},
}

@article{dzrbasjan1962,
	title = {Expansions in rational functions with fixed poles},
	volume = {143},
	issn = {0002-3264},
	number = {1},
	journal = {Doklady Akademii Nauk SSSR},
	author = {Džrbašjan, M. M.},
	year = {1962},
	mrnumber = {136746},
	pages = {17--20},
}

@inproceedings{fujimoto2020,
	address = {Chiang Mai, Thailand},
	title = {Time-frequency regularization for impulse response estimation},
	url = {https://ieeexplore.ieee.org/document/9240275/},
	doi = {10.23919/SICE48898.2020.9240275},
	urldate = {2024-07-31},
	booktitle = {2020 59th {Annual} {Conference} of the {Society} of {Instrument} and {Control} {Engineers} of {Japan} ({SICE})},
	author = {Fujimoto, Yusuke},
	year = {2020},
	pages = {1329--1332},
}

@inproceedings{farina2000,
	address = {Paris, France},
	title = {Simultaneous measurement of impulse response and distortion with swept-sine technique},
	booktitle = {108th {AES} {Convention}, 5093},
	author = {Farina, Angelo},
	year = {2000},
}

@book{lorentz1996,
	address = {Berlin, Heidelberg, Germany},
	series = {Grundlehren der mathematischen {Wissenschaften}},
	title = {Constructive {Approximation}: {Advanced} {Problems}},
	volume = {304},
	isbn = {978-3-642-64610-2},
	shorttitle = {Constructive approximation},
	language = {eng},
	publisher = {Springer},
	author = {Lorentz, George G. and Golitschek, Manfred von and Makovoz, Yuly},
	year = {1996},
}

@book{kailath1980,
	address = {Englewood Cliffs, NJ, USA},
	series = {Prentice-{Hall} {Information} and {System} {Science} {Series}},
	title = {Linear {Systems}},
	isbn = {978-0-13-536961-6},
	publisher = {Prentice-Hall},
	author = {Kailath, Thomas},
	year = {1980},
}

@article{peherstorfer2017,
	title = {Data-driven reduced model construction with time-domain {Loewner} models},
	volume = {39},
	issn = {1064-8275, 1095-7197},
	url = {https://epubs.siam.org/doi/10.1137/16M1094750},
	doi = {10.1137/16M1094750},
	language = {en},
	number = {5},
	urldate = {2023-12-08},
	journal = {SIAM Journal on Scientific Computing},
	author = {Peherstorfer, Benjamin and Gugercin, Serkan and Willcox, Karen},
	month = jan,
	year = {2017},
	pages = {A2152--A2178},
}

@incollection{guarnizo2018,
	address = {Cham, Switzerland},
	series = {Lecture {Notes} in {Computer} {Science}},
	title = {Impulse response estimation of linear time-invariant systems using convolved {Gaussian} processes and {Laguerre} functions},
	volume = {10657},
	isbn = {978-3-319-75193-1},
	doi = {10.1007/978-3-319-75193-1_34},
	language = {en},
	booktitle = {Progress in {Pattern} {Recognition}, {Image} {Analysis}, {Computer} {Vision}, and {Applications}},
	publisher = {Springer},
	author = {Guarnizo, Cristian and Álvarez, Mauricio A.},
	editor = {Mendoza, Marcelo and Velastín, Sergio},
	year = {2018},
	pages = {281--288},
}

@incollection{Soumelidis1991,
	address = {Dordrecht, The Netherlands},
	series = {Microprocessor-{Based} and {Intelligent} {Systems} {Engineering}},
	title = {Modelling of complex systems for control and fault diagnostics: {A} knowledge based approach},
	volume = {9},
	isbn = {978-94-010-5130-9 978-94-011-2560-4},
	shorttitle = {Modelling of {Complex} {Systems} for {Control} and {Fault} {Diagnostics}},
	url = {http://link.springer.com/10.1007/978-94-011-2560-4_14},
	doi = {10.1007/978-94-011-2560-4_14},
	urldate = {2026-04-29},
	booktitle = {Engineering {Systems} with {Intelligence}},
	publisher = {Springer},
	author = {Soumelidis, Alexandros and Edelmayer, András},
	editor = {Tzafestas, Spyros G.},
	year = {1991},
	pages = {125--132},
}

@book{ljung1999,
	address = {Upper Saddle River, NJ, USA},
	edition = {2nd},
	series = {Prentice {Hall} {Information} and {System} {Sciences} {Series}},
	title = {System {Identification}: {Theory} for the {User}},
	isbn = {978-0-13-656695-3},
	shorttitle = {System identification},
	publisher = {Prentice Hall},
	author = {Ljung, Lennart},
	year = {1999},
}

@article{kilmer1999,
	title = {Pivoted {Cauchy}-like preconditioners for regularized solution of ill-posed problems},
	volume = {21},
	issn = {1064-8275, 1095-7197},
	url = {http://epubs.siam.org/doi/10.1137/S1064827596308974},
	doi = {10.1137/S1064827596308974},
	language = {en},
	number = {1},
	urldate = {2026-01-16},
	journal = {SIAM Journal on Scientific Computing},
	author = {Kilmer, Misha E. and O'Leary, Dianne P.},
	month = jan,
	year = {1999},
	pages = {88--110},
}

@article{marconato2017,
	title = {Filter-based regularisation for impulse response modelling},
	volume = {11},
	issn = {1751-8652},
	url = {https://onlinelibrary.wiley.com/doi/abs/10.1049/iet-cta.2016.0908},
	doi = {10.1049/iet-cta.2016.0908},
	language = {en},
	number = {2},
	urldate = {2024-05-07},
	journal = {IET Control Theory \& Applications},
	author = {Marconato, Anna and Schoukens, Maarten and Schoukens, Johan},
	year = {2017},
	pages = {194--204},
}

@article{ljung1985,
	title = {On the estimation of transfer functions},
	volume = {21},
	copyright = {https://www.elsevier.com/tdm/userlicense/1.0/},
	issn = {00051098},
	url = {https://linkinghub.elsevier.com/retrieve/pii/0005109885900421},
	doi = {10.1016/0005-1098(85)90042-1},
	language = {en},
	number = {6},
	urldate = {2026-07-13},
	journal = {Automatica},
	author = {Ljung, Lennart},
	month = nov,
	year = {1985},
	pages = {677--696},
}

@article{aronszajn1950,
	title = {Theory of reproducing kernels},
	volume = {68},
	issn = {0002-9947, 1088-6850},
	url = {https://www.ams.org/tran/1950-068-03/S0002-9947-1950-0051437-7/},
	doi = {10.1090/S0002-9947-1950-0051437-7},
	language = {en},
	number = {3},
	urldate = {2026-03-26},
	journal = {Transactions of the American Mathematical Society},
	author = {Aronszajn, N.},
	year = {1950},
	pages = {337--404},
}
